 \newtheorem{thm}{Theorem}[section]
 \newtheorem{lem}[thm]{Lemma}
 \newtheorem{prop}[thm]{Proposition}
 \theoremstyle{definition}
 \newtheorem{defn}[thm]{Definition}
 \theoremstyle{remark}
 \numberwithin{equation}{section}
\begin{document}

%
%
%
%
%
%
%
%
%

\title[Deltoid tangents and random tilings]{Deltoid tangents with evenly distributed orientations and random tilings}

\author[J.G.Escudero]{Juan Garc\'{\i}a Escudero}
\address{%
Madrid, Spain
}
\email{jjgemplubg@gmail.com}

\subjclass{Primary 52C20; Secondary  52C30, 52C23, 60D05}

\keywords{substitution tilings, random tilings, configurations of lines, simplicial arrangements}

\date{October, 15 2019}
\dedicatory{}

\begin{abstract}
We study the construction of substitution tilings of the plane based on certain simplicial configurations of tangents of the deltoid with evenly distributed orientations.  The random tiling ensembles are obtained as a result of tile rearrangements in the substitution rules associated to edge flips. 
\end{abstract}

\maketitle
\section{Introduction}
\bigskip\par
Tiling theory developed from several directions in the last half century. Interested in problems of logic, Wang \cite{wan61} and Berger \cite{ber66} produced  in the 1960s non periodic tilings of the plane with a large number of basic geometric shapes. Later Penrose generated a non periodic planar tiling by using a set of two basic tiles \cite{pen74}. The discovery of quasicrystals by Shechtman in the early 1980s  \cite{she84} stimulated the study of tilings with non crystallographic symmetries, because certain tilings underlie the atomic structure of quasicrystals. Two main approaches to modelling the structure of quasicrystals are  the construction of cut and project sets, also called model sets, and the generation of substitution tilings. Much progress has been done in the field of deterministic substitution tilings during the last decades, but not in the construction of random tilings.
 We will consider the problem of the generation of tilings by means of both deterministic and random substitutions. 
\par
In 1996 Nischke and Danzer provided a construction of substitution tilings based on the system of certain $d$ tangents of the deltoid, when $d$ is odd, higher than 5 and non divisible by three \cite{nis96}. In recent papers we have shown that the sets of deltoid tangents with $d$ evenly distributed orientations are images under a certain map of periodic billiard trajectories inside the fundamental region of the affine Weyl group of the root system $A_{2}$ \cite{esc17}. In  some cases the arrangements of deltoid tangents are simplicial, namely, all their bounded cells are triangles. Several sets of simplicial arrangements corresponding to $d$ odd and non divisible by 3 were used in \cite{nis96} to produce tilings, although the same sets were studied before with a different purpose \cite{fur84}. In this paper we study a construction of tilings by using deltoid tangents having any $d>4$ evenly distributed orientations and we use them to investigate the generation of random tilings for $d$ even.

\par
The paper is organised as follows. In Section 2 we study several properties of the simplicial arrangements of deltoid tangents with $d$ evenly distributed orientations. They contain the prototiles, or elementary triangles, and scaled copies of them necessary for the derivation of the inflation or substitution rules. The existence of deterministic substitution rules producing planar tilings is analysed in Section 3. The relationship between algebraic numbers and mathematical quasicrystals was established first by Meyer in \cite{mey70, mey72} and later in \cite{bom87}. Pisot-Vijayaraghavan (PV) numbers are real algebraic integers with absolute value larger than one whose conjugates lie within the unit circle in the complex plane. Most of the early studied tilings, including those found by Penrose and Ammann \cite{gru87, pen74}, were based upon inflation rules and have PV numbers as inflation factors.  A significant feature of some of the tilings obtained with the construction given in Section 3 is that many of the inflation factors turn out to be PV numbers (\cite{esc08}, Table 1), which is of interest form the point of view of the diffraction patterns produced when modelling physical quasicrystals with geometric tilings.  Examples of tilings associated with the simplicial arrangements corresponding to $d=5, 6, 8, 9,10,12$ have been analysed in previous works (see \cite{esc17} and references within) where we have shown that they contain vertex configurations with local rotational symmetry, a property not appearing in the tilings obtained with the method employed in \cite{nis96}, as remarked in \cite{gah15}. Although they are more complicated, to illustrate the general results we include in this work some examples corresponding to $d=14$. Section 4 is devoted to the study of the existence of random tilings related with the introduction of tile rearrangements in a given tiling or in the substitution rules. They are associated to edge flips in some quadrilaterals formed by the union of two prototiles. It was treated for several special cases in \cite{esc08, esc11, esc17} and we show that these results can be generalised.
 
 \section{Simplicial arrangements of deltoid tangents with evenly distributed orientations}
\bigskip\par
We consider the lines $L_{d,\kappa,\nu}(x,y)=0$, $d=5,6...,\kappa \in \{-2, 0, 2\}, \nu \in I^ {(\kappa)}$, having parametric equations $x= x(t), y= y(t)$, where

 \begin{equation}    
x(t)+\sqrt{-1} y(t):=e^{-2\pi \frac{3\nu-\kappa}{3d}\sqrt{-1} }+t e^{ \pi \frac{3\nu-\kappa}{3d}\sqrt{-1} }, t \in {\Bbb{R}},
 \end{equation}

\noindent
with  $I^ {(\kappa)}=\{0,1,2..., d-1\}$, for $\kappa \in \{-2, 0\}$ and $I^ {(2)}=\{0, -1, -2..., -d+1\}$. The construction is based on the configurations of $d$ lines 
   \begin{equation}   
   \mathcal{A}^{(\kappa)}_{d}:=\{L_{d,\kappa,\nu}(x,y)=0 \}_{\nu \in I^ {(\kappa)}}
    \end{equation} 
   \par\noindent 
which are defined $\forall d$ if $\kappa=0$ and for $d=3q, q=2, 3,...$ when $\kappa=\mp 2$. The configuration corresponding to $d=14$ is represented in Fig.1, together with a superimposed heptagon which will be necessary in Section 4 for the construction of random tilings.
     \par\noindent  
The lines are tangents of the deltoid $\mathcal{D}$, which has parametric equation
 \begin{equation}    
z(\varphi)=2 e^{\varphi \sqrt{-1} }+ e^{ -2 \varphi \sqrt{-1} },0 \leq \varphi \leq 2\pi
 \end{equation}
 
 If $\varphi=\frac{3\nu-\kappa}{3d}\pi $ then the tangency point of $L_{\varphi}(x,y):=y+({\rm cos}2\varphi-x){\rm tan}\varphi+{\rm sin}2\varphi=0$ to $\mathcal{D}$ is $z(-2\varphi)$  and it intersects $\mathcal{D}$ also in $z(\varphi)$ and $z(\varphi+\pi)$. We define the segments
 \begin{equation}    
G(\varphi):=\overline{z(\varphi);z(\varphi+\pi)}
 \end{equation}  
     \par\noindent  
and we write $z_{\nu}, G^{(\kappa)}_{\nu,d}, s_{\nu}$ instead of $z( \frac{(3\nu-\kappa)\pi}{3d}), G(\frac{(3\nu-\kappa)\pi}{3d}), {\rm sin}(\frac{\nu\pi}{d})$ respectively. The length of the segment $G^{(\kappa)}_{\nu,d}$ (the index $\nu$ is taken mod $d$) is $|G^{(\kappa)}_{\nu,d}|=4$ and the tangency point of $G^{(\kappa)}_{\nu,d}$ at the deltoid $\mathcal{D}$ is $z_{2(d-\nu)+\kappa}$. 
  \par
We use the following well-known results about congruences:
 
 \begin{prop}
If $\gcd(a,d)=1$, then the linear congruence $ax \equiv b$ (mod $d$) has exactly one solution modulo $d$.
\end{prop}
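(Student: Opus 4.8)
The plan is to reduce everything to the existence of a multiplicative inverse of $a$ modulo $d$, which the coprimality hypothesis supplies through B\'ezout's identity. Concretely, from $\gcd(a,d)=1$ I would obtain integers $u,v$ with $au+dv=1$, so that $au\equiv 1\pmod d$; the integer $u$ then plays the role of an inverse of $a$ modulo $d$.

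With $u$ in hand, existence is immediate: setting $x_0:=ub$ gives $ax_0=(au)b\equiv b\pmod d$, and reducing $x_0$ modulo $d$ yields a solution lying in $\{0,1,\dots,d-1\}$. For uniqueness modulo $d$, I would take two solutions $x_1,x_2$, so that $ax_1\equiv ax_2\pmod d$, and multiply this congruence by $u$; using $au\equiv 1$ this collapses to $x_1\equiv x_2\pmod d$. Hence any two solutions coincide modulo $d$, and together with existence this gives exactly one solution class.

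I do not expect any genuine obstacle, since this is a classical fact. The only place where the hypothesis is truly used is the cancellation of $a$ in the uniqueness step (equivalently, the construction of the inverse $u$): without $\gcd(a,d)=1$ the map $x\mapsto ax$ on residues modulo $d$ fails to be injective, and the congruence may then have several solutions or none. Recording the B\'ezout inverse once therefore settles both halves of the statement simultaneously.
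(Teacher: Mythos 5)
Your proof is correct and follows essentially the same route as the paper, which also invokes B\'ezout's identity to obtain $s,t$ with $as+dt=1$ and then exhibits the solution $x=bs$ (your $x_0=ub$). You additionally spell out the uniqueness step by cancelling $a$ via the inverse, which the paper leaves implicit; this is a welcome completion rather than a departure.
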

By using Euclid{'}s algorithm it can be shown that if $\gcd(a,d)=m$ then there are two integers $s$ and $t$ such that $m=as +dt$ (B\'ezout{'}s identity). Therefore when $m=1$, the solution of $ax \equiv b$ (mod $d$) is $x=bs$.

 \begin{prop}
If $\gcd(a,d)=m$, then the congruence $ax \equiv b$ (mod $d$) has a solution iff  $m|b$. In that case there are exactly $m$ solutions modulo $d$ which can be written as $x_{1}, x_{1}+d_{1},..., x_{1}+(m-1)d_{1}$, where $d=md_{1}$ and $x_{1}$ is the solution of the congruence $a_{1}x \equiv b_{1}$ (mod $d_{1}$), $a=ma_{1}$, $b=mb_{1}$. 
\end{prop}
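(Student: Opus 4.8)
The plan is to reduce everything to the coprime case already settled in the preceding proposition. Writing $a=ma_{1}$ and $d=md_{1}$ as in the statement, the key observation is that $\gcd(a_{1},d_{1})=1$, since any common factor of $a_{1}$ and $d_{1}$ exceeding $1$ would, after multiplication by $m$, be a common factor of $a$ and $d$ larger than $m=\gcd(a,d)$, a contradiction. Thus the auxiliary congruence $a_{1}x\equiv b_{1}\pmod{d_{1}}$ falls under the scope of the previous proposition and has a unique solution $x_{1}$ modulo $d_{1}$, provided $b_{1}$ is an integer, i.e.\ provided $m\mid b$.

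First I would dispose of the solvability criterion. For necessity, if $ax_{0}\equiv b\pmod{d}$ then $ax_{0}-b$ is a multiple of $d$; since $m\mid a$ and $m\mid d$, it follows that $m\mid b$. For sufficiency, assume $m\mid b$ and set $b=mb_{1}$. From $a_{1}x_{1}\equiv b_{1}\pmod{d_{1}}$ one has $a_{1}x_{1}-b_{1}=d_{1}t$ for some integer $t$; multiplying by $m$ gives $ax_{1}-b=dt$, so that $x_{1}$ is already a solution of the original congruence modulo $d$.

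It remains to identify all solutions, and this counting is the part requiring the most care. I would first check that each $x_{1}+jd_{1}$ is a solution: indeed $a(x_{1}+jd_{1})=ax_{1}+a_{1}j(md_{1})=ax_{1}+a_{1}jd\equiv b\pmod{d}$ for every $j$. Next, these $m$ values are pairwise incongruent modulo $d$, for if $x_{1}+id_{1}\equiv x_{1}+jd_{1}\pmod{d}$ then $d\mid(i-j)d_{1}$, i.e.\ $m\mid(i-j)$, which forces $i=j$ once $0\le i,j\le m-1$. Finally, to see that the list is exhaustive, let $x_{0}$ be any solution; since $x_{0}$ and $x_{1}$ both satisfy the congruence, $d\mid a(x_{0}-x_{1})$, whence $d_{1}\mid a_{1}(x_{0}-x_{1})$, and coprimality of $a_{1}$ and $d_{1}$ then yields $d_{1}\mid(x_{0}-x_{1})$. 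Writing $x_{0}-x_{1}=kd_{1}$ and reducing $k$ modulo $m$ places $x_{0}$ among the $m$ listed residues. The main obstacle is thus not any single deep step but the bookkeeping of this last paragraph — keeping the two moduli $d$ and $d_{1}$ straight and invoking $\gcd(a_{1},d_{1})=1$ at exactly the right moment to cancel $a_{1}$.
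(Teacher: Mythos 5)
Your proof is correct and complete: the solvability criterion, the verification that each $x_{1}+jd_{1}$ solves the congruence, the pairwise incongruence modulo $d$, and the exhaustiveness argument (cancelling $a_{1}$ via $\gcd(a_{1},d_{1})=1$ after passing from modulus $d=md_{1}$ to modulus $d_{1}$) are all carried out without gaps. For comparison, the paper offers no proof of this proposition at all — it is quoted as a well-known fact, with only the preceding remark on B\'ezout's identity sketching the coprime case of Proposition 2.1 — so your write-up supplies precisely the standard reduction to that coprime case which the paper implicitly takes for granted.
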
 
In what follows let $d>4$. The case $d=5$ has been studied in \cite{esc11}, where we have obtained the minimal first cohomology for a quasicrystal tiling space (see \cite{sad08} for an introduction to the study of the cohomology of tiling spaces). Other cases have been treated in \cite{esc08} and references therein.
 \par
The tilings we want to study have triangles as basic building blocks. Their edges lie on segments $G(\varphi), G( \psi)$ and $G(\chi)$.
 \par
\begin{lem}
 \par
 If $ 0\leq \varphi < \chi  < \psi < \pi$ and $\omega:=\varphi + \chi + \psi $ then 
  \begin{equation}
|G(\varphi)\cap G( \psi)-G(\varphi)\cap G(\chi)|=4 |{\rm sin} \omega|\cdot {\rm sin} (\psi-\chi)
\end{equation} 
\end{lem}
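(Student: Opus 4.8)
The plan is to coordinatize the line carrying $G(\varphi)$ by the complex parameter of (2.1) and to express the two intersection points through their parameter values, so that the required length becomes the absolute difference of two scalars.

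First I would write the line supporting $G(\varphi)$ as $\{e^{-2\varphi\sqrt{-1}}+s\,e^{\varphi\sqrt{-1}}:s\in\mathbb{R}\}$, reading off the parameter $s$ from (2.1); since $|e^{\varphi\sqrt{-1}}|=1$, the Euclidean distance between any two of its points equals the absolute difference of their $s$-values. The point $G(\varphi)\cap G(\psi)$ is then found by equating this parametrization with the analogous one for $G(\psi)$, that is, by solving $s\,e^{\varphi\sqrt{-1}}-t\,e^{\psi\sqrt{-1}}=e^{-2\psi\sqrt{-1}}-e^{-2\varphi\sqrt{-1}}$ for the real scalar $s$. Treating complex numbers as planar vectors and eliminating $t$ by pairing both sides with $e^{\psi\sqrt{-1}}$ through the bilinear form $a\wedge b:=\mathrm{Im}(\overline{a}\,b)$, I obtain
\[
s_{\psi}=\frac{\sin 3\psi-\sin(2\varphi+\psi)}{\sin(\psi-\varphi)},
\]
and likewise $s_{\chi}$ with $\psi$ replaced by $\chi$. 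The denominators $\sin(\psi-\varphi)$ and $\sin(\chi-\varphi)$ are nonzero precisely because the hypothesis $0\le\varphi<\chi<\psi<\pi$ keeps the three angles distinct inside a single period, so the two intersection points genuinely exist.

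The decisive simplification is the sum-to-product identity $\sin 3\psi-\sin(2\varphi+\psi)=2\cos(2\psi+\varphi)\sin(\psi-\varphi)$, which collapses the fraction to $s_{\psi}=2\cos(2\psi+\varphi)$, and similarly $s_{\chi}=2\cos(2\chi+\varphi)$. The sought length is therefore $|s_{\psi}-s_{\chi}|=2\,|\cos(2\psi+\varphi)-\cos(2\chi+\varphi)|$, and a second product formula $\cos A-\cos B=-2\sin\tfrac{A+B}{2}\sin\tfrac{A-B}{2}$, applied with $A=2\psi+\varphi$ and $B=2\chi+\varphi$, produces exactly $4\,|\sin\omega|\,|\sin(\psi-\chi)|$, since $\tfrac{A+B}{2}=\varphi+\chi+\psi=\omega$ and $\tfrac{A-B}{2}=\psi-\chi$. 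Finally the ordering $\chi<\psi<\pi$ gives $0<\psi-\chi<\pi$, hence $\sin(\psi-\chi)>0$ and the inner absolute value may be dropped, which yields (2.5).

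I do not expect a serious obstacle: the only genuine content is recognizing that the unwieldy intersection parameter telescopes to the single cosine $2\cos(2\psi+\varphi)$, after which the rest is a two-line trigonometric reduction. The points requiring a little care are the well-definedness of the intersections (secured by the strict inequalities, which prevent parallel supporting lines) and the bookkeeping of signs when removing the absolute value on $\sin(\psi-\chi)$, for which the imposed ordering is exactly what is needed; by contrast the factor $|\sin\omega|$ must retain its absolute value, since $\omega=\varphi+\chi+\psi$ may well exceed $\pi$.
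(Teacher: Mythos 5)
Your proof is correct, and it takes a genuinely cleaner route than the paper's. The paper's own proof simply exhibits the Cartesian coordinates of $p(\varphi,\psi)=G(\varphi)\cap G(\psi)$ (Eq.\ (2.6)) and then asserts the distance identity (2.7) --- in effect a brute-force two-dimensional norm computation whose trigonometric simplification is left to the reader (or a computer algebra system). You instead reduce the problem to one dimension: both intersection points lie on the line $e^{-2\varphi\sqrt{-1}}+s\,e^{\varphi\sqrt{-1}}$, whose direction vector is unit, so the sought length is just $|s_{\psi}-s_{\chi}|$; the wedge elimination together with $\sin A-\sin B=2\cos\frac{A+B}{2}\sin\frac{A-B}{2}$ telescopes the parameter to the closed form $s_{\psi}=2\cos(2\psi+\varphi)$, and a second product formula makes the factorization $4\,|\sin\omega|\,\sin(\psi-\chi)$ transparent. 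I verified that your point $e^{-2\varphi\sqrt{-1}}+2\cos(2\psi+\varphi)\,e^{\varphi\sqrt{-1}}$ agrees coordinatewise with the paper's $p(\varphi,\psi)$, so the two proofs compute the same object; yours buys an intelligible derivation where the paper offers an opaque verification. Your bookkeeping is also right: $0<\psi-\chi<\pi$ justifies dropping the absolute value on $\sin(\psi-\chi)$ while $\omega$ may exceed $\pi$, and the strict inequalities $0\le\varphi<\chi<\psi<\pi$ put $\psi-\varphi$ and $\chi-\varphi$ in $(0,\pi)$, so the denominators are nonzero and the supporting lines are not parallel. One bonus your parametrization yields for free, which neither you nor the paper remarks on: since $G(\varphi)$ is exactly the parameter range $s\in[-2,2]$ (its endpoints $z(\varphi)$ and $z(\varphi+\pi)$ correspond to $s=\pm2$) and $s_{\psi}=2\cos(2\psi+\varphi)\in[-2,2]$, the intersection of the supporting lines automatically lies on the segments themselves, so reading $G(\varphi)\cap G(\psi)$ as a segment intersection is legitimate --- a point the paper's proof leaves implicit.
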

\begin{proof}
The point  $G(\varphi)\cap G(\psi)$ is
  \begin{equation}
p(\varphi, \psi):=(3 - 4({\rm sin}^2\varphi - {\rm sin}^2\varphi {\rm sin}^2\psi+ {\rm sin}^2\psi +
      {\rm cos}\varphi {\rm cos}\psi {\rm sin}\varphi {\rm sin}\psi), -4{\rm sin}\varphi
 {\rm sin}\psi{\rm sin}(\varphi+\psi))
 \end{equation}
 and  
    \begin{equation}
|p(\varphi, \psi)-p(\varphi, \chi)|=4|{\rm sin}(\varphi+\chi+\psi)|{\rm sin}(\psi-\chi)
 \end{equation}
  \par
A consequence of Eq. (2.7) is that if $\varphi+\chi+\psi \equiv 0$ (mod $\pi$) then $G(\varphi)\cap G(\chi)\cap G(\psi)\neq \emptyset$.
\end{proof} 
 \par

\par
\begin{defn}
A triangle $\Delta_{d}^{(\kappa)}( \lambda , \mu, \nu)$ formed by $G^{(\kappa)}_{\lambda,d}, G^{(\kappa)}_{\mu,d},  G^{(\kappa)}_{\nu,d}$ is said to be elementary if $ \lambda +\mu + \nu  \equiv \kappa \pm1$ (mod $d$), $\kappa \in \{-2, 0, 2\}$.
\end{defn}
 \par
\begin{lem}
 \par
 If $ 0\leq \lambda < \mu < \nu < d$, $ \lambda , \mu, \nu \in {\Bbb{N}}\cup \{0\}$ and $\sigma_{d} :=  \lambda +\mu + \nu $ then $G^{(\kappa)}_{\lambda,d}, G^{(\kappa)}_{\mu,d},  G^{(\kappa)}_{\nu,d}$ are either concurrent or form a triangle $\Delta_{d}^{(\kappa)}( \lambda , \mu, \nu)$ congruent to $\frac{s_{p}}{s_{1}}\cdot t_{d}$, where $s_{p}=|s_{\sigma_{d} -\kappa}|>0$ and $t_{d}$ is an elementary triangle.
\end{lem}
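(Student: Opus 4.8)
The plan is to express everything through the length formula of Lemma 2.3. Write the three tangent segments as $G(\varphi),G(\chi),G(\psi)$ with $\varphi=\frac{(3\lambda-\kappa)\pi}{3d}$, $\chi=\frac{(3\mu-\kappa)\pi}{3d}$, $\psi=\frac{(3\nu-\kappa)\pi}{3d}$. Because $0<\mu-\lambda$, $\nu-\mu$ and $\nu-\lambda$ are all strictly less than $d$, none of these differences is a multiple of $d$, so the three directions are pairwise distinct modulo $\pi$ and no two of the lines are parallel. Their angle sum is $\omega:=\varphi+\chi+\psi=\frac{(\sigma_d-\kappa)\pi}{d}$, and by the remark following Eq. (2.7) the three lines are concurrent exactly when $\omega\equiv 0\ (\mathrm{mod}\ \pi)$, that is, when $\sigma_d\equiv\kappa\ (\mathrm{mod}\ d)$.

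Assume now $\sigma_d\not\equiv\kappa\ (\mathrm{mod}\ d)$, so that $s_p=|s_{\sigma_d-\kappa}|>0$. I would compute the three side lengths by applying Eq. (2.7) to each pair of lines in turn. Since the angle sum entering that identity is always the same quantity $\omega$, each side carries the common factor $4|\sin\omega|=4s_p$, multiplied by the sine of the relevant difference of directions: from $\chi-\varphi=\frac{(\mu-\lambda)\pi}{d}$, $\psi-\chi=\frac{(\nu-\mu)\pi}{d}$ and $\psi-\varphi=\frac{(\nu-\lambda)\pi}{d}$ the three sides come out as $4s_p\,s_{\mu-\lambda}$, $4s_p\,s_{\nu-\mu}$ and $4s_p\,s_{\nu-\lambda}$. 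As each difference-sine is strictly positive, the three sides are positive and the configuration is a nondegenerate triangle $\Delta_d^{(\kappa)}(\lambda,\mu,\nu)$; moreover its interior angles depend only on the index differences $\mu-\lambda$, $\nu-\mu$, $\nu-\lambda$, and not on $\sigma_d$.

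I would then compare with an elementary triangle $t_d$ sharing the same index differences $\mu-\lambda$, $\nu-\mu$, $\nu-\lambda$, hence the same interior angles. By Definition 2.4 its index sum is $\equiv\kappa\pm1\ (\mathrm{mod}\ d)$, so its angle sum $\omega'$ satisfies $|\sin\omega'|=\sin\frac{\pi}{d}=s_1$; the identical computation gives its sides as $4s_1\,s_{\mu-\lambda}$, $4s_1\,s_{\nu-\mu}$, $4s_1\,s_{\nu-\lambda}$. Corresponding sides are therefore in the constant ratio $s_p/s_1$, which together with the equality of the interior angles yields the asserted congruence $\Delta_d^{(\kappa)}(\lambda,\mu,\nu)\cong\frac{s_p}{s_1}\cdot t_d$.

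The main obstacle is not this computation but the existence of an elementary triangle $t_d$ of exactly the prescribed shape. Realizing the angle data $\{\mu-\lambda,\nu-\mu,\nu-\lambda\}$ by an index triple $\lambda'<\mu'<\nu'$ with the same differences and with $\lambda'+\mu'+\nu'\equiv\kappa\pm1\ (\mathrm{mod}\ d)$ amounts to solving a linear congruence $3\lambda'\equiv(\kappa\pm1)-\bigl(2(\mu-\lambda)+(\nu-\mu)\bigr)\ (\mathrm{mod}\ d)$ for the base index $\lambda'$, an index shift corresponding to a rotation of the prototile. When $\gcd(3,d)=1$ this is solvable by Proposition 2.1; when $3\mid d$ one must use both the $\pm1$ freedom in Definition 2.4 and, if necessary, pass to one of the other admissible configurations $\kappa\in\{0,\pm2\}$, the solvability being decided by Proposition 2.2. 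Treating this congruence uniformly in $d$ is the delicate point of the proof.
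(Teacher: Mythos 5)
Your proposal is correct, and its computational core is exactly the paper's own argument: apply Lemma 2.3 with the common angle sum $\omega=\frac{(\sigma_{d}-\kappa)\pi}{d}$, obtain concurrency precisely when $\sigma_{d}\equiv\kappa$ (mod $d$), and otherwise read off the three sides as $4|s_{\sigma_{d}-\kappa}|$ times $s_{\mu-\lambda}$, $s_{\nu-\mu}$, $s_{\nu-\lambda}$ (your $s_{\nu-\lambda}$ is the paper's $s_{\lambda-\nu+d}$, since $\sin x=\sin(\pi-x)$), so that the sides are those of an elementary triangle scaled by $\iota_{d,p}=s_{p}/s_{1}$. The one place you diverge is your final existence step. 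The paper's proof stops at the length computation: it takes $t_{d}$ to be the triangle with sides $S_{\nu-\mu},S_{\lambda-\nu+d},S_{\mu-\lambda}$, i.e.\ the elementary \emph{shape}, and defers the question of which shapes are actually realized by index triples with sum $\equiv\kappa\pm1$ (mod $d$) to Section 3: Lemma 3.2 counts the realizations, and part 2) of the proof of Lemma 3.4 exhibits exactly the failure you anticipate --- for $d=3q$, $\kappa=0$ and $\sigma_{d}$ a multiple of $3$ there is no elementary triangle of the given shape in $\mathcal{G}_{\Delta,d}^{(0)}$, e.g.\ $\Delta_{9}^{(0)}(2,5,8)$ is $\iota_{9,3}$ times the equilateral triangle, which is elementary only in $\mathcal{G}_{\Delta,9}^{(\pm2)}$. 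Your sketch of that step does close: for $\gcd(3,d)=1$ Proposition 2.1 solves $3\lambda'\equiv(\kappa\pm1)-(\mu-\lambda)-(\nu-\lambda)$ (mod $d$); for $3\mid d$ the residues of $\kappa\pm1$ modulo $3$ over $\kappa\in\{0,2,-2\}$ are $\{1,2\}$, $\{0,1\}$, $\{0,2\}$, which cover all of $\{0,1,2\}$, so every shape is elementary in some admissible configuration (and reducing an index modulo $d$ only permutes the difference triple $\{\mu-\lambda,\nu-\mu,d-\nu+\lambda\}$ cyclically while fixing the index sum mod $d$, so the ordering constraint $0\le\lambda'<\mu'<\nu'<d$ costs nothing). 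In short, your computation matches the paper; the extra congruence analysis you flag as the delicate point is absent from the paper's proof of this lemma --- under the paper's implicit reading of $t_{d}$ as an abstract elementary shape it is not needed --- but it is sound, and the paper in effect carries it out later in Section 3.
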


\begin{proof}
  \par
If $\lambda +\mu + \nu \not\equiv \kappa$ (mod $d$) then $G^{(\kappa)}_{\lambda,d}, G^{(\kappa)}_{\mu,d},  G^{(\kappa)}_{\nu,d}$ are not concurrent and form a triangle $\Delta_{d}^{(\kappa)}(\lambda , \mu, \nu)$ with angles $(\nu-\mu)\frac{\pi}{d}, (\mu-\lambda)\frac{\pi}{d}, (\lambda-\nu+d)\frac{\pi}{d}$. By Eq.(2.5) we have $|G^{(\kappa)}_{\lambda,d}\cap G^{(\kappa)}_{\nu,d}-G^{(\kappa)}_{\lambda,d}\cap G^{(\kappa)}_{\mu,d}|=4 |s_{\sigma_{d} -\kappa}| s_{\nu-\mu}$ and the lengths of the sides lying on $G^{(\kappa)}_{\lambda,d}, G^{(\kappa)}_{\mu,d}$ and $G^{(\kappa)}_{\nu,d}$ are $ 4 |s_{\sigma_{d} -\kappa}| s_{\nu-\mu}, 4 |s_{\sigma_{d} -\kappa}| s_{\lambda-\nu+d}$ and $4 |s_{\sigma_{d} -\kappa}| s_{\mu-\lambda}$ respectively.
\par
A segment with length $4 s_{1}s_{\nu}$ is denoted by  $S_{\nu}$. If $\Delta_{d}^{(\kappa)}( \lambda , \mu, \nu)$ is elementary, then its sides are $S_{\nu-\mu}, S_{\lambda-\nu+d}, S_{\mu-\lambda}$. If $p:=\sigma_{d} -\kappa \not\equiv \pm1$ (mod $d$) then the lengths of the sides of $\Delta_{d}^{(\kappa)}( \lambda , \mu, \nu)$  are those of an elementary triangle $t_{d}$ multiplied by $\iota_{d,p}:= \frac{s_{p}}{s_{1}}, 1 < p \leq q$.
\end{proof} 

         \begin{table}
\begin{tabular}{lllllll}
\hline \\
$d $&$q$&$\kappa$&$|\mu |$&$v_{2}(G^{(\kappa)}_{\mu, d})$ &$v_{3}(G^{(\kappa)}_{\mu, d})$ \\ 
\hline
$2q+1$&$\ne 3l+1$&$0$&$0$&$0$&$\frac{d-1}{2}$ \\
\hline
$2q+1$&$\ne 3l+1$&$0$&$\ne 0$&$2$&$\frac{d-3}{2}$ \\
\hline
$2q+1$&$3l+1$&$0$&$\in\{0, 2l+1, 4l+2\}$&$0$&$\frac{d-1}{2}$ \\
\hline
$2q+1$&$3l+1$&$0$&$\notin \{0, 2l+1, 4l+2\}$&$2$&$\frac{d-3}{2}$ \\
\hline
$2q$&$\ne 3l$&$0$&$\in I_{odd}\cup \{0\}$&$1$&$\frac{d-2}{2}$ \\
\hline
$2q$&$\ne 3l$&$0$&$\in I_{even}$&$3$&$\frac{d-4}{2}$ \\
\hline
$2q$&$3l$&$0$&$\in I_{odd}\cup \{0, 2l, 4l \}$&$1$&$\frac{d-2}{2}$ \\
\hline
$2q$&$3l$&$0$&$\notin I_{odd}\cup \{0, 2l, 4l \}$&$3$&$\frac{d-4}{2}$ \\
\hline
$3q$&$2l$&$\mp2$&$\in I_{odd}$&$1$&$\frac{d-2}{2}$ \\
\hline
$3q$&$2l$&$\mp2$&$\in I_{even}\cup \{0\}$&$3$&$\frac{d-4}{2}$ \\
\hline
$3q$&$2l+1$&$\mp2$&$\forall$&$2$&$\frac{d-3}{2}$ \\

\hline \\
\end{tabular}
\caption{Vertex multiplicities in $G^{(\kappa)}_{\mu, d}$, with  $l=1,2,3,....$}
\end{table} 

 We define the sets of indexes $I_{odd}:=\{1,3,5,... 2\lfloor{\frac{d}{2}}\rfloor-1\}, I_{even}:=\{2,4,6,... 2\lceil{\frac{d}{2}}\rceil-2\}$. The number of vertices  with multiplicity (number of segments incident to a vertex) $j$ belonging to the segment $G$ is denoted by $v_{j}(G)$.

\begin{lem}
 \par
Each set $\{G^{(\kappa)}_{\mu, d}\}_{\mu \in I^ {(\kappa)} }$ makes up a triangular pattern $\mathcal{G}_{\Delta,d}^{(\kappa)}$ inside $\mathcal{D}$. Every interior vertex is shared by exactly six elementary triangles. The number of vertices with multiplicity 2 and 3 in each $G^{(\kappa)}_{\mu, d}$ is included in Table 1. 
\end{lem}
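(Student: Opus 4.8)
The plan is to base everything on an explicit formula for the position of a crossing along a segment. Parametrising $G^{(\kappa)}_{\mu,d}$ by the parameter $t$ of Eq. (2.1), so that its endpoints $z(\theta_\mu),z(\theta_\mu+\pi)$ on $\mathcal{D}$ correspond to $t=2$ and $t=-2$ (here $\theta_\mu=(3\mu-\kappa)\pi/(3d)$), I would solve $e^{-2i\theta_\mu}+t\,e^{i\theta_\mu}=e^{-2i\theta_\nu}+s\,e^{i\theta_\nu}$ and eliminate $s$. This gives the position of $G^{(\kappa)}_{\mu,d}\cap G^{(\kappa)}_{\nu,d}$ along $G^{(\kappa)}_{\mu,d}$ as $t_{\mu\nu}=2\cos\!\left(\frac{(\mu+2\nu-\kappa)\pi}{d}\right)$, which is consistent with Lemma 2.3 since $|t_{\mu\nu}-t_{\mu\rho}|$ reproduces the right-hand side of Eq. (2.5). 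Because $|t_{\mu\nu}|\le 2$ for every $\nu$, each pairwise intersection of the $d$ lines lies inside both segments, so the whole arrangement of the $L_{d,\kappa,\nu}$ is realised within $\mathcal{D}$.

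Next I would classify the vertices on a fixed $G^{(\kappa)}_{\mu,d}$. Two further segments $G^{(\kappa)}_{\nu,d},G^{(\kappa)}_{\nu',d}$ meet it at one point iff $\cos\frac{(\mu+2\nu-\kappa)\pi}{d}=\cos\frac{(\mu+2\nu'-\kappa)\pi}{d}$, that is iff $\nu+\nu'\equiv\kappa-\mu\pmod d$; combined with the concurrency criterion following Eq. (2.7) this shows no point can carry more than three segments. Hence every vertex has multiplicity $2$ or $3$, and a multiplicity-$3$ vertex on $G^{(\kappa)}_{\mu,d}$ corresponds to an unordered pair $\{\nu,\nu'\}\subseteq I^{(\kappa)}\setminus\{\mu\}$ with $\nu\ne\nu'$ and $\nu+\nu'\equiv\kappa-\mu$. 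A short check with the formula shows that the multiplicity-$2$ vertices occur only at $t=\pm2$ (an endpoint of $G^{(\kappa)}_{\mu,d}$) or at an endpoint of the other segment, hence always on $\partial\mathcal{D}$, so that every interior vertex is a triple point. Three concurrent lines cut a neighbourhood into six sectors, which accounts for the six cells meeting at each interior vertex; that these cells are elementary triangles is taken up below.

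For the entries of Table 1 I would count $V_3$, the number of admissible pairs $\{\nu,\nu'\}$, and then use $V_2=(d-1)-2V_3$, valid because the $d-1$ remaining segments are distributed over the vertices with each triple point absorbing two of them. Writing $c=\kappa-\mu$, the number of distinct unordered pairs with $\nu+\nu'\equiv c\pmod d$ is $(d-1)/2$ for $d$ odd, and $d/2$ or $(d-2)/2$ for $d$ even according as $c$ is odd or even; from this one subtracts the single pair containing $\mu$ unless $3\mu\equiv\kappa\pmod d$, and this last condition yields precisely the exceptional index sets $\{0,2l+1,4l+2\}$ and $\{0,2l,4l\}$. Splitting into the cases $d$ odd or even, $3\mid d$ or not (equivalently the stated conditions on $q$), and $\kappa=0$ or $\kappa=\mp2$ (where $3\mu\equiv\mp2$ is never solvable, so the subtraction always occurs), reproduces every row of Table 1.

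The remaining and hardest point is that the cells really are triangles and that the six at an interior vertex are elementary. At a triple point the three lines $G^{(\kappa)}_{a,d},G^{(\kappa)}_{b,d},G^{(\kappa)}_{c,d}$ satisfy $a+b+c\equiv\kappa\pmod d$; the small cell in the wedge between $G^{(\kappa)}_{a,d}$ and $G^{(\kappa)}_{b,d}$ is closed off by the nearest crossing segment $G^{(\kappa)}_{e,d}$, and by Definition 2.4 it is elementary exactly when $a+b+e\equiv\kappa\pm1$, i.e. $e\equiv c\pm1\pmod d$. I would verify, using the position formula to order the crossings along $G^{(\kappa)}_{a,d}$ and $G^{(\kappa)}_{b,d}$ near the vertex, that the closing segment is indeed $G^{(\kappa)}_{c\pm1,d}$; this is where the even spacing of the orientations enters, since the directions adjacent to $\theta_c$ are $\theta_{c\pm1}$. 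Establishing this local picture at every vertex, together with a check of the cells abutting $\partial\mathcal{D}$, shows that the elementary triangles tile the interior of $\mathcal{D}$ and gives the triangular pattern $\mathcal{G}^{(\kappa)}_{\Delta,d}$; alternatively one may transport the alcove decomposition of the affine Weyl group of $A_2$ through the correspondence of \cite{esc17}. I expect this step to be the main obstacle, the multiplicity count of Table 1 being comparatively routine once the position formula is in hand.
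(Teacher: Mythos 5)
Your counting argument is correct and goes by a genuinely different route than the paper's. The paper writes down no position formula: it identifies the multiplicity-2 vertices geometrically as deltoid tangency points, characterised (for $\kappa=0$) by the congruence $2\nu \equiv -\mu \pmod d$, and then runs a case-by-case analysis over the rows of Table 1 using Propositions 2.1 and 2.2. Your formula $t_{\mu\nu}=2\cos\bigl(\tfrac{(\mu+2\nu-\kappa)\pi}{d}\bigr)$ is right (take the imaginary part of the line equation after rotating by the direction of line $\nu$ and apply a sum-to-product identity), and it packages everything the paper extracts from the congruences: crossings on $G^{(\kappa)}_{\mu,d}$ coincide iff $\nu+\nu'\equiv\kappa-\mu \pmod d$, multiplicity is at most $3$, and the two degenerate solutions are $\nu'=\nu$ (i.e. $2\nu\equiv\kappa-\mu$, crossing at $t=\pm2$) and $\nu'=\mu$ (i.e. $\nu\equiv\kappa-2\mu$, crossing at the tangency point of $G^{(\kappa)}_{\mu,d}$, which is indeed the endpoint $z(\theta_{\kappa-2\mu})$ of the other segment, so your description is exactly right). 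Your uniform count --- unordered pairs summing to $\kappa-\mu$, minus the pair containing $\mu$ unless $3\mu\equiv\kappa \pmod d$, with $v_2=(d-1)-2v_3$ --- does reproduce every row of Table 1: $3\mu\equiv 0 \pmod{6l+3}$ gives $\{0,2l+1,4l+2\}$, $3\mu\equiv 0\pmod{6l}$ gives $\{0,2l,4l\}$, and $3\mu\equiv\mp2$ is insoluble when $3\mid d$, so the subtraction always occurs for $\kappa=\mp2$. What your route buys is one computation in place of eleven cases; what the paper's route buys is the identification of multiplicity-2 vertices with tangency points, which it reuses as starting points for the walks along each segment in Lemma 2.7. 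Your remark that $|t_{\mu\nu}|\le 2$ forces every line-line intersection onto both chords, so the whole arrangement sits inside $\mathcal{D}$, is a clean point the paper leaves implicit.

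The step you defer --- that the six cells at a triple point are elementary and tile the interior of $\mathcal{D}$ --- is a real gap in your write-up, but you should know the paper is no more detailed there: its proof ends with the one-line assertion that interior vertices have multiplicity 3, ``hence they are shared by exactly six triangles,'' and the substance you are asking for is in effect supplied only later, by Lemma 2.7, which shows each $G^{(\kappa)}_{\mu,d}$ is cut into consecutive pieces $S_n$, the elementary edge lengths. Moreover your own formula closes the gap with no new ideas: along $G^{(\kappa)}_{\mu,d}$ the crossings are ordered by $2\cos\bigl(\tfrac{(\mu+2\nu-\kappa)\pi}{d}\bigr)$, whose argument moves in steps of $2\pi/d$, so the crossing adjacent to the triple point of $\{\mu,\nu,\nu'\}$ is made by the pair $\{\nu+1,\nu'-1\}$; hence the segment closing the wedge between $G^{(\kappa)}_{\mu,d}$ and $G^{(\kappa)}_{\nu,d}$ has index $\nu'\mp1$, the cell has index sum $\equiv\kappa\mp1\pmod d$ and is elementary by Definition 2.4, and the spacings $|t_{\mu,\nu}-t_{\mu,\nu\pm1}|=4s_1 s_n$ recover the $S_n$ subdivision of Lemma 2.7. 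So the obstacle you anticipate is smaller than you fear; as submitted, though, your proposal fully proves the Table 1 clause and the triple-point clause, while the ``triangular pattern of elementary triangles'' clause remains a (sound) plan rather than a proof.
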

\begin{proof} 
 For $\mu \ne 0$ the segment $G^{(0)}_{-2\mu, d}$  intersects $G^{(0)}_{\mu, d}$ at the tangency point of $G^{(0)}_{\mu, d}$ to $\mathcal{D}$, denoted by $x^{\mathcal{D}}_{\mu}$,
and $G^{(0)}_{\nu, d}$ intersects $G^{(0)}_{\mu, d}$ at the tangency point $x^{\mathcal{D}}_{\nu}$ of $G^{(0)}_{\nu, d}$ to $\mathcal{D}$, where $\nu$ is a solution to the equation $2\nu \equiv -\mu \thickspace ({\rm mod} \thickspace d)$.
 \bigskip\par
1) $d=2q+1, \kappa=0, q \ne 3l+1$. The solution is $\nu=q\mu$. If $\mu \ne 0$ there is no $\nu \ne q\mu, -2\mu$ (mod $d$) such that $\lambda+\mu+\nu \equiv 0$ (mod $d$) with $\lambda,\mu,\nu$ pairwise distinct, therefore $v_{2}(G^{(0)}_{\mu, d})=2, v_{3}(G^{(0)}_{\mu, d})=\frac{d-3}{2} $. When $\mu = 0$ we have $x^{\mathcal{D}}_{\mu}=x^{\mathcal{D}}_{q\mu}$, hence all the vertices in $G^{(0)}_{0, d}$ have multiplicity 3 and $v_{2}(G^{(0)}_{0, d})=0, v_{3}(G^{(0)}_{0, d})=\frac{d-1}{2}$. 
 \bigskip\par
2) $d=2q+1, \kappa=0, q = 3l+1, l=1, 2, 3...$. These cases are similar to the cases discussed in 1) except that $x^{\mathcal{D}}_{\mu}=x^{\mathcal{D}}_{q\mu}$ when $\mu = 0, 2l+1, 4l+2$, and we have $v_{2}(G^{(0)}_{\mu, d})=0, v_{3}(G^{(0)}_{\mu, d})=\frac{d-1}{2}$ if $\mu = 0, 2l+1, 4l+2$ and $v_{2}(G^{(0)}_{\mu, d})=2, v_{3}(G^{(0)}_{\mu, d})=\frac{d-3}{2}$ otherwise.
 \bigskip\par
3) $d=2q, \kappa=0, q = 2l, l=2, 3, 4...$. If $q \ne 6p, p=1, 2, 3...$ then for $\mu \in I_{even}$ there are two different solutions: $\nu=d-\frac{\mu}{2}, \frac{d-\mu}{2}$. Hence $v_{2}(G^{(0)}_{\mu, d})=3, v_{3}(G^{(0)}_{\mu, d})=\frac{d-4}{2}$. If $\mu=0$ there is only one solution which is $\nu=\frac{d}{2}$. For $\mu=1, 3, 5...d-1$ there is no solution distinct from $\mu$ and only $G^{(0)}_{-2\mu, d}$ intersects $G^{(0)}_{\mu, d}$ at the tangency point of $G^{(0)}_{\mu, d}$ to $\mathcal{D}$. Therefore $v_{2}(G^{(0)}_{\mu, d})=1, v_{3}(G^{(0)}_{\mu, d})=\frac{d-2}{2}$ for $ \mu \in I_{odd}\cup \{0\}$.
 \bigskip\par
The remaining cases are treated along the same lines and are given in Table.1. In all cases all vertices of multiplicity 2 are on $\mathcal{D}$ and all interior vertices have multiplicity 3, hence they are shared by exactly six triangles. 
\end{proof}
\par
  \begin{figure}[h]
 \includegraphics[width=20pc]{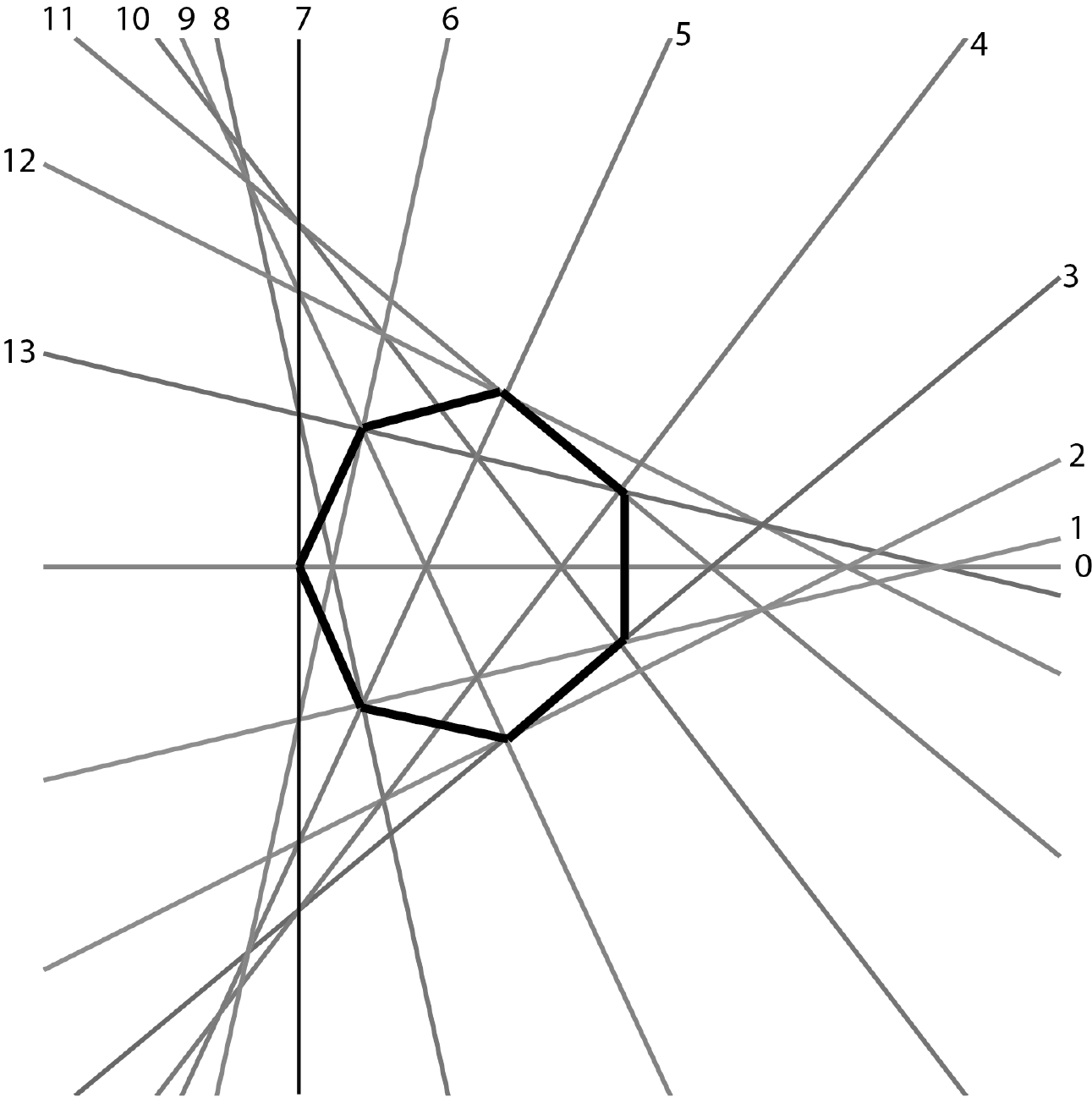}
\caption{\label{label} The configuration of lines $\mathcal{A}^{(0)}_{14}$. The segments $G^{(0)}_{\mu, 14}$, for $\mu=0, 1, ..., 13$, lie on the lines $L_{14, 0, \mu}(x,y)=0$ which are labelled by $\mu$.  The heptagon associated with the random tilings is included.  }
\end{figure}
  \begin{figure}[h]
 \includegraphics[width=20pc]{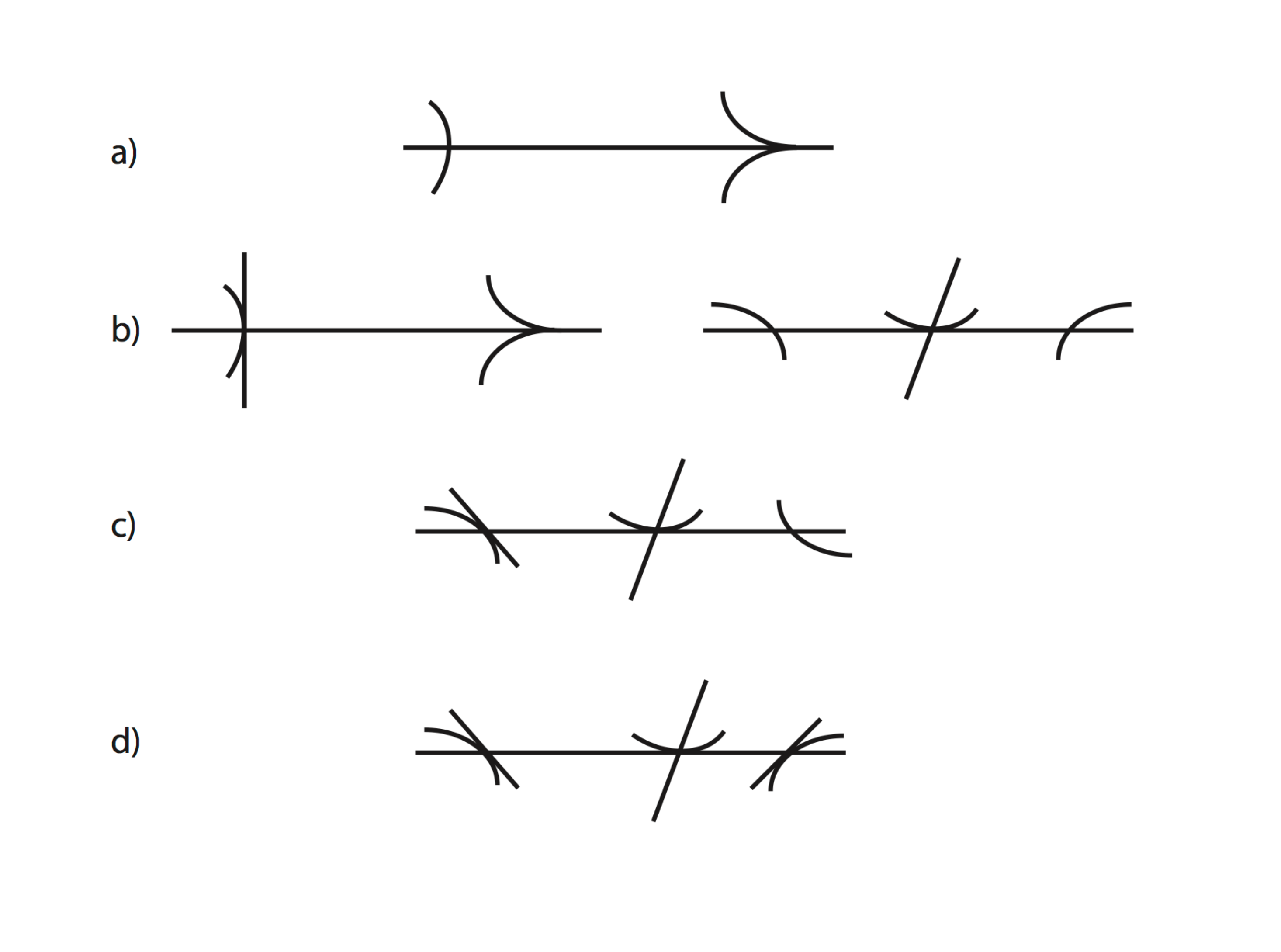}
\caption{\label{label} Schematic representation of the cases with $v_{2}(G^{(\kappa)}_{\mu, d})=0,1,2,3$. The curved fragments represent parts of the deltoid in the neighbourhood of points where the line $L_{d,\kappa,\mu}$ either intersects transversally, passes through one of the deltoid cusps, or is tangent to $\mathcal{D}$.  }
\end{figure}

\par

\begin{lem}
Every segment $G^{(\kappa)}_{\mu,d}$ is cut by the other segments into pieces forming the following sequences: 
\par
1)  $d=2q+1, q \ne 3l+1,  \kappa=0$: $(S_{n})_{n \in I_{odd}}$,
$\forall  \mu \ne 0$, and  
$(S_{n})_{n \in I_{odd}\setminus\{1\}}$ 
for $\mu = 0$.
\par
2)  $d=2q+1, q=3l+1$:
\par
2.1) $\kappa=0$: $(S_{n})_{n \in I_{odd}}$, $ \mu \notin \{ 0,2l+1, 4l+2\}$ and 
 $(S_{n})_{n \in I_{odd}\setminus\{1\}}$  if $ \mu \in \{ 0,2l+1, 4l+2\}$. 
\par
2.2) $\kappa=\mp 2$:  $(S_{n})_{n \in I_{odd}}, \forall \mu$.
\par
3) $d=2q$, $q=3l$:
\par
3.1) $\kappa=0$: $(S_{n})_{n \in I_{odd}}$ if $ \mu $ is even with $ \mu \notin \{ 0,2l, 4l\}$ and  $(S_{n})_{n \in I_{odd}\setminus\{1\}}$  if $ \mu \in \{ 0,2l, 4l\}$. 
\par
3.2) $\kappa=\mp 2$: $(S_{n})_{n \in I_{odd}}$ if  $| \mu |$ is even or 0. 
\par
3.3) $\kappa=0, \mp 2$: $(S_{n})_{n \in I_{even}}$  if $| \mu |$ is odd. 
\par
4) $d=2q$,  $q \ne 3l$: $(S_{n})_{n \in I_{odd}}$ for $\mu$  even, $(S_{n})_{n \in I_{even}}$ for $\mu$  odd, and  $(S_{n})_{n \in I_{odd}\setminus\{1\}}$ for $\mu=0$.
\end{lem}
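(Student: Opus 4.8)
The plan is to compute the intersection abscissae on each chord explicitly and then simply sort them. Parametrise $G^{(\kappa)}_{\mu,d}$ by arclength, writing $\varphi_{\mu}=\frac{(3\mu-\kappa)\pi}{3d}$ and $w_{\mu}(t)=e^{-2\varphi_{\mu}\sqrt{-1}}+t\,e^{\varphi_{\mu}\sqrt{-1}}$; since the direction vector has unit modulus, $t$ is arclength, the deltoid endpoints $z(\varphi_{\mu}),z(\varphi_{\mu}+\pi)$ occur at $t=2$ and $t=-2$, and $|G^{(\kappa)}_{\mu,d}|=4$. Solving $w_{\mu}(t)=w_{\nu}(u)$ and eliminating $u$ by combining the equation with its complex conjugate, I obtain that $G^{(\kappa)}_{\mu,d}$ meets $G^{(\kappa)}_{\nu,d}$ at
\[
t_{\nu}=2\cos(\varphi_{\mu}+2\varphi_{\nu})=2\cos\frac{(\mu+2\nu-\kappa)\pi}{d}.
\]
As a check, $|t_{\nu}-t_{\lambda}|$ reproduces the distance formula (2.5), and $t_{\nu_1}=t_{\nu_2}$ holds exactly when $\mu+\nu_1+\nu_2\equiv\kappa\pmod d$, which recovers the triple points of the triangular-pattern lemma.

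Next I would sort these abscissae. As $\nu$ runs over $I^{(\kappa)}$ the residue $2\nu$ runs over all even classes mod $2d$, so the $t_{\nu}$ are the values $2\cos\frac{k\pi}{d}$ with $k\equiv\mu\pmod2$ (because $\kappa$ is even), i.e. the cosines of angles equally spaced by $2\pi/d$. Folding by $t=2\cos\theta$ (which identifies $k$ with $2d-k$, the coincidences just noted) and listing the distinct cut points in decreasing $t$, each successive gap is
\[
2\cos\frac{k\pi}{d}-2\cos\frac{(k+2)\pi}{d}=4\sin\frac{(k+1)\pi}{d}\sin\frac{\pi}{d}=4s_{1}s_{k+1}=|S_{k+1}|.
\]
Thus the pieces are exactly the $S_{n}$ with $n$ odd when $\mu$ is even and $n$ even when $\mu$ is odd. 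For $d$ odd these two parities yield the same multiset of lengths, since $|S_{n}|=|S_{d-n}|$ and $n\mapsto d-n$ interchanges the parities; this is why cases 1) and 2) record a single sequence valid for every admissible $\mu$.

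To fix the exact index set I would determine which endpoints $t=\pm2$ are themselves cut points. An endpoint fails to be one precisely when it is a cusp of $\mathcal{D}$, equivalently when the tangency point $z(-2\varphi_{\mu})$ of $G^{(\kappa)}_{\mu,d}$ — which lies at $t=2\cos3\varphi_{\mu}$, and equals $\pm2$ in that degenerate case — is a cusp. By the congruence computations already performed for the triangular-pattern lemma this occurs exactly for the exceptional indices listed (namely $\mu=0$ when $3\nmid d$, and the multiples of $d/3$ when $3\mid d$), and there the extremal $S_{1}$ is missing, giving $(S_{n})_{n\in I_{odd}\setminus\{1\}}$; otherwise the whole range $I_{odd}$ (respectively $I_{even}$) appears. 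The number of pieces produced this way is $(v_{2}+v_{3})-1$ with $v_{2},v_{3}$ read off Table 1, which provides an independent check, and running through the two parities of $d$, the two parities of $\mu$ and the three values of $\kappa$ reproduces 1)--4).

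The routine part is the trigonometry and the folding; the genuinely delicate point is the boundary bookkeeping — deciding, in each congruence class, which extremal sub-segment is an edge of the pattern inside $\mathcal{D}$ and which protrudes outside (equivalently, whether a chord endpoint is a true vertex), because that is exactly what separates a full $I_{odd}$ or $I_{even}$ from one missing $S_{1}$. I expect this to be where the dependence on $\kappa$ and on $3\mid d$ really enters, and I would dispatch it by importing the vertex-coincidence facts established in the proof of the triangular-pattern lemma rather than re-deriving them from scratch.
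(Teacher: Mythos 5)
Your proposal is correct, and although its computational core coincides with the paper's, it is organised along a genuinely different route. The paper never writes a coordinate for the crossings: it walks along $G^{(0)}_{\mu,d}$ starting from the multiplicity-$2$ vertex $p_{\mu,q\mu}$, enumerates the vertices through the pairing $p_{\mu,q\mu+j}=p_{\mu,q\mu-j}$, and applies the distance formula (2.5) to consecutive pairs, $|p_{\mu,\mu q+l}-p_{\mu,\mu q+l-1}|=4s_{2l-1}s_{1}$, repeating this walk case by case ($\mu=0$; $q=3l+1$; $d$ even; $\kappa=\mp2$) with the vertex census of Lemma 2.6 and Table 1 as input. You instead derive the closed-form abscissa $t_{\nu}=2\cos\frac{(\mu+2\nu-\kappa)\pi}{d}$, which is indeed correct: eliminating $u$ against the conjugate equation gives $t\sin(\varphi_{\mu}-\varphi_{\nu})=\sin(2\varphi_{\mu}+\varphi_{\nu})-\sin(3\varphi_{\nu})=2\cos(\varphi_{\mu}+2\varphi_{\nu})\sin(\varphi_{\mu}-\varphi_{\nu})$. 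With it, the ordering of the cut points is immediate, all parities of $d$ and $\mu$ and all three values of $\kappa$ are handled by one cosine-difference identity, and two facts the paper obtains indirectly come for free: the concurrence criterion $\mu+\nu_{1}+\nu_{2}\equiv\kappa \pmod d$, and the fact that every pairwise intersection lies on the closed chord (since $|t_{\nu}|\le2$ automatically). Note that the paper's starting vertex $p_{\mu,q\mu}$ is exactly your endpoint value, since $\mu+2q\mu\equiv d\mu\pmod{2d}$ gives $k\in\{0,d\}$; so the paper's pairing is your folding in disguise. What your version buys is uniformity and an explicit reason for the gap lengths; what the paper's buys is a direct tie-in with the multiplicity counts it needs anyway.

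Three points need tightening. First, ``an endpoint fails to be a cut point precisely when it is a cusp'' is false as stated: an endpoint whose folded index ($k=0$ or $k=d$) has the wrong parity relative to $\mu$ is never a cut point either (for $d$ odd exactly one endpoint passes the parity test; for $d$ even and $\mu$ odd neither does), and the extremal sliver there has length $4\sin^{2}\frac{\pi}{2d}$, which is no $S_{n}$ and must be discarded from the sequence just as the cusp-adjacent $S_{1}$-piece is; the cusp criterion $3\mu\equiv\kappa\pmod d$ decides only the endpoints of matching parity. Second, your parenthetical ``multiples of $d/3$ when $3\mid d$'' holds only for $\kappa=0$: for $\kappa=\mp2$ the congruence $3\mu\equiv\kappa\pmod d$ has no solution (reduce mod $3$), which is precisely why cases 2.2 and 3.2 list no exceptional $\mu$ --- your own cusp criterion yields this, but the parenthetical as written contradicts it. Third, the claim that each successive gap is $S_{k+1}$ tacitly requires every interior $k$ of the correct parity to be attained by some chord other than $G^{(\kappa)}_{\mu,d}$ itself; the only delicate value is the tangency index $k\equiv\pm(3\mu-\kappa)$, which is covered by $\nu\equiv\kappa-2\mu$ unless $3\mu\equiv\kappa\pmod d$, in which case the tangency sits at an endpoint. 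This is the Lemma 2.6 fact you propose to import, so your plan suffices, but the sentence should be made explicit rather than left inside ``vertex-coincidence facts''.
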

\begin{proof}
We write $p_{\mu,\nu}$ instead of $p(\frac{\mu\pi}{d},\frac{\nu\pi}{d})$. For $d=2q+1, \kappa=0, q \ne 3l+1$ we take $p_{\mu, q\mu}$ as starting point in $G^{(0)}_{\mu,d}$ for $\mu \ne 0$, which corresponds to the left most point in Fig.2.c. This is a multiplicity 2 point, and we have a number of consecutive multiplicity 3 points $p_{\mu, q\mu+1}=p_{\mu, q\mu-1}, p_{\mu, q\mu+2}=p_{\mu, q\mu-2},... p_{\mu, q\mu+n-1}=p_{\mu, q\mu-n+1}$ until we reach a value of $n$ such that either $\mu q-n=\mu$ or  $\mu q+n=\mu$, namely, either $\mu q+n=-2\mu$ or $\mu q-n=-2\mu$ which corresponds to a point of multiplicity 2 (the second point from the left in Fig.2.c). The remaining vertex points in $G^{(0)}_{\mu,d}$ are $p_{\mu, q\mu+n+1}=p_{\mu, q\mu-n-1}, p_{\mu, q\mu+n+2}=p_{\mu, q\mu-n-2},...p_{\mu, q\mu+q}=p_{\mu, q\mu-q}$ and have multiplicity 3. Now $|p_{\mu,\mu q+l}-p_{\mu,\mu q+l-1}|=4s_{2l-1}s_{1}$ and we get the sequence  $(S_{n})_{n \in I_{odd}}$.
 \par
On $G^{(0)}_{0,d}$ all the points have multiplicity 3 and the sequence of points, starting from the nearest point to the intersection of $G^{(0)}_{0,d}$ with the deltoid cusp (Fig.2a, right) is $p_{0, d-1}=p_{0, 1} , p_{0, d-2}=p_{0, 2} ,...p_{0, d-q}=p_{0, q}$. Having in mind 
$|p_{0,d-n}-p_{0,d-n-1}|=4s_{2n+1}s_{1}, n=1,2,...q-1$, we obtain the sequence $(S_{n})_{n \in I_{odd}\setminus\{1\}}$.
\par
For $d=2q+1, q=3l+1$ the segments with $ \mu \in \{ 0,2l+1, 4l+2\}$ are analogous to the segment with $ \mu=0$ when $q \ne 3l+1$ and those with $ \mu \notin \{ 0,2l+1, 4l+2\}$ are like $\mu \ne 0$. The remaining cases can be analysed in a similar way having in mind the results of Lemma 2.6 summarised in Table 1.
 \end{proof}
\par
 \section{Substitution tilings of the plane}
\bigskip\par
\begin{defn}
A tiling of an $n$-dimensional space $S$ is a decomposition of $S$ into a countable number of $n$-cells $T$, called prototiles, such that $S$ is the union of tiles and distinct tiles have non-intersecting interiors. A collection of tiles, any two of which intersect only in their boundaries, is called a patch.
\end{defn}
 \par
 A substitution rule determines how to replace each prototile by a patch of tiles. Iteration of the substitution rules gives in the limit a substitution tiling of $S$. In this work $S$ is the real plane and the prototiles are triangles (see \cite{gah15, mal15} for recent related work done with computer assistance where the prototiles are triangles and rhombuses respectively).
\par
 In \cite{nis96} the prototile edges are marked with arrows. Prototiles having the same shape but different arrow decorations have different substitution rules. We use a procedure to distinguish prototiles having the same shape that we call interior decoration, or ID for short. If two arrangements $\mathcal{X}_{d}, \mathcal{X}_{2d}$ satisfy $\mathcal{X}_{d}\subset \mathcal{X}_{2d}$ then an elementary triangle $t_{2d}$ in $\mathcal{X}_{2d}$ is inscribed in an elementary triangle $t_{d}$  of $\mathcal{X}_{d}$ which is a scaled copy of $t_{2d}$. We decorate the interior of $t_{d}$ with the inscribed copy $t_{2d}$ and the decorated prototile is $T=t_{d}\cup t_{2d}$ (see \cite{esc17}, Fig.2, p.105). The set of prototiles $T$ is denoted by $\mathcal{F}_{d}$. From the definitions we have $G^{(0)}_{\lambda,d}=G^{(0)}_{2\lambda,2d}, G^{(-2)}_{\lambda,d}=G^{(2)}_{2\lambda+2,2d}, G^{(2)}_{\lambda,d}=G^{(-2)}_{2\lambda-2,2d}$, therefore $\mathcal{A}^{(\kappa)}_{d} \subset \mathcal{A}^{(-\kappa)}_{2d}$.  
   \par
The elementary triangle  $\Delta^{(\kappa)}_{d}( \lambda , \mu, \nu)$ in  $\mathcal{G}_{\Delta,d}^{(\kappa)}$ is congruent to $\Delta^{(\kappa)}_{2d}( 2 \lambda ,2 \mu,2 \nu)$ in $\mathcal{G}_{\Delta,2d}^{(\kappa)}$.  If  $\sigma_{2d} -\kappa= \pm2$ then the elementary triangle of $\mathcal{G}_{\Delta,2d}^{(\kappa)}$ having its vertices on the edges in $\Delta^{(\kappa)}_{2d}( 2 \lambda ,2 \mu,2 \nu)$ is $\Delta^{(\kappa)}_{2d}( 2 \lambda \mp 1 ,2 \mu \mp 1,2 \nu \mp 1)$, which on the other hand is a congruent copy of $\Delta^{(\kappa)}_{2d}( 2 \lambda ,2 \mu,2 \nu)$  scaled by $\iota^{-1}_{2d,2}$. Now each edge of $\Delta^{(\kappa)}_{d}( \lambda , \mu, \nu)$ is subdivided into two sections defined by the vertex of the inscribed triangle $\Delta^{(\kappa)}_{2d}( 2 \lambda \mp 1 ,2 \mu \mp 1,2 \nu \mp 1)$ lying on the edge. The sequences of sections on the edges of $\Delta^{(\kappa)}_{2d}( 2 \lambda ,2 \mu,2 \nu)$ lying on the segments and determining the ID are the following: $G^{(\kappa)}_{2\lambda, 2d}: (S_{2\nu-2\mu  \mp 1}, S_{2\nu-2\mu \pm1})$; $G^{(\kappa)}_{2\mu, 2d}: (S_{2\lambda-2\nu  \mp 1+2d}, S_{2\lambda-2\nu \pm 1+2d})$; $G^{(\kappa)}_{2\nu, 2d}:(S_{2\mu-2\lambda  \mp1}, S_{2\mu-2\lambda \pm1})$ for $ 0\leq 2\lambda < 2\mu < 2\nu < 2d; 2 \lambda +2 \mu+2 \nu-\kappa=\pm 2$. 

\begin{lem}
Given integers $\alpha, \beta, \gamma$ with $\alpha+ \beta+ \gamma \equiv 0$ (mod $d$), $\alpha \le \beta \le \gamma $, $\sigma_{d} -\kappa \equiv \pm1$ (mod $d$) and $l=1,2,3...$, we have the following properties: 
\par
1) $\mathcal{G}_{\Delta,d}^{(0)}, d=2q+1, q \ne 3l+1,$ or $d=2q, q \ne 3l$. The number of triangles in $\mathcal{G}_{\Delta,d}^{(0)}$  with angles $\frac{\alpha\pi}{d}, \frac{\beta\pi}{d}, \frac{\gamma\pi}{d}$ and the given $\sigma_{d} $ equals one, if $\alpha, \beta, \gamma$ are not pairwise distinct (the triangle then is isosceles), and equals two otherwise, and the two triangles then differ with respect to the ID. 
\par
2) $\mathcal{G}_{\Delta,d}^{(0)}, d=3q$. There are no isosceles elementary triangles. The non isosceles elementary triangles appear provided $\beta-\alpha$ is not a multiple of 3. There are three congruent copies of each elementary triangle but they have the same ID and must be considered as a single prototile. 
\par
3)  $\mathcal{G}_{\Delta,d}^{(0)}\cup \mathcal{G}_{\Delta,d}^{(-2)}\cup \mathcal{G}_{\Delta,d}^{(2)},d=3q$. There are isosceles elementary triangles and one of them is equilateral. There are two types of non isosceles elementary triangles. The number of  non isosceles elementary triangles with angles $\frac{\alpha\pi}{d}, \frac{\beta\pi}{d}, \frac{\gamma\pi}{d}$ and a given $\sigma_{d}$ equals one, if  $\beta-\alpha$ is not a multiple of 3 and  $|\sigma_{d}|=1$, and equals two if  $\beta-\alpha$ is a multiple of 3 and  $|\sigma_{d}|=3$, and the two triangles then differ with respect to the ID.
\end{lem}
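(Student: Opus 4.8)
The plan is to reduce every assertion to a counting problem in $\mathbb{Z}/d\mathbb{Z}$ and feed it to the congruence Propositions 2.1--2.2. By the proof of Lemma 2.5 the triangle $\Delta^{(\kappa)}_d(\lambda,\mu,\nu)$ (with $0\le\lambda<\mu<\nu<d$) has angles $\tfrac{\pi}{d}$ times the three cyclic gaps $\mu-\lambda,\ \nu-\mu,\ d+\lambda-\nu$, whose sum is $d$; hence prescribing the multiset $\{\alpha,\beta,\gamma\}$ with $\alpha+\beta+\gamma=d$ amounts to prescribing these gaps up to cyclic/reflective rearrangement, and ``elementary'' means scale $1$, since $\sigma_d-\kappa\equiv\pm1$ gives $|s_{\sigma_d-\kappa}|=s_1$. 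The key observation is that the translation $(\lambda,\mu,\nu)\mapsto(\lambda+t,\mu+t,\nu+t)$ preserves the angles and sends $\sigma_d\mapsto\sigma_d+3t$. First I would fix a chirality (a cyclic order of the gaps) and let $t$ run over $\mathbb{Z}/d\mathbb{Z}$: the triangles of that chirality with sum $\equiv\sigma_d$ are the solutions of $3t\equiv\sigma_d-\sigma_0\ (\mathrm{mod}\ d)$, so Proposition 2.1 gives a unique solution when $\gcd(3,d)=1$, and Proposition 2.2 gives exactly three (or none) when $3\mid d$.

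The second step is the mod-$3$ dichotomy separating isosceles from scalene. A direct computation gives $\sigma_d\equiv 2g_1+g_2\ (\mathrm{mod}\ 3)$ for gaps $(g_1,g_2,g_3)$; using $g_1+g_2+g_3=d$ this reduces, for the two chiralities of a scalene multiset $\alpha<\beta<\gamma$, to $\sigma_d\equiv\pm(\beta-\alpha)\ (\mathrm{mod}\ 3)$, while for an isosceles multiset it collapses to $\sigma_d\equiv0\ (\mathrm{mod}\ 3)$ regardless of orientation. This one identity drives all three existence statements. For $3\nmid d$ every residue of $\sigma_d$ is attained (part 1). For $d=3q$, $\kappa=0$, the elementary constraint forces $\sigma_d\equiv\pm1\ (\mathrm{mod}\ 3)$, which excludes the isosceles value $0$ (the ``no isosceles'' of part 2) and, for scalene multisets, is solvable exactly when $\beta-\alpha\not\equiv0\ (\mathrm{mod}\ 3)$. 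In the union (part 3) the shifted constraints $\sigma_d\equiv\kappa\pm1$ with $\kappa=\pm2$ now reach $\sigma_d\equiv\pm3\equiv0\ (\mathrm{mod}\ 3)$, producing the isosceles families (in particular the equilateral one $\alpha=\beta=\gamma=q$, which requires $\sigma_d\equiv0$) together with the second, $\beta-\alpha\equiv0\ (\mathrm{mod}\ 3)$ scalene family.

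The last step converts solution counts into prototile counts through chirality and the interior decoration (ID). For $3\nmid d$ each chirality yields exactly one triangle with the given $\sigma_d$: an isosceles multiset has a single chirality (its reflection is a rotation), hence count $1$, whereas a scalene multiset has two mirror-image chiralities, congruent as unoriented shapes but, as I would verify from the inscribed $\mathcal{A}^{(\kappa)}_{2d}$-triangle $\Delta^{(\kappa)}_{2d}(2\lambda\mp1,2\mu\mp1,2\nu\mp1)$ that defines the ID, carrying opposite decorations, hence count $2$ differing by the ID. For $d=3q$ the three solutions of $3t\equiv\cdot$ are translates by $q=d/3$; since translation by $q$ at level $d$ is translation by $2q$ at level $2d$, the inscribed triangle is carried along and the three copies share the ID, collapsing to one prototile (part 2). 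In the union I would then use the $3$-fold symmetry of $\mathcal{D}$, $z(\varphi+\tfrac{2\pi}{3})=e^{2\pi\sqrt{-1}/3}z(\varphi)$, which relates $\mathcal{G}^{(0)}_{\Delta,d}$ and $\mathcal{G}^{(\pm2)}_{\Delta,d}$, to merge the $\kappa$-labels: a scalene multiset with $|\sigma_d|=1$ matches a single chirality (Type I, count $1$), while one with $|\sigma_d|=3$, i.e.\ $\sigma_d\equiv0\ (\mathrm{mod}\ 3)$, matches both chiralities (Type II, count $2$, the two mirror copies again separated by the ID).

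The main obstacle is precisely this final bookkeeping: proving that translation by $q$ and the rotational identifications among the three $\kappa$-patterns act compatibly on the interior decoration, so that genuinely congruent copies carry equal IDs while mirror pairs carry distinct ones. This means tracking the inscribed $2d$-triangle through the identities $G^{(0)}_{\lambda,d}=G^{(0)}_{2\lambda,2d}$, $G^{(-2)}_{\lambda,d}=G^{(2)}_{2\lambda+2,2d}$, $G^{(2)}_{\lambda,d}=G^{(-2)}_{2\lambda-2,2d}$, and checking that the section sequences recorded before the statement are invariant under translation by $q$ but reversed under reflection. By comparison, the mod-$3$ counting that establishes existence and the numerical values is routine once Propositions 2.1--2.2 are in hand.
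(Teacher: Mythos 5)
Your proposal is correct and takes essentially the same route as the paper: both reduce the count to the solvability of the linear congruences $3\mu \equiv \sigma_{d} \mp(\beta-\alpha)$ (mod $d$) via Propositions 2.1--2.2, with the $\gcd(3,d)$ trichotomy separating the three cases, and both settle the ID claims by observing that translation by $q$ (producing the three rotational copies when $3\mid d$) preserves the edge-section sequences of the inscribed $2d$-triangle while reflection, i.e.\ a sign change of $\sigma_{d}$, reverses them. Your translation-action and mod-$3$ packaging ($\sigma_{d}\equiv\pm(\beta-\alpha)$ for the two chiralities, $\sigma_{d}\equiv 0$ forced for isosceles) is a tidier uniform version of the paper's explicit case-by-case solutions (e.g.\ $\mu=2l$ for $d=6l-1$, $\mu=4l+1$ for $d=6l+1$), but the underlying mechanism is identical.
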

\begin{proof}
\par
1.1) $\mathcal{G}_{\Delta,d}^{(0)}, d=2q+1, q \ne 3l+1$. Any triangle with angles $\frac{\alpha\pi}{d}, \frac{\beta\pi}{d}, \frac{\gamma\pi}{d}$ is either $\Delta_{d}^{(\kappa)}(\mu- \beta,\mu,\mu+\alpha)$ or $\Delta_{d}^{(\kappa)}(\mu- \alpha,\mu,\mu+\beta)$. In this case $d  \not\equiv 0$ (mod 3), therefore by Prop. 2.2 there is no equilateral triangle and we may assume $\gamma \ne \alpha, \beta$.  We look for solutions to the congruences $\mu \mp \beta+\mu+\mu \pm \alpha \equiv \sigma_{d} $ (mod $d$). 
\par
If $\alpha=\beta$ (isosceles triangle) then $3\mu \equiv \sigma_{d} $ (mod $d$). If an elementary isosceles triangle has sides $S_{a},S_{b},S_{b}$ we now prove that the side $S_{a}$ can be in $G^{(0)}_{\mu, d}$ for only two different values of $\mu$. We have to look for solutions of $3 \mu \equiv \pm 1$ (mod $d$) and, according to Prop. 2.2 there is exactly one solution in each case. The unique solution to $3 \mu \equiv  1$ (mod $d$) is $\mu=2l$ for $d=6l-1$ and $\mu=4l+1$ for $d=6l+1$. The solution to $3 \mu \equiv  -1$ (mod $d$) is $\mu=4l-1$ for $d=6l-1$ and $\mu=2l$ for $d=6l+1$. By Lemma 2.6 the number of subdivisions of $G^{(0)}_{\mu, d}$ in these cases is $q$, and we have $q$ different isosceles triangles: $\Delta_{d}^{(0)}(\mu- \alpha,\mu,\mu+\alpha), \alpha=1,2,...q$. The triangle $\Delta_{d}^{(0)}(-2l- \alpha,-2l,-2l+\alpha)$ in $G^{(0)}_{-2l, d}$ is a congruent copy of $\Delta_{d}^{(0)}(2l- \alpha,2l,2l+\alpha)$ in $G^{(0)}_{2l, d}$ rotated by $\frac{(2l \mp 1)\pi}{d}$ for $d=6l \mp 1$. But it differs on the ID because they have values of $\sigma_{d} $ with opposite signs: the sequences of sections on the edges are reversed. 
\par
If $\alpha \ne \beta$, the equation $3x\equiv b$ (mod $d$) has one solution because $gcd(3,d)=1$ therefore we have one solution for $3 \mu_{1}\equiv \sigma_{d}  - \alpha + \beta$  (mod $d$) and another one for $3 \mu_{2}\equiv \sigma_{d}  + \alpha - \beta$  (mod $d$).
For $d=6l+1$, given a solution of   $3 \mu_{1}\equiv \sigma_{d}  - \alpha + \beta$  (mod $d$) with $\sigma_{d}=-1$ corresponding to $\Delta_{d}^{(0)}(\mu_{1}- \beta,\mu_{1},\mu_{1}+\alpha)$ then $\bar{\mu}_{1}=\mu_{1}+2l+1$  is a solution of $3 \bar{\mu}_{1}\equiv 1  - \alpha + \beta$  (mod $d$) and $\Delta_{d}^{(0)}(\mu_{1}- \beta+2l+1,\mu_{1}+2l+1,\mu_{1}+\alpha+2l+1)$ is a congruent copy of $\Delta_{d}^{(0)}(\mu_{1}- \beta,\mu_{1},\mu_{1}+\alpha)$ but with different ID. Also if $\Delta_{d}^{(\kappa)}(\mu_{2}- \alpha,\mu_{2},\mu_{2}+\beta)$ corresponds to the solution of $3 \mu_{2}\equiv \sigma_{d}  + \alpha - \beta$  (mod $d$) with $\sigma_{d}=-1$ then we have its congruent triangle $\Delta_{d}^{(\kappa)}(\mu_{2}- \alpha+2l+1,\mu_{2}+2l+1,\mu_{2}+\beta+2l+1)$ with different ID. For $d=6l-1$ if $\Delta_{d}^{(0)}(\mu_{1}- \beta,\mu_{1},\mu_{1}+\alpha)$ is linked with the solution of $3 \mu_{1}\equiv \sigma_{d}  - \alpha + \beta$  (mod $d$) with $\sigma_{d}=1$, then we have $\Delta_{d}^{(0)}(\mu_{1}- \beta+2l-1,\mu_{1}+2l-1,\mu_{1}+\alpha+2l-1)$ and if $\Delta_{d}^{(\kappa)}(\mu_{2}- \alpha,\mu_{2},\mu_{2}+\beta)$ corresponds to the solution of $3 \mu_{2}\equiv \sigma_{d}  + \alpha - \beta$  (mod $d$) with $\sigma_{d}=1$, then its congruent triangle is $\Delta_{d}^{(\kappa)}(\mu_{2}- \alpha+2l-1,\mu_{2}+2l-1,\mu_{2}+\beta+2l-1)$.
\par
 If $p(d,n)$ denotes the number of partitions of $d$ into exactly $n$ positive integers then $p(d-{n \choose 2},n)$ is the number of ways of writing $d$ as a sum of $n$ different positive integers  \cite{com74}. The number of elementary non isosceles triangles with $\sigma_{d}=1$ or $\sigma_{d}=-1$ is equal to  $ 2 p(d- {3 \choose 2},3)=2\lfloor \frac{(d-3)^2}{12} \rceil $,  where $ \lfloor $ $\rceil $ is the nearest integer function. The total number of elementary triangles in $\mathcal{G}_{\Delta,d}^{(0)}$ is therefore $d-1+4 \lfloor \frac{(d-3)^2}{12} \rceil $ for $d=6l \pm 1$.
 \bigskip\par
1.2) $\mathcal{G}_{\Delta,d}^{(0)}, d=2q, q \ne 3l$. This case is analogous to 1.1). If $\alpha=\beta$ then the unique solution to $3 \mu \equiv  1$ (mod $d$) is $\mu=2l+1$ for $d=6l+2$ and $\mu=4l+3$ for $d=6l+4$. The solution to $3 \mu \equiv  -1$ (mod $d$) is $\mu=4l+1$ for $d=6l+2$ and $\mu=2l+1$ for $d=6l+4$. By Lemma 2.6 the number of subdivisions of $G^{(0)}_{\mu, d}$ in these cases is $q-1$, and we have $q-1$ different isosceles triangles: $\Delta_{d}^{(0)}(\mu- \alpha,\mu,\mu+\alpha), \alpha=1,2,...q-1$. 
\par
If $\alpha \ne \beta$, the equation $3x\equiv b$ (mod $d$) has one solution.
For $d=6l+2$, given a solution of   $3 \mu_{1}\equiv \sigma_{d}  - \alpha + \beta$  (mod $d$) with $\sigma_{d}=-1$ corresponding to $\Delta_{d}^{(0)}(\mu_{1}- \beta,\mu_{1},\mu_{1}+\alpha)$ then $\bar{\mu}_{1}=\mu_{1}-2l$  is a solution of $3 \bar{\mu}_{1}\equiv 1  - \alpha + \beta$  (mod $d$) and $\Delta_{d}^{(0)}(\mu_{1}- \beta-2l,\mu_{1}-2l,\mu_{1}+\alpha-2l)$ is a congruent copy of $\Delta_{d}^{(0)}(\mu_{1}- \beta,\mu_{1},\mu_{1}+\alpha)$ but with different ID. Also if $\Delta_{d}^{(\kappa)}(\mu_{2}- \alpha,\mu_{2},\mu_{2}+\beta)$ corresponds to the solution of $3 \mu_{2}\equiv \sigma_{d}  + \alpha - \beta$  (mod $d$) with $\sigma_{d}=-1$ we have its congruent triangle $\Delta_{d}^{(\kappa)}(\mu_{2}- \alpha+2l,\mu_{2}+2l,\mu_{2}+\beta+2l)$ with different ID. For $d=6l+4$ if $\Delta_{d}^{(0)}(\mu_{1}- \beta,\mu_{1},\mu_{1}+\alpha)$ is related to the solution of $3 \mu_{1}\equiv \sigma_{d}  - \alpha + \beta$  (mod $d$) with $\sigma_{d}=-1$, then we have $\Delta_{d}^{(0)}(\mu_{1}- \beta+2l+2,\mu_{1}+2l+2,\mu_{1}+\alpha+2l+2)$ and if $\Delta_{d}^{(\kappa)}(\mu_{2}- \alpha,\mu_{2},\mu_{2}+\beta)$ is associated to the solution of $3 \mu_{2}\equiv \sigma_{d}  + \alpha - \beta$  (mod $d$) with $\sigma_{d}=-1$ its congruent triangle is $\Delta_{d}^{(\kappa)}(\mu_{2}- \alpha-2l-2,\mu_{2}-2l-2,\mu_{2}+\beta-2l-2)$.
\par
 The total number of elementary triangles in $\mathcal{G}_{\Delta,d}^{(0)}$ is therefore $d-2+4 \lfloor \frac{(d-3)^2}{12} \rceil $.   
 \bigskip\par
2) $\mathcal{G}_{\Delta,d}^{(0)}, d=3q$. Now $gcd(3,d)=3$ then $3 \mu \equiv  \pm 1$ (mod $d$) has no solution and we do not have elementary isosceles triangles. For $\alpha \ne \beta$ if $\sigma_{d}= \pm 1$ then $3 \mu_{1}\equiv \sigma_{d}  - \alpha + \beta$  (mod $d$) has solution, according to Prop.2.2, when $\alpha-\beta \in A=\{-\sigma_{d} 	\pm 3n\}_{n=0,1,2,...}$ and  $3 \mu_{2}\equiv \sigma_{d}  + \alpha - \beta$  (mod $d$) has solution when $\alpha-\beta \in B=\{\sigma_{d} 	\pm 3n\}_{n=0,1,2,...}$. In both cases we have 3 solutions when $\beta-\alpha$ is not divisible by 3: $\Delta_{d}^{(0)}(\lambda+nq,\mu+nq,\nu+nq), n=0,1,2$ which correspond to the same prototile (they have the same ID) rotated by $\frac{2\pi n}{3}$.  
\par
For $d=3q$, the number of solutions of $\beta-\alpha=3m, m=1,2,3,...$ is $q-2m-1$, therefore the total number of elementary triangles in $\mathcal{G}_{\Delta,3q}^{(0)}$ is $6 (\lfloor \frac{(d-3)^2}{12} \rceil -l(l-1))$ for $q=2l+1$ and $6 (\lfloor \frac{(d-3)^2}{12} \rceil -(l-1)^2)$ for $q=2l$.
 \bigskip\par
3) $\mathcal{G}_{\Delta,d}^{(0)}\cup \mathcal{G}_{\Delta,d}^{(-2)}\cup \mathcal{G}_{\Delta,d}^{(2)},d=3q$. The case $\kappa=0$ has already been analysed. For $\kappa=-2$ the elementary triangles have $\sigma_{d}\in \{-3,-1\}$. If $\alpha=\beta$ then $3 \mu \equiv -1$ (mod $d$) has no solution. If $q=2l$, then $\mu=2l-1+n, n=0, q, 2q$ are the solutions to $3 \mu \equiv  -3$ (mod $d$) and have the same ID. According to Lemma 2.6 when $\mu \in I_{odd}$, the number of subdivisions of the segment  $G^{(-2)}_{\mu, d}$  is $3l-1$. The triangles with an edge on $G^{(-2)}_{\mu, d}$ are $\Delta_{d}^{(-2)}(\mu- \alpha,\mu,\mu+\alpha), \alpha=1,2,...3l-1$ and the equilateral, which corresponds to $\alpha=2l$, has its sides on  $G^{(-2)}_{\mu, d}, \mu=2l-1+n, n=0, q, 2q$, therefore the number of isosceles triangles is $1+ 3 (3l-1-1)=9l-5$. For $q=2l+1$, the solutions to $3 \mu \equiv  -3$ (mod $d$) are $\mu=2l+n, n=0, q, 2q$, the segment $G^{(-2)}_{\mu, d}$ is subdivided into $3l+1$ segments and the number of isosceles triangles is $9l+1$. The case $\kappa=2$ can be studied in a similar way having in mind that $\sigma_{d}\in \{1,3\}$.
\par
For $\alpha \ne \beta$ and $\kappa=-2$, the equation $3 \mu_{1}\equiv \sigma_{d}  - \alpha + \beta$ (mod $d$), with $\sigma_{d}=-1$ has solution if $\beta-\alpha \in \{3m+1\}_{m=1,2,3,...}$ and $3 \mu_{2}\equiv \sigma_{d}  + \alpha - \beta$ (mod $d$) when $\beta-\alpha \in \{3m-1\}_{m=1,2,3,...}$. In both cases we have 3 solutions which correspond to the same prototile rotated by $\frac{2\pi n}{3}, n=0,1,2$. For  $\sigma_{d}=-3$ the equations $3 \mu_{1}\equiv \sigma_{d}  - \alpha + \beta$ (mod $d$) have solutions when $\beta-\alpha \in \{3m\}_{m=1,2,3,...}$. But now for each solution of $\mu_{1}$ linked to $\Delta_{d}^{(-2)}(\mu_{1}- \alpha,\mu_{1},\mu_{1}+\alpha)$ we have a solution of $3 \bar{\mu}_{1}\equiv 3  - \alpha + \beta$ (mod $d$) in $G^{(2)}_{\mu, d}$ corresponding to a congruent triangle $\Delta_{d}^{(2)}(\bar{\mu}_{1}- \alpha,\bar{\mu}_{1},\bar{\mu}_{1}+\alpha)$ having a different ID, and the same occurs for $\mu_{2}$. The number of non isosceles elementary triangles with two different ID  is, up to mirror reflection, $3l(l-1)$ for $q=2l+1$ and $3(l-1)^2$ for $q=2l$.
\par
The total number of elementary triangles in $\mathcal{G}_{\Delta,3q}^{(-2)}$ or in $\mathcal{G}_{\Delta,3q}^{(2)}$ is $9l+1+3 (\lfloor \frac{(d-3)^2}{12} \rceil -l(l-1))+6l(l-1)$ for $q=2l+1$ and $9l-5+3(\lfloor \frac{(d-3)^2}{12} \rceil -(l-1)^2)+6(l-1)^2$ for $q=2l$.
\end{proof}
    \par
  \begin{figure}[h]
 \includegraphics[width=21pc]{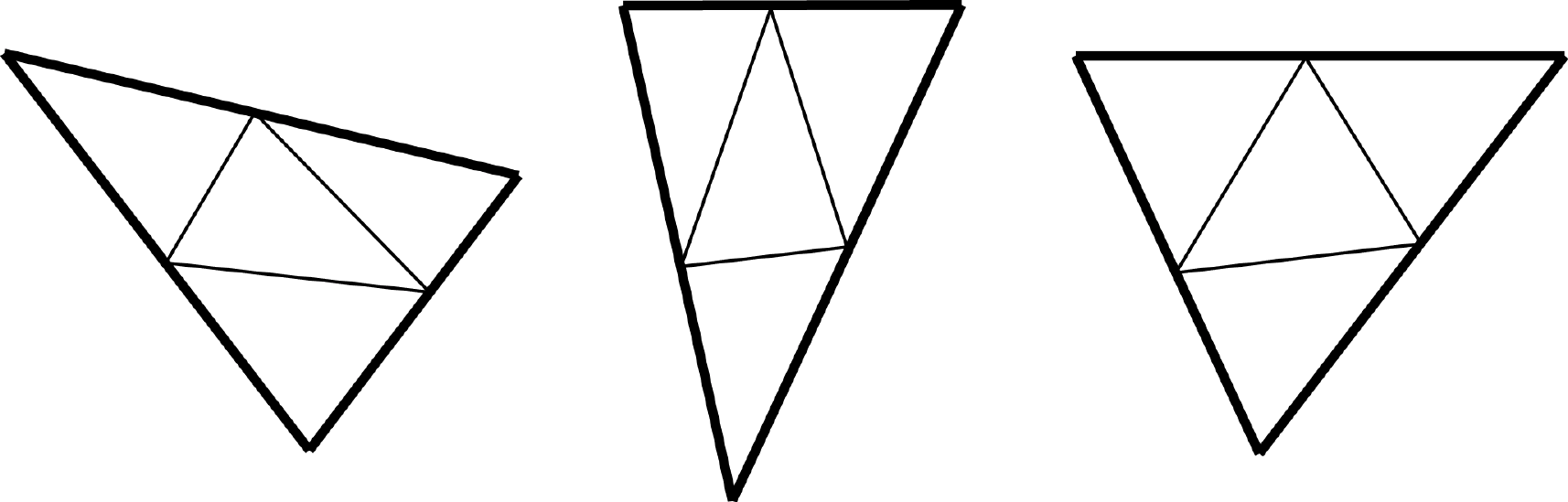}
\caption{\label{label} The elementary triangles $F:= \Delta_{14}^{(0)}(4,10,13),  \hat{F}:= \Delta_{14}^{(0)}(0,5,8)$ and $G:= \Delta_{14}^{(0)}(0,4,9)$ (from left to right) with the $ID$. $F$ and  $\hat{F}$ have the same edge subdivision.}
\end{figure}
   \bigskip\par
{ \it Example} . We denote by $ \hat{X}$ an elementary triangle which is the mirror image of the non-isosceles $X$ but with edges having the same subdivision (Fig.3). For $d=14$ the elementary triangles with $\sigma_{14}=-1$ are the following: $$A:= \Delta_{14}^{(0)}(0,1,12), \hat{A}:= \Delta_{14}^{(0)}(3,4,6), B:= \Delta_{14}^{(0)}(2,12,13),  \hat{B}:= \Delta_{14}^{(0)}(2,5,6), C:= \Delta_{14}^{(0)}(0,2,11),$$ $$ \hat{C}:= \Delta_{14}^{(0)}(2,4,7), D:= \Delta_{14}^{(0)}(3,11,13),  \hat{D}:=\Delta_{14}^{(0)}(1,5,7), E:= \Delta_{14}^{(0)}(0,3,10),  \hat{E}:= \Delta_{14}^{(0)}(1,4,8),$$ $$  F:= \Delta_{14}^{(0)}(4,10,13),  \hat{F}:= \Delta_{14}^{(0)}(0,5,8), G:= \Delta_{14}^{(0)}(0,4,9), H:= \Delta_{14}^{(0)}(5,9,13), I:= \Delta_{14}^{(0)}(6,8,13),$$ $$   \hat{I}:= \Delta_{14}^{(0)}(5,10,12), J:= \Delta_{14}^{(0)}(0,6,7),  \hat{J}:= \Delta_{14}^{(0)}(4,11,12), K:= \Delta_{14}^{(0)}(1,2,10),  \hat{K}:= \Delta_{14}^{(0)}(2,3,8),$$ $$  L:= \Delta_{14}^{(0)}(1,3,9), M:= \Delta_{14}^{(0)}(6,9,12), N:= \Delta_{14}^{(0)}(7,8,12),  \hat{N}:= \Delta_{14}^{(0)}(6,10,11), O:= \Delta_{14}^{(0)}(7,9,11), $$ $$ P:= \Delta_{14}^{(0)}(8,9,10)$$

\par
\begin{lem}
The edges of the inflated triangles with inflation factor $\iota_{d,l}$ are subdivided according to the sequences  $(S_{l-j+1},S_{l-j+3},...,S_{l+j-1})$ if $1\le j\le l$ and $(S_{j-l+1},S_{j-l+3},...,S_{j+l-1})$ if $l+1\le j\le q= \lceil \frac{d-1}{2} \rceil $.
\end{lem}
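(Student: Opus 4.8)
The plan is to reduce, via Lemma 2.5, to a single edge of one representative inflated triangle and then to read off its subdivision from the intersection points with the remaining chords, using the distance formula~(2.7). Fix an inflated triangle $\Delta^{(\kappa)}_{d}(\lambda,\mu,\nu)$ with $\sigma_{d}-\kappa\equiv\pm l$; by Lemma 2.5 it is congruent to $\iota_{d,l}$ times an elementary triangle, and its three edges lie on $G^{(\kappa)}_{\lambda,d},G^{(\kappa)}_{\mu,d},G^{(\kappa)}_{\nu,d}$, with the edge on $G^{(\kappa)}_{\lambda,d}$ having length $4\,|s_{\sigma_{d}-\kappa}|\,s_{\nu-\mu}$. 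Taking $\kappa=0$ without loss of generality and letting $j\in[1,q]$ be the folded index with $s_{\nu-\mu}=s_{j}$, this edge has length $4\,s_{l}s_{j}$, and the task is to exhibit this length as the stated ordered sum of pieces $S_{n}=4\,s_{1}s_{n}$.

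First I would describe the subdivision points as the intersections $p_{\lambda,\rho}$ of $G^{(0)}_{\lambda,d}$ with the chords $G^{(0)}_{\rho,d}$ crossing the interior of the edge. By Lemma 2.6 every interior vertex has multiplicity three, so these intersections occur in concurrent pairs and, moving along the edge, the crossing index $\rho$ runs through consecutive integers. The gap between neighbouring points $p_{\lambda,\rho}$ and $p_{\lambda,\rho+1}$ is given by~(2.7): it equals $4\,|s_{\lambda+2\rho+1}|\,s_{1}=S_{n}$ with $n\equiv\lambda+2\rho+1\pmod{d}$ reduced into $(0,d)$. Hence successive pieces carry subscripts in an arithmetic progression of common difference two; since $\lambda+2\mu\equiv\sigma_{d}-(\nu-\mu)$ and $\lambda+2\nu\equiv\sigma_{d}+(\nu-\mu)$, the two extreme subscripts are essentially $l\mp j$ after reduction. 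What remains is to determine where the progression starts and how many terms it contains.

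The number of terms is exactly where the two regimes split: the subdividing chords are limited either by the angular width $j$ of the edge or by the inflation factor $l$, whichever is smaller, so the edge carries precisely $\min(j,l)$ pieces. I would establish this by locating the two endpoints $p_{\lambda,\mu},p_{\lambda,\nu}$ inside the global subdivision of the whole chord $G^{(0)}_{\lambda,d}$ supplied by Lemma 2.7 --- a palindromic run of $S_{n}$'s about the tangency point --- and reading off the contiguous block between them, taking care of the finite extent of the chords and of the fold $s_{n}=s_{d-n}$ past the index $q$. For $1\le j\le l$ the block does not reach the fold and one obtains the $j$ pieces $S_{l-j+1},S_{l-j+3},\dots,S_{l+j-1}$, whereas for $l<j\le q$ the fold truncates and reflects the run to the $l$ pieces $S_{j-l+1},S_{j-l+3},\dots,S_{j+l-1}$. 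In each case I would confirm consistency through the arithmetic‑progression sine identity $s_{1}\sum_{t=0}^{m-1}s_{a+2t}=s_{m}s_{a+m-1}$ (summation of sines with step two): with $(m,a)=(j,\,l-j+1)$ and with $(m,a)=(l,\,j-l+1)$ this returns $4\,s_{l}s_{j}$, matching the edge length and pinning down the count. The main obstacle is precisely this delimitation step --- deciding which chords genuinely meet the edge, and hence the first subscript and the number of terms --- since it is the geometric origin of the case distinction; once the range of $\rho$ is fixed, the gap computation~(2.7) and the telescoping identity are routine.
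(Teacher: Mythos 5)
Your proposal is correct and follows essentially the same route as the paper: your quoted arithmetic-progression sine identity $s_{1}\sum_{t=0}^{m-1}s_{a+2t}=s_{m}s_{a+m-1}$, evaluated at $(m,a)=(j,\,l-j+1)$ and $(l,\,j-l+1)$, is exactly the content of the paper's identities (3.1)--(3.2) (which the paper proves by induction via $s_{l}s_{2}=s_{1}(s_{l-1}+s_{l+1})$ and the cancellation of terms $s_{n}+s_{-n}$), and both arguments then match the resulting consecutive run of pieces $S_{n}$ against the chord subdivisions of Lemma 2.7. Your explicit delimitation step --- reading the gaps from Eq.~(2.7) as $S_{n}$ with $n\equiv\lambda+2\rho+1$ and locating the endpoints $p_{\lambda,\mu},p_{\lambda,\nu}$ within the global subdivision, with the fold $s_{n}=s_{d-n}$ accounting for the truncation when $j>l$ --- is in fact more detailed than the paper, which compresses this geometric matching into the closing assertion that both sequences ``represent a unique interval on every $G_{\mu}$.''
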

\begin{proof}
We first prove the trigonometric identities \cite{nis96,fre98, esc08}
     \begin{equation}
\frac{s_{l}}{s_{1}}s_{j}=\sum^{j-1}_{k=0}s_{l-j+2k+1}, 1\le j\le l; 
 \end{equation}
     \begin{equation}
 \frac{s_{l}}{s_{1}}s_{j}=\sum^{l-1}_{k=0}s_{j-l+2k+1}, l+1\le j\le q
 \end{equation}

Eq. (3.1) can be obtained by induction on $j$. For $j=1$ it is an identity. For $j=2$ we have to prove 
      \begin{equation}
s_{l}s_{2}=s_{1}(s_{l-1}+s_{l+1})
\end{equation}
 which is a consequence of the Mollweide\'{}s formula and the law of sines. Now assuming it is true for $j-2$ then it is true for $j$ because $\frac{s_{l}}{s_{1}}(s_{j}-s_{j-2})=s_{l+j-1}+s_{l-j+1}$. On the other hand, for $j=1,2,...q$, the identity
$\frac{s_{l}}{s_{1}}s_{j}=\sum^{l-1}_{k=0}s_{j-l+2k+1} $
can be obtained also by induction on $j$.  After cancelling terms like $s_{n}+s_{-n}$ we get it for $j=1$. For $j=2$ it is given by eq.(3.1). Now assuming it is true for $j-2$ then it is true for $j$ because $\frac{s_{l}}{s_{1}}(s_{j}-s_{j-2})=s_{j+l-1}-s_{j-l-1}$.
\par
Therefore we have
    \begin{equation}
    \iota_{d,l}\cdot 4s_{1}s_{j}=4s_{1}(s_{l-j+1}+s_{l-j+3}+...s_{l+j-1}), 1\le j\le l
     \end{equation}
     and
       \begin{equation}
      \iota_{d,l}\cdot 4s_{1}s_{j}=4s_{1}(s_{j-l+1}+s_{j-l+3}+...s_{j+l-1}),  l+1\le j\le q
     \end{equation}
We get a sequence $(S_{l-j+1},S_{l-j+3},...,S_{l+j-1})$ if $1\le j\le l$ and $(S_{j-l+1},S_{j-l+3},...,S_{j+l-1})$ if $l+1\le j\le q$, and both represent a unique  interval  on every $G_{\mu}$. 
\end{proof}
\par

\begin{lem}
If two triangles with the same ID and with $\sigma_{d} -\kappa \not\equiv \pm1$ (mod $d$) are congruent  they are dissected in the same way.
\end{lem}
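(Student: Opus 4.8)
The plan is to show that the dissection of such a triangle is encoded by exactly two congruence-and-ID invariants --- the subdivision of its three edges and its chirality --- and that both are forced by the hypotheses. First I would fix the setup. A triangle $T=\Delta^{(\kappa)}_{d}(\lambda,\mu,\nu)$ with $\sigma_{d}-\kappa\not\equiv\pm1$ (mod $d$) is non-elementary, so by Lemma 2.5 it is an elementary triangle scaled by $\iota_{d,p}$ with $p=\sigma_{d}-\kappa$; its dissection is the partition into elementary cells cut out by the remaining segments $G^{(\kappa)}_{\rho,d}$, $\rho\notin\{\lambda,\mu,\nu\}$, of the pattern $\mathcal{G}^{(\kappa)}_{\Delta,d}$ that cross its interior.

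The first real step concerns the edges. By Lemma 3.3 the subdivision of an edge of angle-index $j$ of an $\iota_{d,l}$-inflated triangle is the explicit $S$-sequence determined solely by the pair $(l,j)$. A congruence between two such triangles matches edges preserving both length and opposite angle, hence preserves the inflation index $l$ and each $j$; so corresponding edges carry identical subdivision sequences. The second step is to recover the interior dissection from this subdivided boundary. Here I would use that every interior vertex of $\mathcal{G}^{(\kappa)}_{\Delta,d}$ has multiplicity exactly $3$ and is surrounded by six elementary triangles (Lemma 2.6): this local rigidity should let one propagate the triangulation inward, layer by layer, from the boundary division points, each of which is the endpoint of a uniquely determined internal segment. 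Equivalently, by Eq. (2.6)--(2.7) every crossing point of a cutting segment $G^{(\kappa)}_{\rho,d}$ with an edge, and hence the whole internal arrangement, is an explicit function of the residues $(\lambda,\mu,\nu)\bmod d$, so the dissection, viewed up to isometry, is a function of these residues modulo the symmetries of the configuration.

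The third step is that the interior decoration removes the last ambiguity. The reconstruction above is unique only up to reflection: the two mirror solutions correspond to the two signs in $\sigma_{d}-\kappa\equiv\pm p$, which a congruence alone cannot separate. As recorded after Lemma 3.2, these are exactly the possibilities with reversed edge sequences and opposite ID. Imposing equality of the ID therefore fixes the sign of $\sigma_{d}-\kappa$, and with it the chirality, so that two congruent triangles with the same ID share both their edge data and their orientation and are dissected identically.

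The main obstacle is the second step: making precise that the triangular pattern inside $T$ is determined by its boundary subdivision. Because the cutting segments occur in $d$ distinct directions rather than three, this is not the elementary rigidity of a triangular grid, and I expect to handle it in one of two ways. Either I carry out the inductive peeling of corner and edge layers built on the multiplicity-$3$ property of Lemma 2.6, checking at each stage that a boundary point admits a single incident internal segment; or, more computationally, I verify directly from the intersection formula (2.6)--(2.7) that the complete list of crossing points depends only on $(\lambda,\mu,\nu)\bmod d$, and then invoke Propositions 2.1 and 2.2 to conclude that congruence together with a fixed ID pins this residue triple down to a symmetry of the configuration, under which the dissection is manifestly preserved.
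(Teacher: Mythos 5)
Your overall architecture --- edge subdivisions from Lemma 3.3, a classification of same-ID congruent index triples, and the ID resolving a mirror ambiguity --- has the right shape, and your second route in step 2 is, in outline, what the paper actually does. But the decisive claim, that ``congruence together with a fixed ID pins this residue triple down to a symmetry of the configuration,'' is exactly where the paper has to work, and you leave it unproved. The difficulty is that an index shift $\nu \mapsto \nu + c$ with $3c \equiv -2(\sigma_{d}-\kappa) \pmod d$ produces a \emph{congruent} triangle (it preserves the angle differences and sends $\sigma_{d}-\kappa$ to $-(\sigma_{d}-\kappa)$, hence preserves $|s_{\sigma_{d}-\kappa}|$), yet it is \emph{not} a symmetry of $\mathcal{G}_{\Delta,d}^{(\kappa)}$: the only symmetries available are the reflection $\nu \mapsto -\nu$ (which swaps $\mathcal{G}_{\Delta,d}^{(\kappa)}$ with $\mathcal{G}_{\Delta,d}^{(-\kappa)}$ when $\kappa \ne 0$) and, for $3 \mid d$, the rotation by $\frac{2\pi}{3}$. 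So congruent copies \emph{not} related by configuration symmetries genuinely exist; the content of the lemma is that all of them have reversed ID. Verifying this is the bulk of the paper's proof: it solves $3\mu \equiv \sigma_{d}$ and $3\mu_{1,2} \equiv \sigma_{d} \mp (\alpha-\beta) \pmod d$ case by case ($d=6l\pm1$, $6l+2$, $6l+4$, $d=3q$ with $q$ even or odd and $\kappa=0,\mp2$), exhibits the explicit shifts (e.g.\ $\bar{\mu}_{1}=\mu_{1}-(2l+1)\sigma_{d}$ for $d=6l+1$) relating the $\pm\sigma_{d}$ solutions, and checks these reverse the ID, so that after imposing equal ID only the reflection (or rotation) partner survives. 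It also establishes Eqs.\ (3.6)--(3.11), covering a situation your plan never contemplates: for $d=3q$ and $p$ divisible by $3$ the congruence $3n \equiv p\pm1$ has no solution, there are three same-ID copies related by rotation rather than a mirror pair, and the inflated copies migrate between $\mathcal{G}_{\Delta,d}^{(0)}$ and $\mathcal{G}_{\Delta,d}^{(\mp2)}$, so the comparison runs across different patterns.

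Your fallback route for step 2 fails on its own premise: a subdivision point on an edge of an inflated triangle is an interior vertex of the pattern, so by the very multiplicity-3 statement of Lemma 2.6 that you invoke, \emph{two} further segments pass through it and both enter the triangle's interior; there is no ``single incident internal segment,'' and the boundary piece lengths $S_{n}$ do not by themselves identify the directions $\rho,\sigma$ (with $\rho+\sigma \equiv \kappa - \lambda \pmod d$) of the entering chords, so the layer-by-layer peeling is not well defined as stated. A smaller but real slip in step 3: equality of the ID does not fix the sign of $\sigma_{d}-\kappa$. The same-ID partner of a non-isosceles triangle is the mirror image $\Delta_{d}^{(\kappa)}(-\lambda,-\mu,-\nu)$, whose index sum is $-\sigma_{d}$; what the ID does is \emph{couple} the sign to the chirality, so the surviving pair is related by the global reflection, whereas the equal-sign copy produced by the shift above is precisely the one with reversed edge sequences, i.e.\ different ID. In short: the skeleton is right, but the case-by-case congruence analysis certifying that the only same-ID congruent copies are images under actual symmetries of the configurations --- which is the lemma's substance --- is missing.
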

\begin{proof}
\par
1.1) $\mathcal{G}_{\Delta,d}^{(0)}, d=2q+1, q \ne 3l+1, l=1,2,3...$. If $\alpha=\beta$ then the solution to $3\mu \equiv \sigma_{d}$ (mod $d$), $\sigma_{d}=2,3,...q$ is $\mu=2l \sigma_{d}$ for $d=6l-1$ and $\mu=(4l+1) \sigma_{d}$ for $d=6l+1$. If $\alpha \ne \beta$, given a solution to  $3 \mu_{1}\equiv \sigma_{d}  - \alpha + \beta$  (mod $d$) then $\bar{\mu}_{1}=\mu_{1}-(2l+1)\sigma_{d}$  is a solution of $3 \bar{\mu}_{1}\equiv -\sigma_{d}  - \alpha + \beta$  (mod $d$) for $d=6l+1$ and $\bar{\mu}_{1}=\mu_{1}+(2l-1)\sigma_{d}$ for $d=6l-1$ . We have the same results by replacing $\bar{\mu}_{1}$ by $\bar{\mu}_{2}$ if we analyse $3 \bar{\mu}_{2}\equiv \sigma_{d}  + \alpha - \beta$.  
 \bigskip\par
1.2) $\mathcal{G}_{\Delta,d}^{(0)}, d=2q, q \ne 3l, l=1,2,3...$. If $\alpha=\beta$ then the solution to $3\mu \equiv \sigma_{d}$ (mod $d$), $\sigma_{d}=2,3,...q$ is $\mu=(2l+1) \sigma_{d}$ for $d=6l+2$ and $\mu=(4l+3) \sigma_{d}$ for $d=6l+4$. If $\alpha \ne \beta$, given a solution to  $3 \mu_{1}\equiv \sigma_{d}  - \alpha + \beta$  (mod $d$) then $\bar{\mu}_{1}=\mu_{1}+2l\sigma_{d}$  is a solution of $3 \bar{\mu}_{1}\equiv -\sigma_{d}  - \alpha + \beta$  (mod $d$) for $d=6l+2$ and $\bar{\mu}_{1}=\mu_{1}-(2l+2)\sigma_{d}$ for $d=6l+4$. We have the same results if we analyse $3 \bar{\mu}_{2}\equiv \sigma_{d}  + \alpha - \beta$ (mod $d$).  
\par
When $d$ is not divisible by 3,  $\lambda+ \mu+\nu= -1$ and $n$ is a solution to $3n \equiv p+1$ (mod $d$)  then
       \begin{equation}
\iota_{d,p}\cdot \Delta_{d}^{(0)}( \pm\lambda ,\pm \mu, \pm\nu) \cong \Delta_{d}^{(0)}(\pm \lambda\pm n,\pm \mu \pm n,\pm \nu\pm n)
     \end{equation} 
 \bigskip\par
2) $\mathcal{G}_{\Delta,d}^{(0)}, d=3q, q = 2l+1$ or $q = 2l$, $l=1,2,3...$. Now there are inflated elementary triangles in $\mathcal{G}_{\Delta,d}^{(0)}$ only when  $\sigma_{d} $ is not a multiple of 3. If $\sigma_{d} >1$ the only values giving solutions are $\sigma_{d} \in \{ \pm 1+ 3n\}_{n=1,2,...,l}$ and we will not have tilings with inflation factors $\iota_{d,p}$ with $p=\sigma_{d}-\kappa=\sigma_{d}$ a multiple of 3 and prototiles contained in $\mathcal{G}_{\Delta,d}^{(0)}$. However there are non elementary triangles which are not inflated copies of the elementary ones. For instance the non-elementary  triangle $\Delta_{9}^{(0)}(2,5,8)$ has $\sigma_{d} =6$ and it is one of the three inflated copies (in this case the inflation factor is $\iota_{9,3}$) of the elementary equilateral triangle appearing in $\mathcal{G}_{\Delta,9}^{(\pm 2)} $ (the other two appear in $\mathcal{G}_{\Delta,9}^{(\pm 2)} $, see 3) in this Lemma). The values  $\sigma_{d} =\pm 1+3n$ ($n=1$ for $d=9$) correspond to the inflated copies of the elementary triangles appearing in $\mathcal{G}_{\Delta,9}^{(0)} $. 
 \bigskip\par
3)  $\mathcal{G}_{\Delta,d}^{(0)}\cup \mathcal{G}_{\Delta,d}^{(-2)}\cup \mathcal{G}_{\Delta,d}^{(2)},d=3q$. 
For $\kappa=-2$ and $\alpha = \beta$ we have 3 solutions to $3\mu \equiv \sigma_{d}$ (mod $d$) when $ \sigma_{d} \in \{ -3+3n\}_{n \in \{0,1,2,...q-1\}}$. The non elementary  triangles $\Delta_{d}^{(-2)}(n-1-\alpha,n-1,n-1+\alpha), n>0, \alpha=1,2,...m$, with $m=3l-1,3l+1$ for $q=2l,2l+1$ respectively, correspond to inflation factors $\iota_{d,p}$ with $p=\sigma_{d}-\kappa=\sigma_{d}+2$ not a multiple of 3. The remaining two solutions for $\mu$ are $n-1+tq, t=1,2$ and give congruent copies rotated by $\frac{2\pi}{3}$, except for the equilateral triangle.  The inflated triangles with $\iota_{d,p}, p=3n, n=1,2,..., l$ are in $\mathcal{G}_{\Delta,d}^{(0)}$, because we have solutions to $3\mu \equiv \sigma_{d}=d-3n$ (mod $d$) like $\Delta_{d}^{(0)}(-n,q-n,2q-n)$. For $\alpha \ne \beta$ the distribution of inflated triangles in $\mathcal{G}_{\Delta,d}^{(\kappa)}$ can be analysed in a similar way. We have seen in Lemma 3.3 that there are two types of non-isosceles elementary triangles that we call $\Delta_{N1}, \Delta_{N2}$ such that $\Delta_{N2}$ is contained only in $\mathcal{G}_{\Delta,d}^{(-2)}\cup \mathcal{G}_{\Delta,d}^{(2)}$ and there are two $\Delta_{N2}$ differing only on their ID. The inflated copy of any $\Delta_{N1}$ is contained in $\mathcal{G}_{\Delta,d}^{(0)}$ when the inflation factor is $\iota_{d,p}$ with $p$ not a multiple of 3 (see 2) in this Lemma), and it is contained in $\mathcal{G}_{\Delta,d}^{(-2)}\cup \mathcal{G}_{\Delta,d}^{(2)}$ otherwise. On the other hand the inflated copy of any $\Delta_{N2}$ is contained in $\mathcal{G}_{\Delta,d}^{(0)}$ when the inflation factor is $\iota_{d,p}$ with $p$ a multiple of 3 and in $\mathcal{G}_{\Delta,d}^{(-2)}\cup \mathcal{G}_{\Delta,d}^{(2)}$ otherwise.
\par
 For  $d=3q$ and $ 1\leq p<q$ with $p$ non divisible by 3, there is a solution $n$ to the congruences $3n \equiv p+1$ (mod $d$),  $3n \equiv p-1$ (mod $d$) and if $\lambda+ \mu+\nu= \kappa \pm1$ then 
        \begin{equation}
\iota_{d,p}\cdot \Delta_{d}^{(\kappa)}( \lambda , \mu, \nu)  \cong  \Delta_{d}^{(\kappa)}( \lambda+n, \mu+n, \nu+n)
      \end{equation} 
\par\noindent
If $p$ is divisible by 3, then there is no solution to $3n \equiv p \pm 1$ (mod $d$). For $\lambda + \mu+ \nu=-3$ we have:
  \par
        \begin{equation}
\iota_{d,3}\cdot \Delta_{d}^{(-2)}( \lambda , \mu, \nu)  \cong \Delta_{d}^{(0)}( \lambda, \mu, \nu), \forall q
      \end{equation} 
\par\noindent
and
        \begin{equation}
\iota_{d,3m}\cdot \Delta_{d}^{(-2)}( \lambda, \mu, \nu)  \cong \Delta_{d}^{(0)}( \lambda+m+1, \mu+m+1, \nu+m+1), m=2, 3, ..., \lfloor \frac{q}{2}\rfloor, q>3
      \end{equation} 
      For $\lambda + \mu+ \nu=1$ and $d=3q, q=2,3$:
              \begin{equation}
\iota_{d,3}\cdot \Delta_{d}^{(0)}( \lambda , \mu, \nu)  \cong \Delta_{d}^{(-2)}( \lambda+q-2, \mu+q-2, \nu+q-2)
      \end{equation}
      and for $d=3q, q>3$
                    \begin{equation}
\iota_{d,3m}\cdot \Delta_{d}^{(0)}( \lambda , \mu, \nu)  \cong \Delta_{d}^{(-2)}( \lambda+m-1, \mu+m-1, \nu+m-1), m=1,2,..., \lfloor \frac{q}{2}\rfloor
      \end{equation}
      
      The corresponding inflated copies of $\widetilde{\Delta}=\Delta_{d}^{(0)}( -\lambda, -\mu, -\nu)$ appear in $\mathcal{G}_{\Delta,d}^{(2)}$.
      \par
Both $\Delta=\Delta_{2d}^{(0)}( 2\lambda , 2\mu, 2\nu),  2\lambda+ 2\mu+ 2\nu=2 $ and $\widetilde{\Delta}=\Delta_{2d}^{(0)}( -2\lambda, -2\mu, -2\nu)$ are included in $\mathcal{G}_{\Delta,2d}^{(0)}$ and are mirror images with respect to $G^{(0)}_{0, 2d}$. When $\kappa \ne 0$ if $\Delta=\Delta_{2d}^{(\kappa)}( 2\lambda , 2\mu, 2\nu)\subset\mathcal{G}_{\Delta,2d}^{(\kappa)}, 2\lambda+ 2\mu+ 2\nu-\kappa=2$ then $\widetilde{\Delta}=\Delta_{2d}^{(-\kappa)}( -2\lambda, -2\mu, -2\nu)\subset\mathcal{G}_{\Delta,2d}^{(-\kappa)}$. There are exactly two triangles with the same $\sigma_{d}$ and the same ID for all non-isosceles triangles in $\mathcal{G}_{\Delta,d}^{(0)}$ for $d$ non-divisible by 3 and one for $d$ divisible by 3. Every triangle can be reflected in $G_{0,d}^{0}$, they have to be mirror images of each other and hence are dissected in the same way.  For the tilings with prototiles in $\mathcal{G}_{\Delta,d}^{(0)}\cup \mathcal{G}_{\Delta,d}^{(-2)}\cup \mathcal{G}_{\Delta,d}^{(2)}$ the triangles contained in $\mathcal{G}_{\Delta,d}^{(\kappa)}$ appear reflected in $\mathcal{G}_{\Delta,d}^{(-\kappa)}$ and are also dissected in the same way. 

\end{proof}
\par

An edge with length $4 {\rm sin}(\frac{\pi}{d}) {\rm sin}(\frac{\nu\pi}{d})$ and subdivision $(S_{a\mp 1}, S_{a \pm 1})$  induced by the ID of a prototile (we choose for instance anticlockwise orientation) can be represented by a letter $W_{\nu}^{1}$ or $W_{\nu}^{-1}$ provided that $S_{a\mp 1}<S_{a \pm 1}$ or $S_{a\mp 1}>S_{a \pm 1}$ respectively, and by $W_{\nu}^{0}$ if $S_{a-1}=S_{a +1}$. We can get face-to-face tilings with the property that the vertices of the inscribed triangles on the prototiles edges match.  We have seen that if there is an elementary triangle $\Delta:=\Delta(W_{{\nu}_{1}}^{i},W_{{\nu}_{2}}^{j},W_{{\nu}_{3}}^{k}),  i,j,k\in \{-1,0,1\}$ then there exists  $\widetilde{\Delta}:=\Delta(W_{{\nu}_{1}}^{-i},W_{{\nu}_{2}}^{-j},W_{{\nu}_{3}}^{-k})$. 
The mirror image of a word $w=W_{{\nu}_{1}}^{i}W_{{\nu}_{2}}^{j}...W_{{\nu}_{n}}^{k}$ is denoted by $Mir(w)$, the projection of $W^{i}$ into $W$ by $P(W^{i})$, and $\rho$ is the map $\rho(W_{{\nu}_{1}}^{i}W_{{\nu}_{2}}^{j}...W_{{\nu}_{n}}^{k})=W_{{\nu}_{1}}^{-i}W_{{\nu}_{2}}^{-j}...W_{{\nu}_{n}}^{-k}$. We choose the tile dissections in such a way that the inflation rules for the edges of $\Delta$ and $\widetilde{\Delta}$ satisfy 
       \begin{equation}
\phi(W^{i})=Mir(\rho(\phi(W^{-i})))
     \end{equation}
 The common edges of two adjacent triangles are represented by $W^{i},W^{-i}$. Therefore the fact that the tilings are face-to-face is equivalent to $P(\phi^{n}(W^{i}))=Mir(P(\phi^{n}(W^{-i})))$, which is a consequence of Eq.(3.12).
   \bigskip\par
     { \it Example}. The prototiles $F, \hat{F}$ in Fig.3 have edges $W_{3}^{-1}, W_{5}^{-1}, W_{6}^{-1}$ corresponding to their subdivisions $(S_{7}, S_{5})$, $(S_{11}, S_{9})$, $(S_{13}, S_{11})$ respectively. The edges of $G$ are $W_{4}^{-1}, W_{5}^{-1}, W_{5}^{-1}$ with subdivisions $(S_{9}, S_{7})$, $(S_{11}, S_{9})$, $(S_{11}, S_{9})$. A solution to $3n \equiv p+1$ (mod $14$) for $p=3$ is $n=6$. An edge inflation rule corresponding to $\iota_{14,3}$ is $$\phi_{+}(W_{1}^{1})= W_{3}^{-1}, \phi_{+}(W_{2}^{1})= W_{4}^{-1}W_{2}^{-1}, \phi_{+}(W_{3}^{1})= W_{5}^{-1}W_{3}^{-1}W_{1}^{-1}, \phi_{+}(W_{4}^{1})= W_{6}^{-1}W_{4}^{-1}W_{2}^{-1}, $$ $$\phi_{+}(W_{5}^{1})= W_{7}^{0}W_{5}^{-1}W_{3}^{-1}, \phi_{+}(W_{6}^{1})= W_{6}^{1}W_{6}^{-1}W_{4}^{-1}, \phi_{+}(W_{7}^{0})= W_{5}^{1}W_{7}^{0}W_{5}^{-1}$$
    In this case the words $\phi_{+}(W_{{\nu}_{n}}^{k})$ are not palindromic therefore we can define another edge inflation rule by $\phi_{-}(W^{i}):=\phi_{+}(W^{-i})$.
    \par
  \begin{figure}[h]
 \includegraphics[width=25pc]{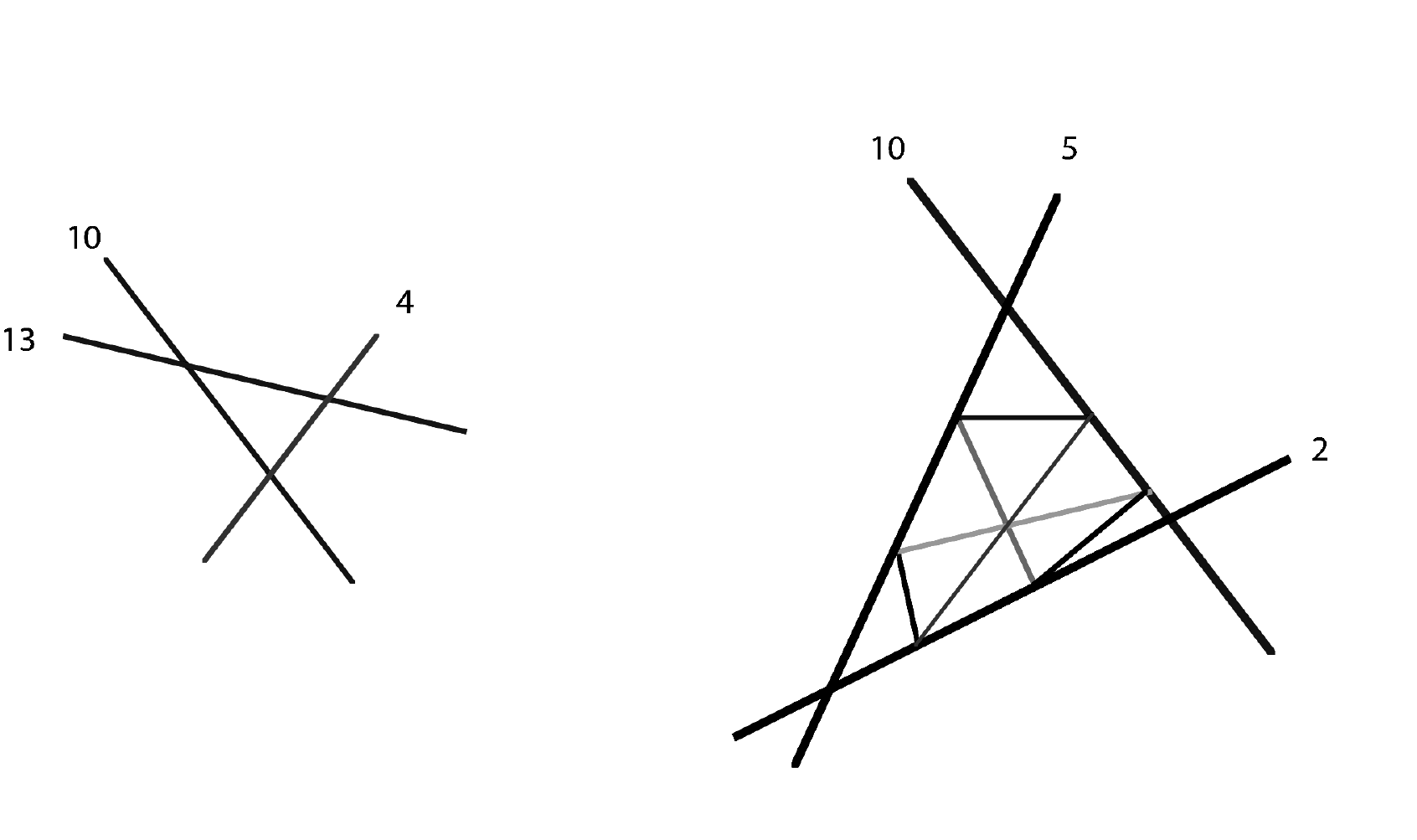}
\caption{\label{label} The segments $G^{(0)}_{\mu, 14}$, for $\mu=4, 10,13$ determine the elementary triangle $F:= \Delta_{14}^{(0)}(4, 10, 13)$. A dissection of  $\iota_{14,3} \Delta_{14}^{(0)}(4, 10, 13) \cong \Delta_{14}^{(0)}(10, 2, 5) $ is shown on the right.}
\end{figure}
    \par
  \begin{figure}[h]
 \includegraphics[width=25pc]{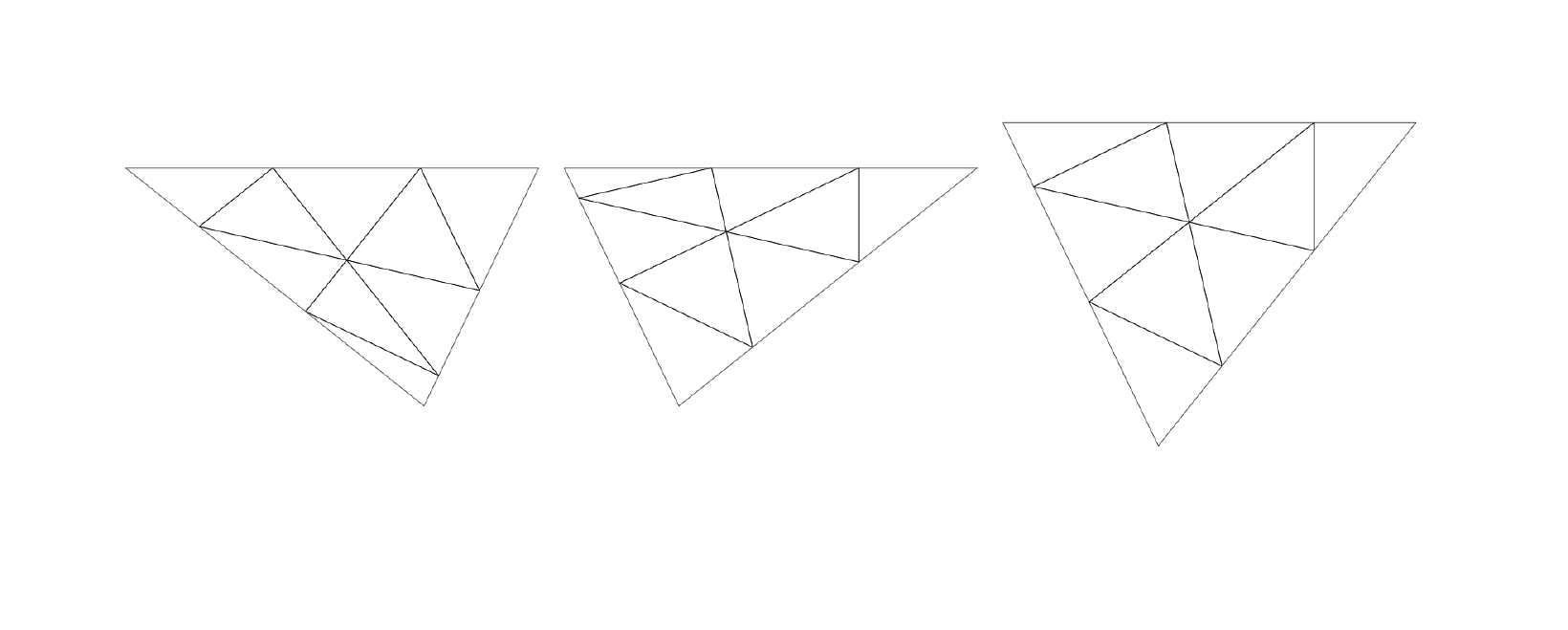}
\caption{\label{label} Substitution rules for $F, \hat{F}, G$ corresponding to $\iota_{14,3}$.}
\end{figure}
\par

    \par
  \begin{figure}[h]
 \includegraphics[width=20pc]{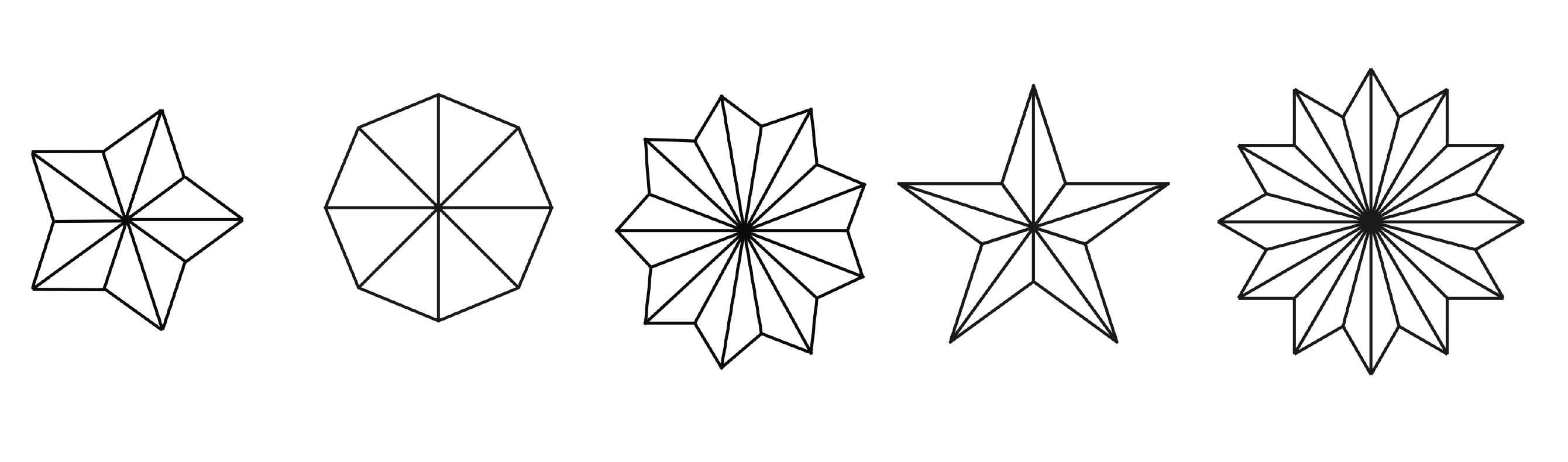}
\caption{\label{label}  Vertex configurations with rotational symmetry appearing in some examples of tilings with inflation factors $\iota_{5,2}$, $\iota_{8,3}$, $\iota_{9,4}$, $\iota_{10,3}$, $\iota_{12,2}^2$.}
\end{figure}
\par

\par
 \begin{defn}
Given $d$ and $p$, for every prototile $T \in \mathcal{F}_{d}$ we define the inflation rule $\Phi_{d,p,+}(T)$ by dissecting $\iota_{d,p}\cdot T$ according to 
Eqs. (3.6)-(3.12) and $$\Phi_{d,p,-}(T):=\Phi_{d,p,+}(\widetilde{\hat{T}}), \Phi_{d,p,-}(\widetilde{T}):=\Phi_{d,p,+}(\hat{T}),\\ \Phi_{d,p,-}(\hat{T}):=\Phi_{d,p,+}(\widetilde{T}), \Phi_{d,p,-}(\widetilde{\hat{T}}):=\Phi_{d,p,+}(T)$$ if $T$ is non-isosceles and $$\Phi_{d,p,-}(T):=\Phi_{d,p,+}(\widetilde{T}), \\ \Phi_{d,p,-}(\widetilde{T}):=\Phi_{d,p,+}(T)$$ otherwise, where $\widetilde{T}$ is related to $T$ through a reflection, including the ID in the reflection.
 \end{defn}
   \bigskip\par
{\it Example}.  In Fig.4 we have shown an example for $d=14$. The dissection of  $\iota_{14,3} \Delta_{14}^{(0)}(4, 10, 13)$ is determined by Eq. (3.6)  and is given by the interior of $\Delta_{14}^{(0)}(10, 2, 5)$ in $\mathcal{G}_{\Delta,14}^{(0)}$, which is shown in Fig.4 (right). The substitution rules $\Phi_{14,3,+}(T)$ are  

$$\Phi_{14,3,+}(A)=\hat{A}\cup\widetilde{O}\cup\hat{B} \cup \widetilde{N}, \Phi_{14,3,+}(\hat{A})=P\cup\widetilde{\hat{A}} \cup O \cup \widetilde{\hat{C}}, \Phi_{14,3,+}(B)=\widetilde{P}\cup\hat{A}\cup\widetilde{O}\cup\hat{C} \cup \widetilde{M}$$
$$\Phi_{14,3,+}(\hat{B})=\widetilde{\hat{B}}\cup O\cup \widetilde{\hat{C}} \cup\hat{N} \cup \widetilde{\hat{K}}, \Phi_{14,3,+}(C)=\widetilde{O}\cup\hat{B}\cup\widetilde{N}\cup\hat{D} \cup \widetilde{I}\cup\widetilde{M} \cup \hat{C} \cup \widetilde{\hat{N}}, $$
$$\Phi_{14,3,+}(\hat{C})=\widetilde{\hat{A}}\cup O\cup \widetilde{\hat{C}} \cup M \cup \widetilde{\hat{E}} \cup \widetilde{\hat{D}} \cup N  \cup \widetilde{\hat{B}}, \Phi_{14,3,+}(D)=\widetilde{O}\cup\hat{C}\cup\widetilde{M}\cup\hat{E} \cup \widetilde{H}\cup\widetilde{\hat{I}} \cup \hat{K} \cup \widetilde{\hat{N}}, $$
$$\Phi_{14,3,+}(\hat{D})=\widetilde{\hat{C}}\cup \hat{N}\cup \widetilde{\hat{K}} \cup \hat{I} \cup \widetilde{L} \cup \widetilde{\hat{E}} \cup M  \cup \widetilde{\hat{C}} \cup \widetilde{\hat{D}}, \Phi_{14,3,+}(E)=\widetilde{N}\cup\hat{D}\cup\widetilde{I}\cup\hat{F}\cup\widetilde{\hat{F}} \cup \widetilde{H}\cup\hat{E} \cup \widetilde{M} \cup \widetilde{\hat{I}}, $$
$$\Phi_{14,3,+}(\hat{E})=\widetilde{\hat{C}}\cup M\cup \widetilde{\hat{E}} \cup H \cup \widetilde{G} \cup \widetilde{\hat{F}} \cup I  \cup \widetilde{\hat{D}} \cup \widetilde{J}, \Phi_{14,3,+}(F)=\widetilde{M}\cup\hat{E}\cup\widetilde{H}\cup G\cup\widetilde{G} \cup \widetilde{F} \cup L  \cup \widetilde{\hat{I}} \cup \widetilde{\hat{J}}, $$
$$\Phi_{14,3,+}(\hat{F})=\widetilde{\hat{K}}\cup\hat{I} \cup\widetilde{L}\cup F \cup\widetilde{E}\cup \widetilde{\hat{E}} \cup H \cup\widetilde{G}\cup \widetilde{\hat{F}}, \Phi_{14,3,+}(G)=\widetilde{I}\cup\hat{F}\cup \widetilde{\hat{F}} \cup H \cup \widetilde{\hat{E}} \cup\widetilde{H} \cup G\cup\widetilde{G}\cup\widetilde{F}  $$
$$\Phi_{14,3,+}(H)=\widetilde{D} \cup E  \cup\widetilde{F} \cup G \cup\widetilde{H} \cup  \widetilde{G} \cup F \cup\widetilde{E} \cup \widetilde{L}, 
\Phi_{14,3,+}(I)=\widetilde{C} \cup D  \cup\widetilde{E} \cup F \cup\widetilde{G} \cup  \widetilde{L} \cup \hat{J} \cup\widetilde{K} ,$$
$$ \Phi_{14,3,+}(\hat{I})=\widetilde{\hat{I}}\cup L \cup \widetilde{\hat{J}} \cup K \cup \widetilde{F} \cup E \cup\widetilde{D} \cup\widetilde{E},  
\Phi_{14,3,+}(J)=\widetilde{\hat{E}} \cup \hat{I} \cup \widetilde{L} \cup \hat{J}  \cup  \widetilde{K}, \Phi_{14,3,+}(\hat{J})=\widetilde{\hat{N}} \cup \hat{K} \cup \widetilde{\hat{I}} \cup L  \cup  \widetilde{F}, $$
$$\Phi_{14,3,+}(K)=\widetilde{M} \cup  \hat{D}  \cup\widetilde{I} \cup J \cup\widetilde{J}, 
\Phi_{14,3,+}(\hat{K})=\widetilde{\hat{B}} \cup N  \cup \widetilde{\hat{D}} \cup I \cup \widetilde{\hat{F}}, \Phi_{14,3,+}(L)=\widetilde{\hat{D}} \cup  I  \cup\widetilde{J} \cup J \cup \widetilde{\hat{F}} \cup \hat{F} \cup\widetilde{I} \cup\widetilde{H},$$
$$\Phi_{14,3,+}(M)=\widetilde{B} \cup C  \cup \widetilde{D} \cup E \cup \widetilde{F} \cup  \widetilde{E} \cup D \cup \widetilde{C} \cup \widetilde{K}, 
\Phi_{14,3,+}(N)=\widetilde{A} \cup B  \cup \widetilde{C} \cup D \cup \widetilde{E},$$
$$\Phi_{14,3,+}(\hat{N})=\widetilde{C} \cup C  \cup \widetilde{D} \cup K \cup \widetilde{\hat{J}}, \Phi_{14,3,+}(O)= A \cup  \widetilde{B} \cup C \cup \widetilde{D}  \cup \widetilde{A}  \cup B \cup  \widetilde{C}, 
\Phi_{14,3,+}(P)= A  \cup \widetilde{B} $$

  \bigskip \par
We have only given the prototile content in the substitutions rules. A more precise description can be given in terms of formal languages as in \cite{esc04, esc11} and references within. The substitutions $\Phi_{14,3,+}(T)$ for $T= F, \hat{F}, G$ are represented in Fig.5. 
 \begin{defn}

By $\bold{S}( \mathcal{F}_{d};\Phi_{d,p,\epsilon})$ ($\epsilon=\pm$)  we denote the set of all tilings $\mathcal{P}$ of the entire plane, where every patch of $\mathcal{P}$ is congruent to some patch in some $\Phi_{d,p,\epsilon}^{n}(T)$, with $T \in \mathcal{F}_{d}$.

 \end{defn}
 As a consequence of the previous results we have
 \begin{thm}
Assume $d \in {\Bbb{N}}, d>4, p=2,3...\lfloor{\frac{d}{2}}\rfloor, \epsilon=\pm$. The set $\bold{S}( \mathcal{F}_{d};\Phi_{d,p,\epsilon})$ is non empty. It consists of  tilings of the plane by tiles being congruent to members of  $\mathcal{F}_{d}$. 
\end{thm}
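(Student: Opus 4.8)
The plan is to read Theorem 3.7 as two assertions — that $\mathbf{S}(\mathcal{F}_d;\Phi_{d,p,\epsilon})$ is non-empty, and that every tiling it contains uses only tiles congruent to members of $\mathcal{F}_d$ — and to settle the second essentially for free before constructing a tiling to settle the first. First I would record that $\Phi_{d,p,\epsilon}$ is a well-defined face-to-face substitution on the finite prototile set $\mathcal{F}_d$. By Lemma 3.4 two congruent prototiles carrying the same interior decoration are dissected in exactly the same way, so the inflated tile $\iota_{d,p}\cdot T$ has a unique dissection; the congruences (3.6)--(3.11) exhibit this dissection explicitly as a sub-pattern of $\mathcal{G}^{(\kappa)}_{\Delta,d}$, whose cells are by Lemma 2.5 congruent to members of $\mathcal{F}_d$, while Lemma 3.3 identifies the induced edge subdivisions. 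The matching relation $P(\phi^{n}(W^{i}))=Mir(P(\phi^{n}(W^{-i})))$, which follows from Eq. (3.12), guarantees that on any edge shared by two adjacent prototiles the two induced subdivisions coincide; hence each iterate $\Phi^{n}_{d,p,\epsilon}(T)$ is a genuine face-to-face patch built from copies of prototiles of $\mathcal{F}_d$. An immediate induction then yields the second assertion: since by Definition 3.6 every patch of a tiling in $\mathbf{S}$ is congruent to a patch occurring in some $\Phi^{n}_{d,p,\epsilon}(T)$, all of its tiles are congruent to members of $\mathcal{F}_d$.

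For non-emptiness I would run the standard fixed-seed construction for a primitive inflation. The linear expansion factor $\iota_{d,p}=s_p/s_1>1$ for $p\ge 2$, so the substitution is a genuine inflation realised by a similarity $\Lambda=\iota_{d,p}R$ ($R$ the isometry recorded by the index shift $\nu\mapsto\nu+n$ in (3.6)--(3.11)). I would first verify \emph{primitivity}: that for some $N$ and every pair $T,T'\in\mathcal{F}_d$ a congruent copy of $T'$ occurs in the interior of $\Phi^{N}(T)$. This can be checked from the explicit dissections — for $p\ge 2$ the edge-subdivision sequences of Lemma 3.3 are non-trivial, so each $\Phi(T)$ strictly refines $T$ into several prototiles, and the congruences (3.6)--(3.11) together with the prototile enumeration of Lemma 3.2 (and, concretely, the displayed $\Phi_{14,3,+}$ rules) show that iteration reaches every elementary-triangle shape, hence every decorated prototile. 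Granting primitivity, I would pick any $T\in\mathcal{F}_d$ and $k\ge1$ with a congruent copy $T^{\ast}$ of $T$ strictly in the interior of $\Phi^{k}(T)$; the composite of $\Lambda^{-k}$ with the isometry sending $T$ onto $T^{\ast}$ is then a contraction, so it has a unique fixed point $x_0$, which lies in the interior of $T$ precisely because $T^{\ast}$ is interior.

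Finally I would pass to the limit. Anchored at $x_0$, the copy $T^{\ast}$ reproduces $T$ at $x_0$ under $\Phi^{k}$, so the patches $\Phi^{jk}(T)$ agree on their overlaps and form a nested increasing sequence tiling the regions $\Lambda^{jk}T$. Since $\iota^{jk}_{d,p}\to\infty$ and $x_0$ is interior to $T$, these regions have inradius tending to infinity and exhaust the plane, so their union is a tiling $\mathcal{P}$ of the whole plane, every patch of which occurs in some $\Phi^{n}(T)$; thus $\mathcal{P}\in\mathbf{S}(\mathcal{F}_d;\Phi_{d,p,\epsilon})$ and the set is non-empty. I expect the verification of primitivity to be the main obstacle: it must furnish a copy of $T$ that is \emph{strictly interior} to a high enough supertile and carries a full neighbourhood, for this is exactly what forces the nested patches to surround $x_0$ on all sides and exhaust the plane rather than merely a half-plane or a sector. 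The structural fact I would use to certify that interior occurrences carry full coronas is the ``six elementary triangles about each interior vertex'' property of Lemma 2.6.
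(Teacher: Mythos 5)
Your proposal is correct, but it takes a genuinely more explicit route than the paper, which contains no written proof of Theorem 3.7 at all: the theorem is introduced with ``As a consequence of the previous results we have,'' the intended argument being that Lemma 3.2 (enumeration of the decorated prototiles), Lemma 3.3 (edge subdivisions under $\iota_{d,p}$), Lemma 3.4 (congruent triangles with equal ID are dissected identically, via the congruences (3.6)--(3.11)) and the edge relation (3.12) make $\Phi_{d,p,\epsilon}$ a well-defined face-to-face substitution on the finite set $\mathcal{F}_{d}$ --- exactly the content of your first paragraph --- while the passage from the patches $\Phi_{d,p,\epsilon}^{n}(T)$ to a tiling of the entire plane is left implicit, as is customary in this literature (cf.\ Nischke--Danzer). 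Your second and third paragraphs supply precisely that missing passage via the standard fixed-seed construction, and your insistence on \emph{strict} interiority of the recurrent copy (so that the nested supertiles exhaust the plane rather than a half-plane or sector) addresses a point the paper never raises; this is what your explicitness buys. Two caveats on the added material. First, full primitivity is both unverified (neither you nor the paper checks it for general $d$, $p$) and unnecessary: for non-emptiness it suffices that \emph{some single} decorated prototile recur strictly inside an iterate of itself, which follows by pigeonhole --- for $N$ large every $\Phi_{d,p,\epsilon}^{N}(T)$ contains a tile whose closure avoids the supertile boundary (the tile count grows like $\iota_{d,p}^{2N}$ while boundary tiles grow like $\iota_{d,p}^{N}$), so the finite directed graph recording interior occurrences has out-degree at least one everywhere and hence a cycle. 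Second, your recurrent copy $T^{\ast}$ must be the \emph{same} element of $\mathcal{F}_{d}$ as $T$, with identical ID and chirality, not merely a congruent triangle, since by Lemma 3.2 congruent triangles with different ID are distinct prototiles carrying different substitution rules; with that reading your contraction/fixed-point argument is sound, and the corona remark via Lemma 2.6, while true of the patterns $\mathcal{G}^{(\kappa)}_{\Delta,d}$ in which the dissections (3.6)--(3.11) live, is not actually needed once the inradius of the nested patches about $x_{0}$ tends to infinity.
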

   \bigskip\par

 A vertex configuration is a patch where all the tiles have a vertex in common. In previous papers we have studied several particular cases of the construction presented here (see a more exhaustive list of references in \cite{esc17}). In Fig.6 we show vertex configurations with rotational symmetry appearing in the  studied examples of tilings with inflation factors $\iota_{5,2}$, $\iota_{8,3}$, $\iota_{9,4}$, $\iota_{10,3}$, $\iota_{12,2}^2$, which are PV numbers.
    \par
    {\it Example}. For $d=14$ we represent in Figs.7, 8  some patches of tilings with inflation factors $\iota_{14,3}$ and $\iota_{14,5}$. A fragment of $\Phi_{14,3,+}^{3}(G)$ is shown in Fig.7 (left) and one of $\Phi_{14,3,-}^{5}(G)$ in the same figure (right). In Fig.8 we can see fragments corresponding to $\Phi_{14,5,+}^{3}(G)$ and $\Phi_{14,5,-}^{4}(G)$ ($\iota_{14,5}$ is also a PV number \cite{esc08}).

    \par
  \begin{figure}[h]
 \includegraphics[width=12pc]{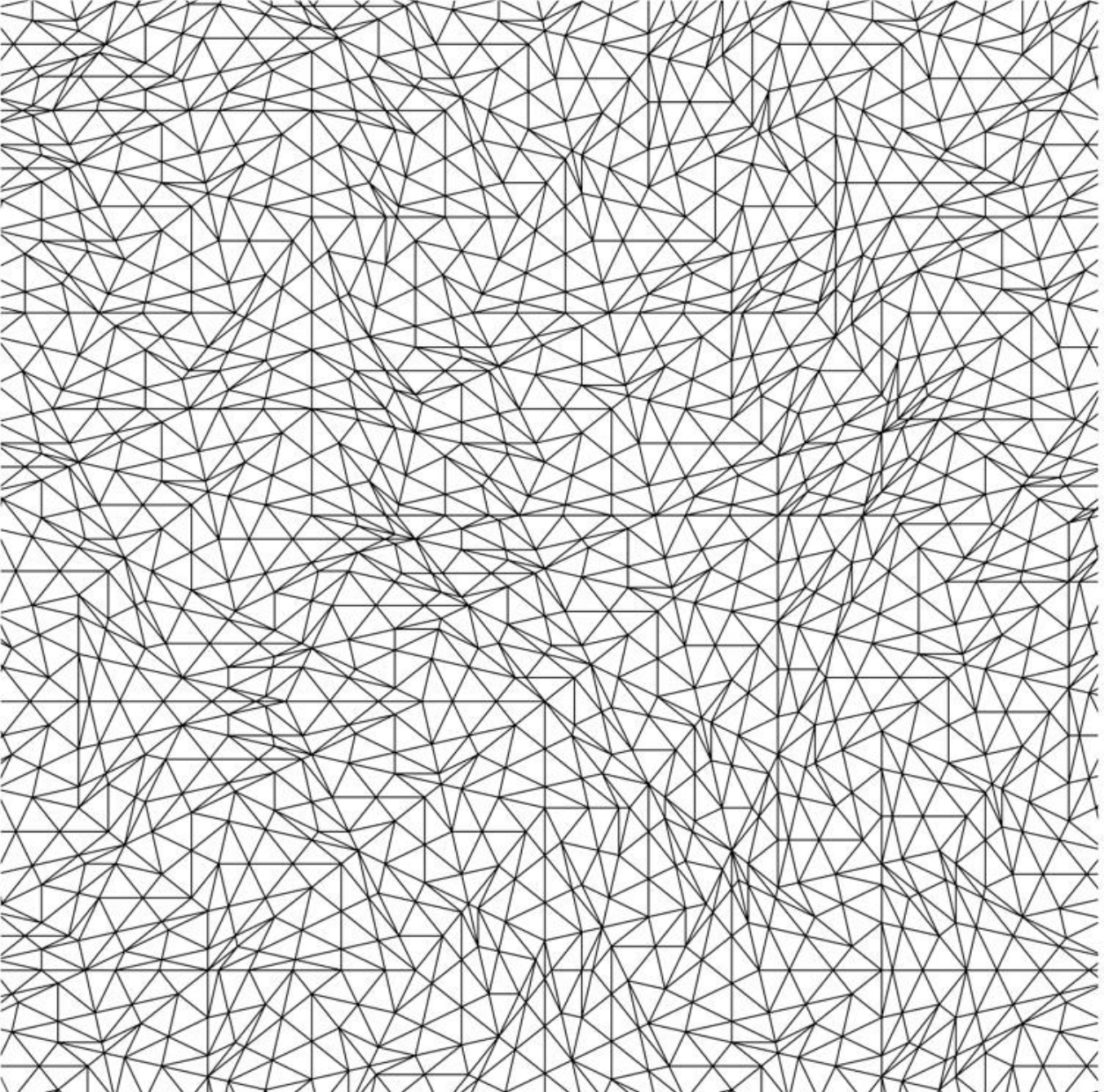}
  \includegraphics[width=12pc]{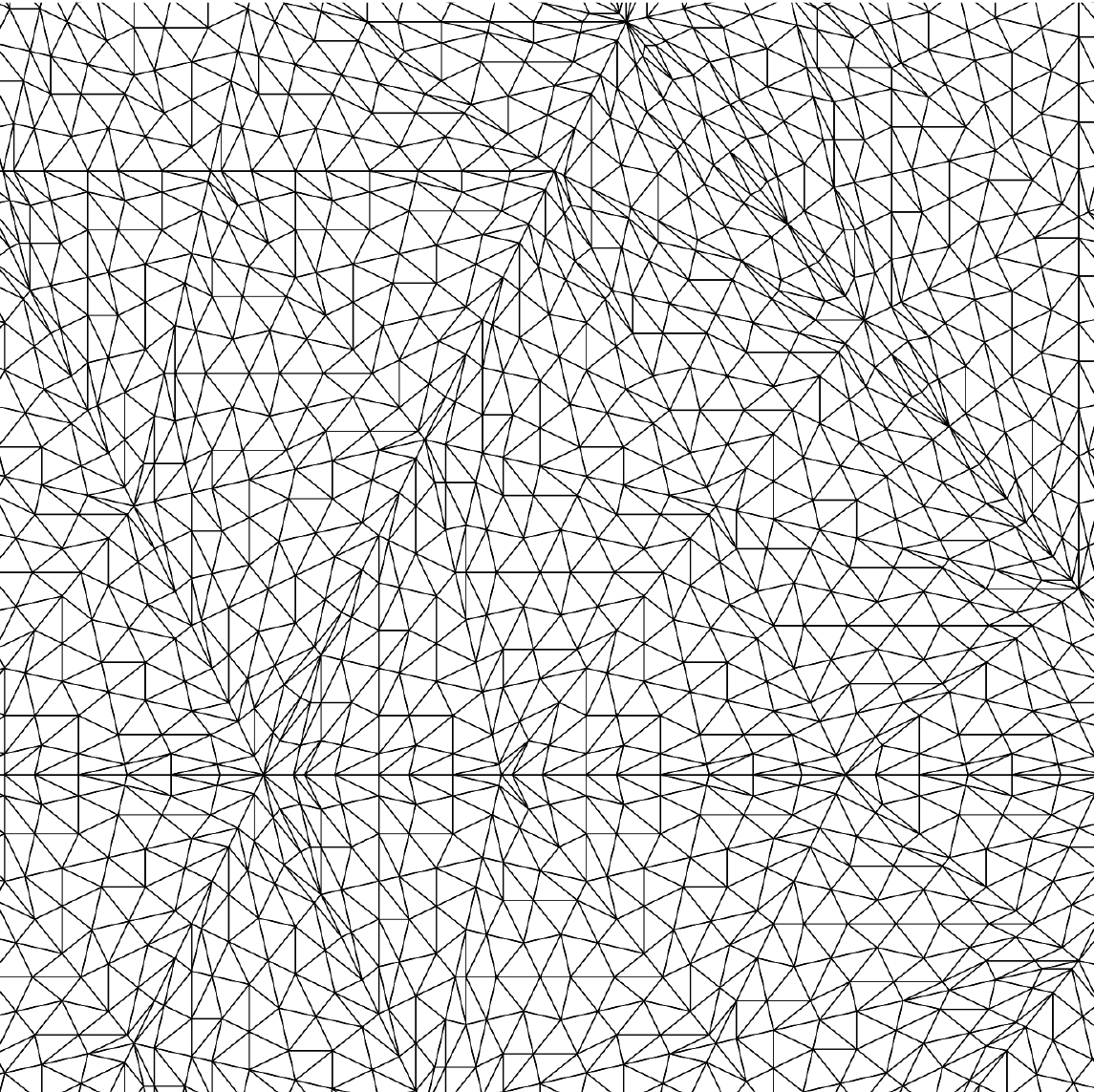}
\caption{\label{label}    A fragment of $\Phi_{14,3,+}^{3}(G)$ (left) and $\Phi_{14,3,-}^{5}(G)$ (right).}
\end{figure}
\par
    \par
  \begin{figure}[h]
 \includegraphics[width=12pc]{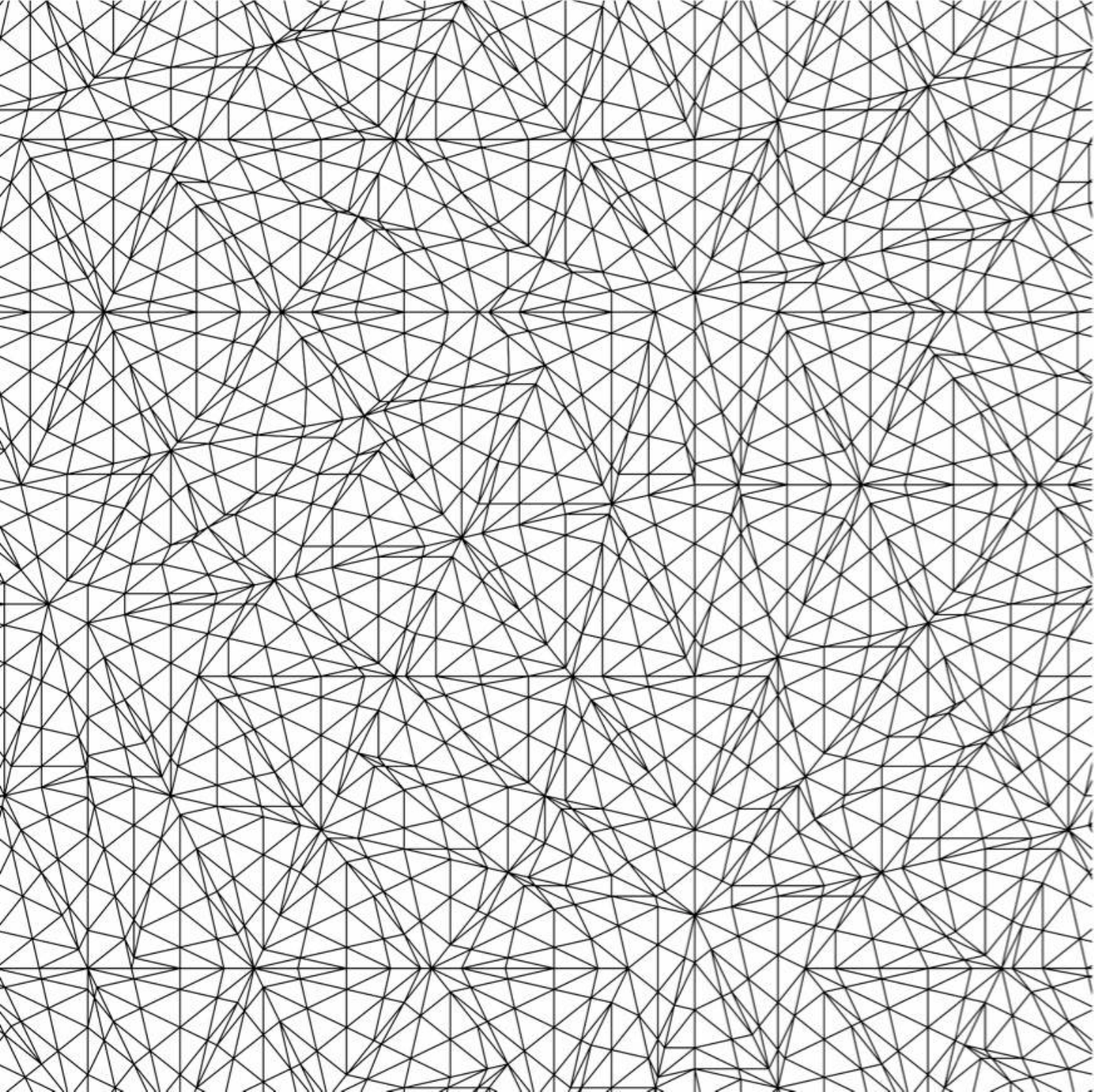}
  \includegraphics[width=12pc]{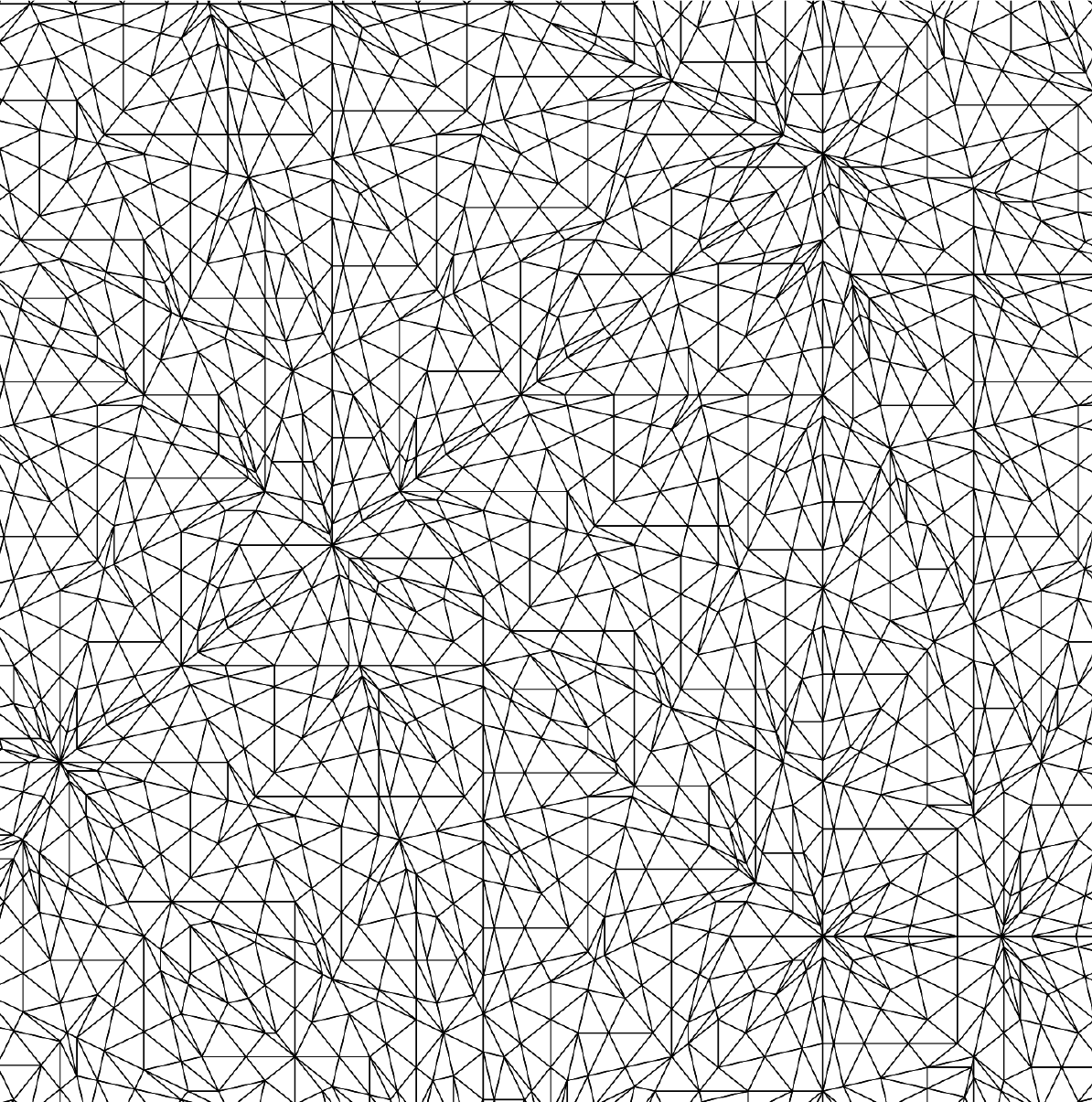}
\caption{\label{label}   Fragments of $\Phi_{14,5,+}^{3}(G)$ (left) and $\Phi_{14,5,-}^{4}(G)$ (right). }
\end{figure}
\par
    \par
  \begin{figure}[h]
 \includegraphics[width=12pc]{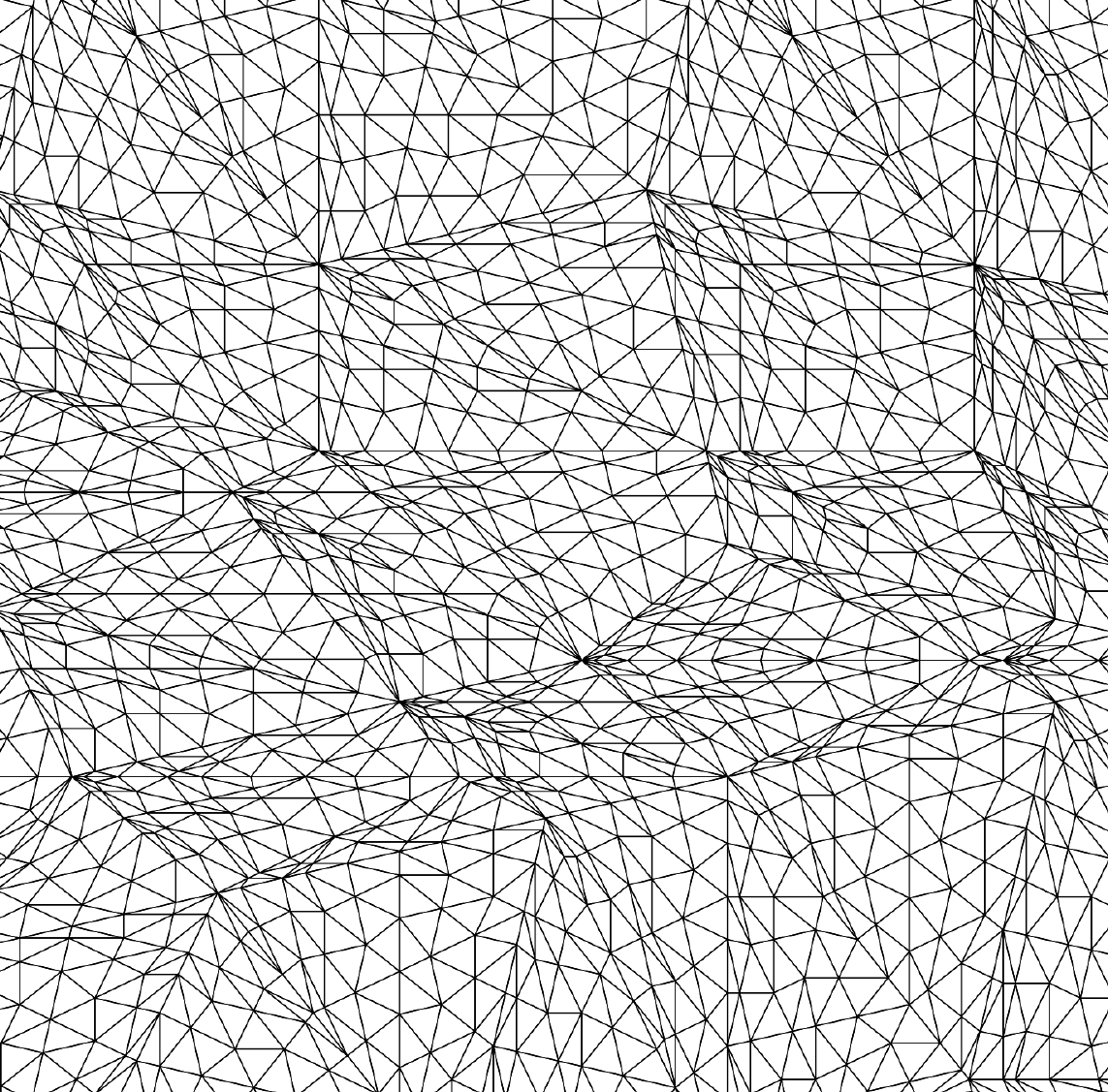}
  \includegraphics[width=12pc]{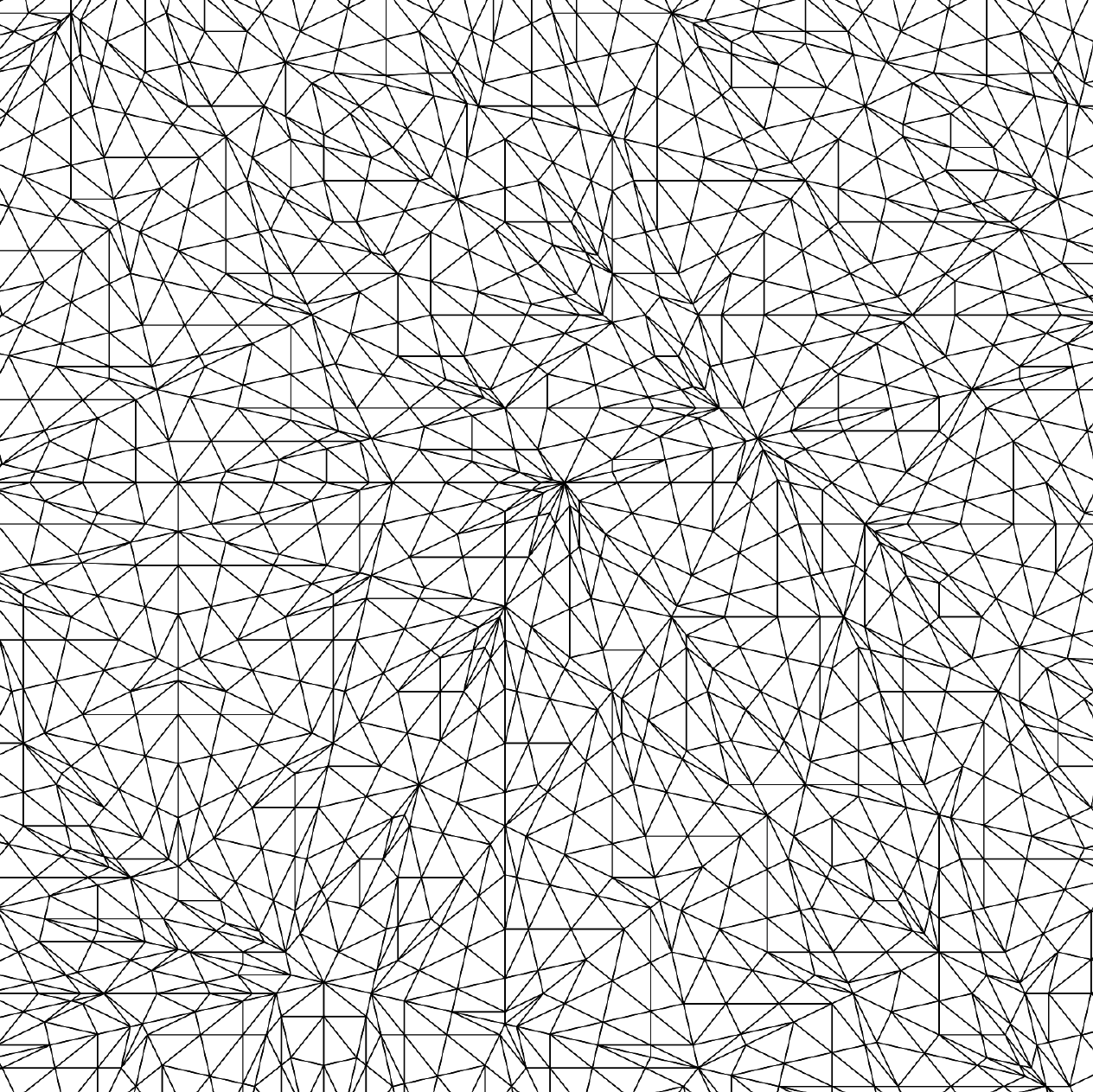}
\caption{\label{label} Parts of $(\Phi_{14,3,+}\Phi_{14,3,-})^{3}(G)$ (left) and $(\Phi_{14,5,+}\Phi_{14,5,-})^{2}(G)$ (right). }
\end{figure}
\par
    \par
  \begin{figure}[h]
 \includegraphics[width=24pc]{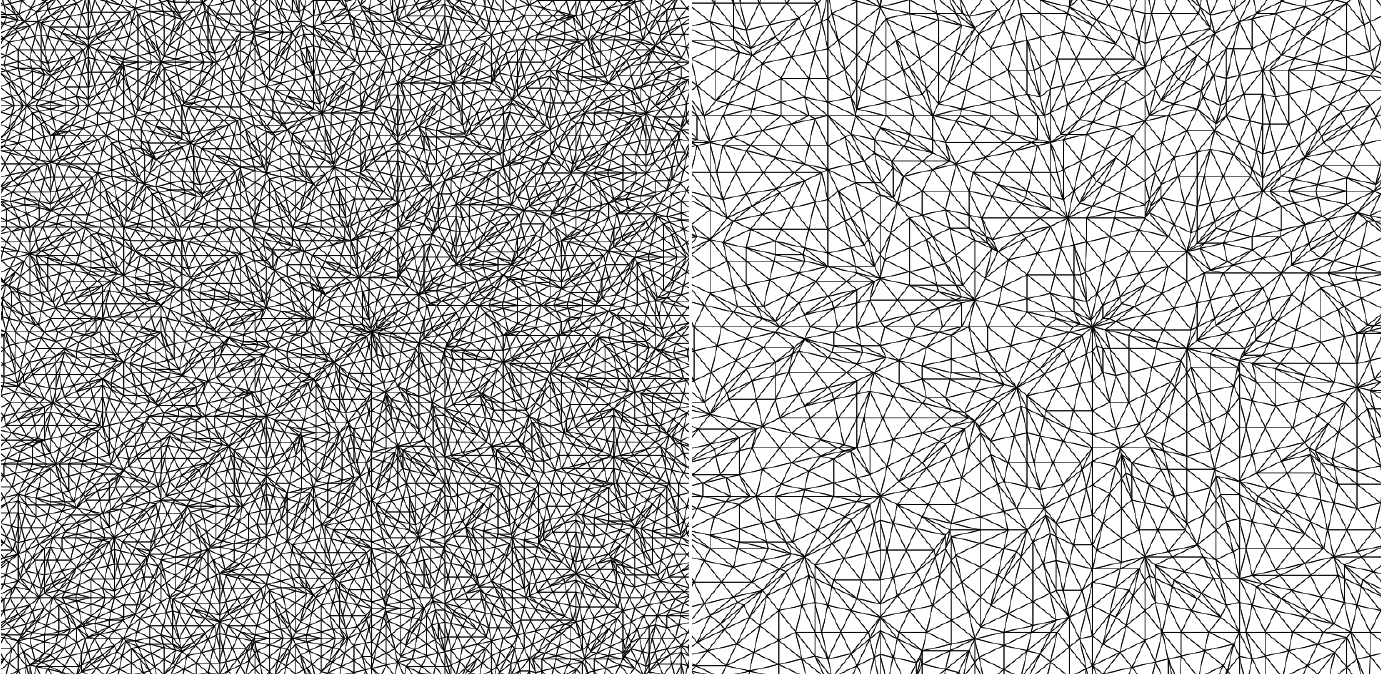}
\caption{\label{label} $(\Phi_{14,3,+}\Phi_{14,5,+})^{2}(G)$ in two different scales.}
\end{figure}
\par
    \par
  \begin{figure}[h]
 \includegraphics[width=24pc]{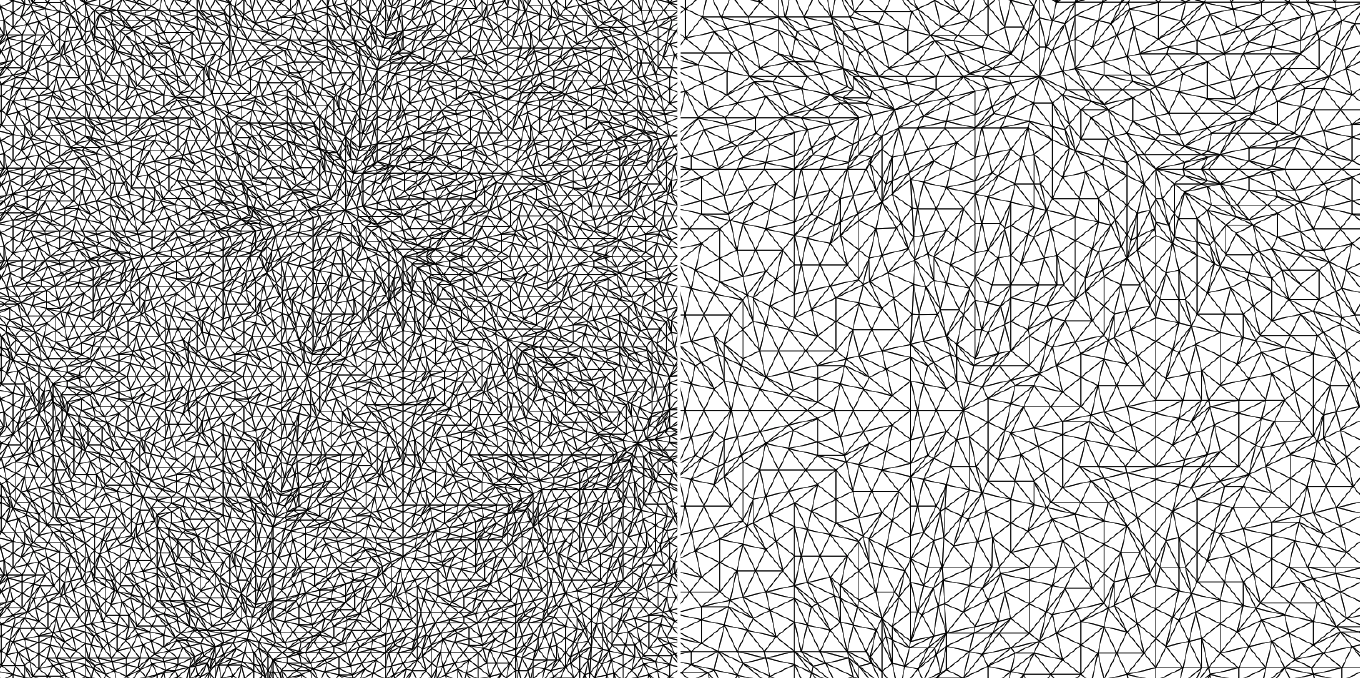}
\caption{\label{label} $(\Phi_{14,5,-}\Phi_{14,3,+})^{2}(G)$ in two scales.}
\end{figure}
\par
\par
 \section{Random tilings for $d=2q$}
 Multiple different substitution rules defined on the same set of prototiles have been studied in \cite{god89, nis96, esc08, gah15}. The method of composition of inflation rules, also called multisubstitutions, consists in applying the same inflation rule to each tile in a given inflation step. 
 In Fig.9 we can see parts of $(\Phi_{14,3,+}\Phi_{14,3,-})^{3}(G)$ and $(\Phi_{14,5,+}\Phi_{14,5,-})^{2}(G)$. Also composition of substitution rules corresponding to different inflation factors are possible as shown in Fig.10  for $(\Phi_{14,3,+}\Phi_{14,5,+})^{2}(G)$ and Fig. 11 for $(\Phi_{14,5,-}\Phi_{14,3,+})^{2}(G)$, where two different magnifications are used for the same fragment ($\iota_{14,3}\iota_{14,5}$ is a PV number \cite{esc08}). In this section we treat the problem of the generation of random tilings by means of prototile rearrangements.
 \par
 \begin{defn}
If a quadrilateral $\square$ with diagonals $S_{i}, S_{j}$ is formed by the union of two elementary triangles sharing $S_{i}$, the tile rearrangement consists in replacing $S_{i} \longrightarrow S_{j}$ and $\square$ is now the union of two different prototiles.
 \end{defn} 
  \par
  The set of prototiles without ID is denoted by  $\bar{\mathcal{F}}_{d}$.
  \par
  \begin{defn}
The random tiling ensemble denoted by $\bold{R_{r}}(\bar{\mathcal{F}}_{d};\Phi_{d,p,\epsilon})$ is the set of tilings obtained by tile rearrangements in any member of $\bold{S}( \mathcal{F}_{d};\Phi_{d,p,\epsilon})$.
  \end{defn} 
  \par
 We construct, in the interior of $\mathcal{G}_{\Delta,2q}^{(\kappa)}$, the regular polygon with vertices $\{p_{c,q+c}\}_{c=0,1, 2,..., q-1}$ \cite{esc11}. Its edges have length $4s_{1}s_{q-1}$ and some of them may lie on a $G^{(\kappa)}_{\mu,2q}$ (see Fig.1 for $q=7$). The edge with vertices $p_{c,q+c}, p_{c+1,q+c+1}$ lie in some $G^{(\kappa)}_{\mu,2q}$ for $\kappa=0$ when $3 c \equiv q-1$ (mod $2q$) or $3 c \equiv q-2$ (mod $2q$), namely $(c, \mu) \in \{ (l,l+1), (2l,5l+1)\}$ if $q=3l+1$  and  $(c, \mu) \in \{ (l,l), (2l+1,5l+4)\}$ if $q=3l+2$. There is no solution for $c$ when $q=3l$ if $\kappa=0$. For  $\kappa =-2$, $c$ must be a solution of $3 c \equiv q-2$ (mod $2q$) or $3 c \equiv q-3$ (mod $2q$), therefore $(c, \mu)  \in \{ (l-1,l), (3l-1,3l), (5l-1,5l)\}$ if $q=3l$. 
 
 The sides not lying on a $G^{(\kappa)}_{\mu,2q}$ define tile rearrangements in the substitution rules \cite{esc08} associated to edge flips $S_{q} \longrightarrow S_{q-1}$. The set of substitution rules $\Phi_{d, R} := \{\Phi_{d, 1}, \Phi_{d, 2}, ...\Phi_{d, k}\}$ is obtained by considering all possible combinations of tile rearrangements in the inflated prototiles.
 \par
 Random substitution tilings are characterised by the fact that one can apply at each inflation step different substitution rules to each tile \cite{esc04, gah15}. 
 \par
 \begin{defn}
 Given a set of substitution rules $\Phi_{d, R} := \{\Phi_{d, 1}, \Phi_{d, 2}, ...\Phi_{d, k}\}$, we define the random substitution tiling ensemble corresponding to a prototile set $ \mathcal{F}$, denoted by $\bold{R_{s}}( \mathcal{F} ;\Phi_{d, R})$, as the set of tilings constructed in such a way that one can apply to each tile and at each inflation step different substitution rules belonging to $\Phi_{d, R}$.
 \end{defn} 
\par
  \begin{thm}
Assume $d =2q \in {\Bbb{N}}, q>2 $. The sets $\bold{R_{r}}( \bar{\mathcal{F}}_{d};\Phi_{d,p,\epsilon})$ and $\bold{R_{s}}( \bar{\mathcal{F}}_{d};\Phi_{d, R})$, with prototiles congruent to members of $ \bar{\mathcal{F}}_{d}$, are non-empty. The inflation factor associated with $\bold{R_{s}}( \bar{\mathcal{F}}_{d};\Phi_{d, R})$ is $\iota_{d,q}$. 
\end{thm}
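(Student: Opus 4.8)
The plan is to reduce the two non-emptiness claims to the deterministic existence result of Theorem 3.7 and then to read off the inflation factor from the polygon construction preceding Definition 4.3, exploiting that an edge flip is a local, scale-preserving move. First I would treat $\bold{R_{r}}(\bar{\mathcal{F}}_{d};\Phi_{d,p,\epsilon})$. By Theorem 3.7 the set $\bold{S}(\mathcal{F}_{d};\Phi_{d,p,\epsilon})$ is non-empty; taking a tiling $\mathcal{P}$ in it and forgetting the interior decoration yields a tiling of the plane all of whose tiles are congruent to members of $\bar{\mathcal{F}}_{d}$, which by Definition 4.2 already lies in $\bold{R_{r}}(\bar{\mathcal{F}}_{d};\Phi_{d,p,\epsilon})$. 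That genuine rearrangements are available follows from the regular $q$-gon with vertices $\{p_{c,q+c}\}_{c=0,\dots,q-1}$: its sides not lying on any $G^{(\kappa)}_{\mu,2q}$ are diagonals of quadrilaterals formed by two elementary triangles, each flippable between $S_{q}$ and $S_{q-1}$ as in Definition 4.1.

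For $\bold{R_{s}}(\bar{\mathcal{F}}_{d};\Phi_{d,R})$ I would note that $\Phi_{d,R}$ contains the rule in which no rearrangement is performed, i.e. the image of $\Phi_{d,q,\epsilon}$ after the interior decoration is dropped. Since $q=\lfloor d/2\rfloor$, Theorem 3.7 applies with $p=q$, so iterating this single rule at every step---an admissible choice in the scheme of Definition 4.3---produces a tiling of the plane with tiles in $\bar{\mathcal{F}}_{d}$, and $\bold{R_{s}}$ is non-empty. The substantive point is that every $\Phi_{d,i}\in\Phi_{d,R}$ obtained by an arbitrary combination of flips is again a consistent substitution and that distinct rules may be mixed among the tiles. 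Because one works in $\bar{\mathcal{F}}_{d}$, a flip $S_{q}\leftrightarrow S_{q-1}$ replaces the two triangles of a quadrilateral by two other members of $\bar{\mathcal{F}}_{d}$ while leaving the quadrilateral's boundary fixed, so the dissection of the inflated prototile stays face-to-face; I would verify this with the subdivision data of Lemma 3.3 and the matching criterion $P(\phi^{n}(W^{i}))=Mir(P(\phi^{n}(W^{-i})))$, showing the boundary words of an inflated prototile are unaffected by any interior flip.

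For the inflation factor I would use that each admissible flip is confined to a quadrilateral with fixed boundary and hence preserves both the covered area and the linear scale; consequently all rules in $\Phi_{d,R}$ share the factor of the underlying deterministic rule. To identify that factor, observe that the flippable quadrilaterals, whose diagonals are the maximal segment $S_{q}=4s_{1}s_{q}$ (recall $s_{q}=1$ for $d=2q$) and the polygon edge $S_{q-1}$, are inscribed in the regular $q$-gon of the construction, which appears inside a prototile only once it has been inflated to a triangle with $\sigma_{d}-\kappa=q$; by Lemma 2.5 such a triangle is congruent to $\iota_{d,q}\cdot t_{d}$. Hence the underlying rule is $\Phi_{d,q,\epsilon}$ and the inflation factor of $\bold{R_{s}}(\bar{\mathcal{F}}_{d};\Phi_{d,R})$ is $\iota_{d,q}$.

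The step I expect to be the main obstacle is the consistency check of the second paragraph: one must confirm, uniformly in $d$ and for every combination of interior flips, that the resulting dissection still tiles $\iota_{d,q}\cdot T$ face-to-face and that mixing different rules at adjacent tiles never creates an overlap or a gap along a shared edge. This reduces to the invariance of the boundary subdivision under flips, but carrying it out for all combinations simultaneously is where the real work lies.
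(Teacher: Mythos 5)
There is a genuine gap, in fact two. First, your non-emptiness argument for $\bold{R_{r}}$ asserts that an edge flip ``replaces the two triangles of a quadrilateral by two other members of $\bar{\mathcal{F}}_{d}$'', but this is precisely the nontrivial content of the paper's proof, which you leave unverified. The paper devotes its entire case analysis ($q=3l+1$, $q=3l+2$, and $q=3l$ with $\kappa=0$ and $\kappa=\mp2$) to exhibiting the flipped triangles $\Delta'$ and proving by explicit congruences that each is congruent to an elementary triangle. This is not automatic: in the case $d=6l$, $\kappa=0$, the flipped triangles are \emph{not} congruent to any elementary triangle of $\mathcal{G}_{\Delta,6l}^{(0)}$ (the congruence $3\mu \equiv -3a+1 \pmod{6l}$ has no solution), and one must instead identify them with triangles of $\sigma_{6l}=-3$ living in $\mathcal{G}_{\Delta,6l}^{(-2)}$. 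So the claim that the flip stays inside the prototile set genuinely depends on $d$, $\kappa$, and the congruence arithmetic, and your proposal skips it.

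Second, your identification of the inflation factor is not the paper's argument and does not work as stated. You locate the flippable quadrilaterals inside the regular $q$-gon and claim the $q$-gon ``appears inside a prototile only once it has been inflated'' by $\iota_{d,q}$; but the flip quadrilaterals occur in tilings at \emph{all} inflation factors --- the paper explicitly notes that $\bold{R_{r}}(\bar{\mathcal{F}}_{d};\Phi_{d,p,\epsilon})$ yields random tilings for every $p$ --- so the location of the $q$-gon cannot single out $p=q$. What does single it out, and what resolves the ``main obstacle'' you correctly flag but leave open (face-to-face consistency when different rules are mixed on undecorated tiles), is the palindromicity of the edge subdivisions: for $p=q$ and $d=2q$, the sequence $(S_{q-j+1},S_{q-j+3},\dots,S_{q+j-1})$ of Lemma 3.4 satisfies $S_{q+j-1}=S_{d-(q+j-1)}=S_{q-j+1}$, $S_{q+j-3}=S_{q-j+3}$, etc., so every inflated edge is subdivided symmetrically. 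Hence the interior decoration (whose whole purpose, via Eq.\ (3.12), is to control edge matching) becomes unnecessary, and arbitrary mixing of the rules in $\Phi_{d,R}$ on $\bar{\mathcal{F}}_{d}$ remains face-to-face regardless of which rule a neighbouring tile receives. For $p\neq q$ the sequences are not palindromic, face-to-face matching of undecorated tiles fails under a single substitution, and the paper notes one would need combinations of several substitutions. Your area/scale-preservation remark only shows all rules in $\Phi_{d,R}$ share \emph{some} common factor; without the palindromic argument you can neither justify consistency nor explain why $\iota_{d,q}$, and no other factor, is the one attached to $\bold{R_{s}}$.
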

    \par
\begin{proof}
 \par
1) $\kappa=0, q=3l+1$: each pair of consecutive polygon vertices $p_{l+a,4l+a+1}, p_{l+a+1,4l+a+2} $ define a quadrilateral $\square$ in $\mathcal{G}_{\Delta,2q}^{(0)}$ which is the union of two adjacent prototiles:
 $$\square^{(0)}_{6l+2}(l+a,4l+a+1;l+a+1,4l+a+2)= \Delta^{(0)}_{6l+2}(l+a, 4l+a+1,l-2a)\cup \Delta^{(0)}_{6l+2}(l+a+1, 4l+a+2,l-2a)$$
We introduce edge flips consisting in replacing the common edge of the prototiles by the polygon edge when $a \in \{1,2,..., l-1, l+1,..., 3l\}$ which produces a prototile rearrangement. We denote by $\Delta' $ a triangle not belonging to $\mathcal{G}_{\Delta,2q}^{(\kappa)}$ but congruent to some prototile and by $\mu'$ the index of a segment parallel to $G^{(\kappa)}_{\mu,2q}$ not belonging to $\mathcal{G}_{\Delta,2q}^{(\kappa)}$. After the edge flip we have 
$$\square^{(0)}_{6l+2}(l+a,4l+a+1;l+a+1,4l+a+2)= \Delta'(l+a, 4l+a+2,(l+4a+1) ')\cup \Delta'(4l+a+1, l+a+1,(l+4a+1) ')$$ with $\Delta'(l+a, 4l+a+2,(l+4a+1) ') \cong \Delta^{(0)}_{6l+2}(3l-a, 3l+2a+1,6l-a+2)$ and  $\Delta'(4l+a+1, l+a+1,(l+4a+1) ') \cong \Delta^{(0)}_{6l+2}(3l-a+1, 3l+2a+1,6l-a+1)$. 
 \bigskip\par
2) $\kappa=0, q=3l+2$: the vertices $p_{l+a,4l+a+2}, p_{l+a+1,4l+a+3} $ define a quadrilateral :
 $$\square^{(0)}_{6l+4}(l+a,4l+a+2;l+a+1,4l+a+3)= \Delta^{(0)}_{6l+4}(l+a, 4l+a+2,l-2a+1)\cup \Delta^{(0)}_{6l+4}(l+a+1, 4l+a+3,l-2a+1)$$
 and the edge flip for $a \in \{1,2,..., l, l+2,..., 3l+1\}$ gives
 $$\square^{(0)}_{6l+4}(l+a,4l+a+2;l+a+1,4l+a+3)= \Delta'(l+a, 4l+a+3,(l+4a) ')\cup \Delta'(l+a+1,4l+a+2, (l+4a) ')$$ with $\Delta'(l+a, 4l+a+3,(l+4a) ')\cong \Delta^{(0)}_{6l+4}(3l-a+2, -a+1,3l+2a+2)$ and  $ \Delta'(l+a+1, 4l+a+2, (l+4a) ') \cong \Delta^{(0)}_{6l+4}(3l-a+3, -a,3l+2a+2)$
 \bigskip\par
3) $q=3l$.
 \bigskip\par
3.1) $\kappa=0$: the vertices $p_{l+a,4l+a}, p_{l+a+1,4l+a+1} $ define a quadrilateral 
 $$\square^{(0)}_{6l}(l+a,4l+a;l+a+1,4l+a+1)= \Delta^{(0)}_{6l}(l+a, 4l+a,l-2a-1)\cup \Delta^{(0)}_{6l}(l+a+1, 4l+a+1,l-2a-1)$$
An edge flip for $a \in \{1,2,..., 3l-1\}$ gives 
 $$\square^{(0)}_{6l}(l+a,4l+a;l+a+1,4l+a+1)= \Delta'(4l+a, l+a+1,(l+4a+2) ')\cup \Delta'(4l+a,l+a+1, (l+4a+2) ')$$
 In this case we do not have congruent prototiles $\Delta^{(0)}_{6l}(\lambda, \mu,\nu)$  inside $\mathcal{G}_{\Delta,6l}^{(0)}$ with, for instance, $\sigma_{6l}=1$, because $3 \mu \equiv -3a+1$ (mod $6l$) has no solution. They are congruent to elementary triangles with $\sigma_{6l}=-3$ which belong to $\mathcal{G}_{\Delta,6l}^{(-2)}$: $\Delta'(4l+a, l+a+1,(l+4a+2) ')\cong \Delta^{(-2)}_{6l}(3l-a-2, 6l-a-1,3l+2a)$ and $\Delta'(4l+a,l+a+1, (l+4a+2) ')\cong \Delta^{(-2)}_{6l}(2l-a-2, 5l-a-1,5l+2a)$. 
 \bigskip\par
3.2) $\kappa=\pm 2$: in $\mathcal{G}_{\Delta,6l}^{(-2)}$ the vertices $p_{l+a-1,4l+a-1}, p_{l+a,4l+a} $ define a quadrilateral 
 $$\square^{(-2)}_{6l}(l+a-1,4l+a-1;l+a,4l+a)= \Delta^{(-2)}_{6l}(l+a-1, 4l+a-1,l-2a-1)\cup \Delta^{(-2)}_{6l}(l+a, 4l+a,l-2a-1)$$
An edge flip for $a \in \{1,2,..., l-1, l+1,..., 2l-1, 2l+1, ..., 3l-1\}$ gives 
 $$\square^{(-2)}_{6l}(l+a-1,4l+a-1;l+a,4l+a)= \Delta'(l+a-1, 4l+a,(l+4a) ')\cup \Delta'(l+a, 4l+a-1, (l+4a) ')$$
 with $\Delta'(l+a-1, 4l+a,(l+4a) ')\cong \Delta^{(-2)}_{6l}(3l-a-1, 6l-a,3l+2a)$ and $\Delta'(l+a, 4l+a-1, (l+4a) ')\cong \Delta^{(-2)}_{6l}(5l-a, 2l-a-1,5l+2a)$. If we change the signs we get a similar result for the edge flips in $\mathcal{G}_{\Delta,6l}^{(2)}$.
 \par
In  $\bold{R_{r}}( \bar{\mathcal{F}}_{d};\Phi_{d,p,\epsilon})$, where the tilings are obtained by tile rearrangements in a deterministic tiling, we have random tilings associated with all inflation factors. For $\bold{R_{s}}( \bar{\mathcal{F}}_{d};\Phi_{d, R})$, with $d=2q$, if the inflation factor is $\iota_{d,q}$, then in the sequence $$(S_{q-j+1},S_{q-j+3},...,S_{q+j-1}), 1\le j\le q$$ we have $$S_{q+j-1}=S_{d-(q+j-1)}=S_{q-j+1}, S_{q+j-3}=S_{d-(q+j-3)}=S_{q-j+3},...$$ and therefore the sequence is palindromic. Now the ID is not needed to get face to face substitution tilings \cite {esc08}. In order to produce face to face tilings with other inflation factors, combinations of several substitutions must be considered  (see \cite{esc17}, Figs.6,7 for $d=5,9$). 
\end{proof}

   \bigskip\par
  
{\it Example}. The tile rearrangements for $d=14$ are obtained having in mind that $E\cup \widetilde{E} \cong M \cup H$, $I\cup \widetilde{J} \cong L \cup K$, $ \widetilde{\hat{E}} \cup \hat{I} \cong F \cup D$ (Fig.12 (left)) and their mirror reflections, where we consider $M, H, L, K, F, D$ without the ID, namely, as members of $\bar{\mathcal{F}}_{d}$. In order to illustrate how to get $\bold{R_{r}}( \bar{\mathcal{F}}_{14};\Phi_{14,5,+})$  we have shown in Fig.12 (right) the prototiles $E, \widetilde{E} , I, \widetilde{J} , \widetilde{\hat{E}}, \hat{I}$ in $\Phi_{14,5,+}^{2}(G)$ with different grey levels. The edge flips corresponding to the tile rearrangements are represented by a thicker segment, as in Fig.1.
   \bigskip\par
   It is possible to describe the random tilings in terms of formal grammars along the lines of  \cite{esc11}. In that approach the elements of the alphabet are letters representing  the prototiles and the set of production rules $H$ in the language would correspond in this case to the set of substitution rules $\Phi_{d, R}$, having in mind the orientation of the prototiles. For non-deterministic structures one has to introduce a function $\pi: H \mapsto (0,1]$, called the probability distribution, which maps the set of production rules into the set of production probabilities. A question that deserves further study is the role played by the choice of $\pi$ in the analysis of diffraction patterns, phase transitions and other properties of the associated structures  (see also \cite{deg19} for a recent application of weighted context-free grammars in a different domain).
\par
        \begin{figure}[h]
 \includegraphics[width=16pc]{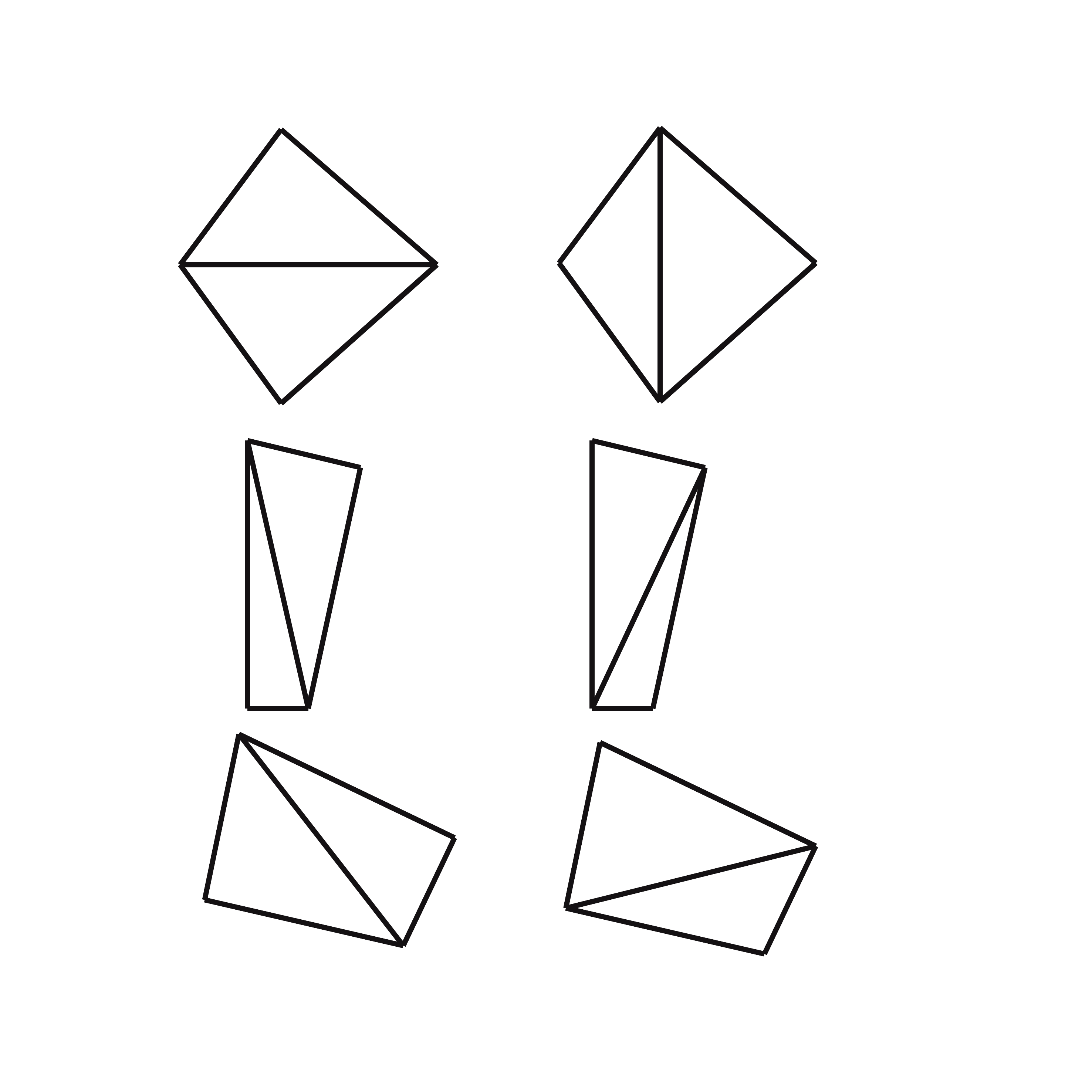}
  \includegraphics[width=20pc]{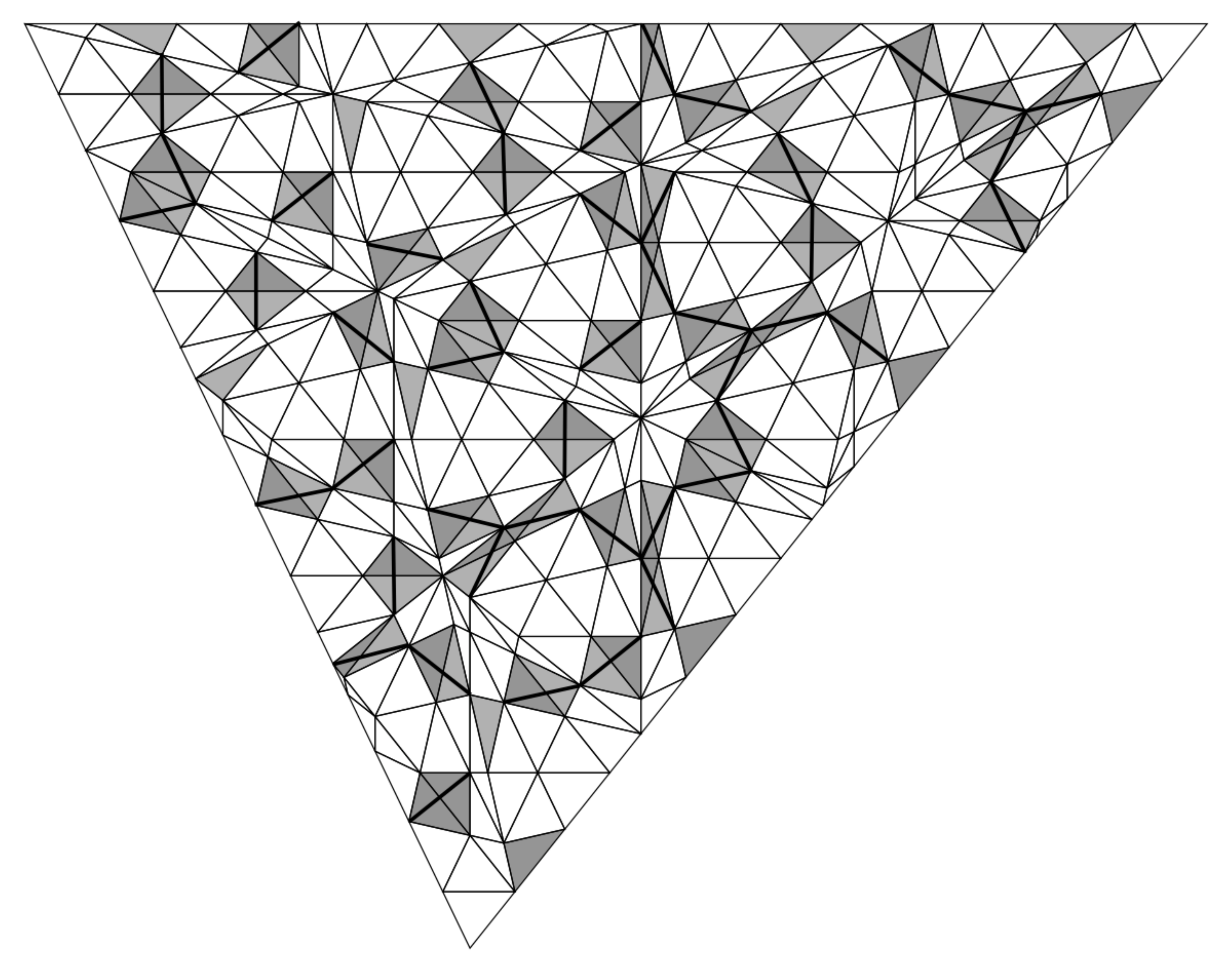}
\caption{\label{label} Tile rearrangements $E\cup \widetilde{E}\longrightarrow M \cup H$, $I\cup \widetilde{J} \longrightarrow L \cup K$, $ \widetilde{\hat{E}} \cup \hat{I} \longrightarrow F \cup D$ for $d=14$ (left). In $\Phi_{14,5,+}^{2}(G)$ the edge flips are marked with a thicker segment (right).}
\end{figure}
\par

\bigskip\par
\newpage

\end{document}